\crefname{hypothesis}{Hypothesis}{Hypotheses}
\title{Hypergraph Cuts with General Splitting Functions
\thanks{This research was supported by NSF Award DMS-1830274, ARO Award W911NF19-1-0057, a Simons Investigator grant, a Vannevar Bush Faculty Fellowship, and ARO MURI.}}
\author{Nate Veldt\thanks{Center for Applied Mathematics, Cornell University
  (\email{nveldt@cornell.edu}).}
\and Austin R. Benson\thanks{Department of Computer Science, Cornell University
  (\email{arb@cs.cornell.edu}).}
\and Jon Kleinberg\thanks{Department of Computer Science, Cornell University (\email{kleinberg@cornell.edu})}}
\DeclarePairedDelimiter\floor{\lfloor}{\rfloor}
\newcommand{\cut}{\textbf{cut}}
\providecommand{\mat}[1]{\boldsymbol{\mathrm{#1}}}%
\renewcommand{\vec}[1]{\boldsymbol{\mathrm{#1}}}
\newcommand{\mcut}{\textsc{Maxcut}}
\newcommand{\hstgen}{\textsc{Gen-Hyper-st-Cut}}
\newcommand{\nesst}{\textsc{NES-HyperCut}}
\newcommand{\dhstgen}{\textsc{Degenerate-HyperCut}}
\newcommand{\nmc}{\textsc{Node-Weighted-MC}}	
\newcommand{\hc}{\textsc{Hyper-st-Cut}}
\newcommand{\nnhc}{Needy-Node \hc{}}
\newcommand{\ghmc}{\textsc{Gen-HyperMC}}
\newcommand{\hmc}{\textsc{HyperMC}}
\newcommand{\hmp}{\textsc{HyperMP}}
\newcommand{\smp}{\textsc{Sub-MP}}
\newcommand{\turh}{\textsc{Rainbow-HyperMC}}
\DeclareMathOperator*{\minimize}{minimize}
\DeclareMathOperator{\argmax}{argmax}
\DeclareMathOperator{\argmin}{argmin}
\providecommand{\mA}{\ensuremath{\mat{A}}}
\providecommand{\vc}{\ensuremath{\vec{c}}}
\providecommand{\vm}{\ensuremath{\vec{m}}}
\providecommand{\vp}{\ensuremath{\vec{p}}}
\providecommand{\vw}{\ensuremath{\vec{w}}}
\providecommand{\vy}{\ensuremath{\vec{y}}}
\providecommand{\vz}{\ensuremath{\vec{z}}}
\newcommand{\hw}{\hat{w}}
\newcommand{\hp}{\hat{p}}
\newcommand{\hm}{\hat{m}}
\newcommand{\hy}{\hat{y}}
\NewDocumentCommand{\rvect}{m}
{
	\seq_set_split:Nnn \l_tmpa_seq { , } { #1 }
	\begin{bmatrix}
		\seq_use:Nn \l_tmpa_seq { & }
	\end{bmatrix}
}
\DeclareMathOperator*{\minimum}{minimum}
\newcommand{\hide}[1]{}
\begin{document}

\maketitle
\begin{abstract}
The minimum $s$-$t$ cut problem in graphs is one of the most fundamental problems in combinatorial optimization, and graph cuts underlie algorithms throughout discrete mathematics, theoretical computer science, operations research, and data science.
While graphs are a standard model for pairwise relationships, hypergraphs provide the flexibility to model multi-way relationships, and are
now a standard model for complex data and systems.
However, when generalizing from graphs to hypergraphs, the notion of a ``cut hyperedge'' is less clear, as a hyperedge's nodes can be \emph{split} in several ways.
Here, we develop a framework for hypergraph cuts by considering the problem of separating two terminal nodes in a hypergraph in a way that minimizes a sum of penalties at split hyperedges.
In our setup, different ways of splitting the same hyperedge have different penalties, and the penalty is encoded by what we call a \emph{splitting function}.

Our framework opens a rich space on the foundations of hypergraph cuts.
We first identify a natural class of \emph{cardinality-based} hyperedge splitting functions that depend only on the number of nodes on each side of the split. In this case, we show that the general hypergraph $s$-$t$ cut problem can be reduced to a tractable graph $s$-$t$ cut problem if and only if the splitting functions are submodular.
We also identify a wide regime of non-submodular splitting functions for which the problem is NP-hard.
We also analyze extensions to multiway cuts with at least three terminal nodes
and identify a natural class of splitting functions for which the problem can be reduced in an approximation-preserving way to the node-weighted multiway cut problem in graphs, again subject to a submodularity property.
Finally, we outline several open questions on general hypergraph cut problems.
\end{abstract}

\begin{keywords}
	Hypergraph cuts, minimum $s$-$t$ cut, submodular functions, multiway cut 
\end{keywords}

\begin{AMS}
	05C50, 
	05C65, 
	68R10, 
	68Q25 
\end{AMS}

\tableofcontents
\section{Introduction}
Graphs have long been a popular model for analyzing systems of interconnected objects~\cite{albert_barabasi,bertozzi2016hd,Easley:2010:NCM:1805895,newman2003sirev}. A standard primitive in graph analysis is the concept of a \emph{cut} edge, i.e., an edge whose two endpoints are separated, in any task that involves arranging the nodes of a graph into different subsets or {clusters}. One of the most fundamental problems in graph theory and combinatorial optimization is the minimum $s$-$t$ cut problem, which seeks a minimum weight set of edges to cut in order to separate two distinguished nodes ($s$ and $t$) from each other in a graph. The dual of finding a minimum $s$-$t$ cut is the well-known maximum $s$-$t$ flow problem, which seeks to route a maximum amount of flow along edges from $s$ to $t$, subject to edge capacity and conservation of flow constraints.
Classical methods for finding minimum cuts and maximum flows~\cite{dinic1970algorithm,edmonds1972theoretical,Ford_fulkerson,Goldberg:1986:NAM:12130.12144} are included in numerous textbooks in mathematics, operations research, and computer science, and are considered standard material for nearly any course on algorithms. There is a long history in algorithm design that continues into modern times, which includes both exact algorithms~\cite{King:1992:FDM:139404.139438,Orlin:2013:MFO:2488608.2488705} and various fast solver approaches for computing approximate solutions~\cite{kelner_soda,peng_approxmaxflow,sherman2013nearlymax}. In data science, maximum flows and minimum cuts are extensively used as subroutines in a variety of machine learning and clustering algorithms~\cite{Andersen:2008:AIG:1347082.1347154,Blum:2001:LLU:645530.757779,boykov2004mincut,heuer_et_al:LIPIcs:2018:8936,LangRao2004,Orecchia:2014:FAL:2634074.2634168,veldt16simple}.


While graphs provide a useful way to model pairwise relationships, many complex systems and datasets are characterized by higher-order relationships that are better modeled by hypergraphs~\cite{BensonE11221,Estrada2006hyper,Lambiotte2019higher,porter2019nonlinearity,Selvakkumaran-2006-multiobjective}.
For example, in scientific computing, nodes may represent rows in a sparse matrix and hyperedges encode the nonzero patterns of each column~\cite{Ballard:2016:HPS:3012407.3015144}. In machine learning classification tasks, a hyperedge can represent evidence that a set of objects in a dataset belongs to the same cluster or should all be associated with the same label~\cite{Zhou2006learning,yadati2019hypergcn}. And in VLSI layout and circuit design, nodes and hyperedges model transistors and signals in digital circuits~\cite{alpert1995survey,lawler1973}. Higher-order relationships are also inherent in nature and society, and hypergraphs model
multi-way relationships between organisms in food webs~\cite{BensonGleichLeskovec2016,panli2017inhomogeneous},
human dynamics and behavior~\cite{Benson2019three,neuhauser2019multi},
and various joint biological interactions~\cite{Bassett-2014-cross,michoel2012molecular,Tian:2009:HLA:1666837.1666865}.

Because hyperedges may contain more than two nodes, the definition of a \emph{cut} hyperedge is much more nuanced than that of a cut edge.
In this manuscript, we explore broad notions of what it means to cut or \emph{split} a hyperedge, along with \emph{when} and \emph{how} hypergraph cut problems can be minimized efficiently in practice.
Along the way, we develop our computational framework to
characterize the computational complexity of the hypergraph $s$-$t$ cut problem under these generalized definitions,
unify a number of seemingly disparate techniques for solving hypergraph cut problems via graph reductions,
and derive a number of open questions.
Given the importance of the minimum $s$-$t$ cut problem, we hope that our framework will bring substantial value for research and applications using higher-order models and data.
We next provide additional background and preview our ideas in the context of previous research.

\subsection{Hypergraph Cut Problems}
Due to broad modeling capability, hypergraph generalizations of numerous graph cut objectives are applied in practice~\cite{Agarwal2005beyond,BensonGleichLeskovec2016,Chandrasekara_hyperkcut,Chekuri:2016:SFR:2884435.2884492,lawler1973,Zhou2006learning}. Hypergraph cut problems are based on minimizing the sum of nonnegative penalties at hyperedges, where the penalty at each hyperedge is determined by how the hyperedge's constituent nodes are split. The standard penalty function is \emph{all-or-nothing}, which assigns a penalty of zero if the nodes are all together, but assigns a penalty equal to the weight of the hyperedge if the nodes are split or arranged in any other way. This all-or-nothing penalty function is one natural generalization of the cut function in graphs, since in graphs there is only one possible nonzero cut penalty for an edge, based on whether two nodes of an edge are separated or placed together.

Specifically for the minimum $s$-$t$ cut problem, Lawler proved nearly half a century ago that hypergraph problems under the all-or-nothing penalty function can be solved in polynomial time via reduction to a minimum $s$-$t$ cut problem in a directed graph~\cite{lawler1973}. Lawler's results, along with many other advances in algorithms for hypergraph cut problems, have been widely applied to VLSI layout and circuit design~\cite{hadley1995,hadley1992efficient,karypis1999multilevel,vannelli1990Gomoryhu,yangwong1996}, scientific computing applications such as sparse matrix partitioning~\cite{akbudak2014simultaneous,akbudak2013cachelocality,Ballard:2016:HPS:3012407.3015144,catalyurek2011hypergraph,catalyurek2010ontwo,Cevahir2010,Kayaaslan:2012:PHS:2340033.2340060,ucar2007revisiting}, and computer vision problems such as image or video segmentation~\cite{Agarwal2005beyond,govindu2005,huang2009videoobject,kim2011highcc,ochs2012motion,purkait2017largehyperedges}. Hypergraph cut problems also arise in a variety of other applications, including semi-supervised learning~\cite{Hein2013,Zhou2006learning}, consensus clustering~\cite{yaros2013imbalanced}, document clustering~\cite{Hu:2008:HPD:1390334.1390548}, and bioinformatics~\cite{michoel2012molecular,tian2009gene}.

\subsection{Hyperedge Splitting Functions}
The number of ways to split an $r$-node hyperedge is exponential in $r$, even in the case of only two clusters.
In our work, we formally define the notion of a hyperedge splitting function, which assigns a penalty to each configuration of a hyperedge's nodes. Despite there being an exponential number of ways to split a hyperedge, nearly all previous work on hypergraph cut problems focuses on the {all-or-nothing} splitting function. In practical applications, however, we would expect there to be a significant difference between distinct ways of splitting a hyperedge. Consider, for example, a hypergraph-based classification or clustering task in which hyperedges represent \emph{evidence} that a certain set of nodes should be clustered together or classified in the same way. If a large hyperedge is split in such a way that all nodes but one are placed in the same cluster, this assignment would \emph{mostly} agree with the implicit evidence that all nodes should be clustered together. However, if an all-or-nothing splitting function is applied, such a clustering would be penalized just as strongly as splitting the hyperedge into two equal-sized groups, or even placing each node in a distinct cluster. As another example, consider a scientific computing application in which nodes represent data objects (e.g., nonzero row entries in a column of a sparse matrix), and hyperedges indicate computational tasks that rely on subsets of data blocks (e.g., operations for sparse matrix or sparse vector multiplication). If clustering the hypergraph corresponds to partitioning the data and the computational workload among computer processors, a cut hyperedge represents a need for communication between processors. Separating a hyperedge across multiple processors would naturally lead to a higher communication cost than if all but a few nodes in the hyperedge were contained in the same cluster. For this reason, minimizing an all-or-nothing splitting function does not map well to the ultimate goal of minimizing communication in parallel computations.

Despite these and other motivating examples, the existing literature on hypergraph cut problems with splitting functions that are \emph{not} all-or-nothing is small and fragmented.
Within hypergraph partitioning, there are a few other penalty functions designed specifically for multiway splittings, i.e., penalties that are applied when a hyperedge is split into two or more clusters. These include the absorption~\cite{sun1995absorption}, the sum of external degrees~\cite{alpert1995survey,Karypis:1999:MKW:309847.309954,yaros2013imbalanced}, the $K-1$ cut~\cite{gong_km1}, and the discount cut~\cite{yaros2013imbalanced}, all of which can be viewed as multiway splitting functions that assign penalties based on the \emph{number} of clusters spanned by a hyperedge. However, aside from the discount cut, these splitting functions do not consider \emph{how many} nodes of a hyperedge are in each cluster, and they reduce to the all-or-nothing function when only two clusters are formed. Unrelated to these, Li and Milenkovic introduced the very general notion of inhomogenous hypergraph clustering~\cite{panli2017inhomogeneous} and later considered the special case of submodular hypergraph clustering~\cite{panli_submodular}. For both problems, a different penalty can be associated with \emph{each} possible bipartition of a hypergedge. In the most general setting, this means the number of penalties for splitting a hyperedge can be exponential in the hyperedge size. Their analysis is restricted to the case where only two clusters are formed, and their primary focus is to develop approximation algorithms for NP-hard ratio-cut objectives, by approximating the hypergraph with a graph. Meanwhile, for the minimum $s$-$t$ cut problem, although a polynomial time algorithm for the all-or-nothing splitting function has been known since the work of Lawler~\cite{lawler1973}, there are no known complexity results for solving the problem under other splitting functions.

\subsection{Graph Reduction Techniques}
A common approach for solving hypergraph cut problems is to first \emph{reduce} the hypergraph to a graph problem by \emph{expanding} each hyperedge into a small graph. After concatenating expanded hyperedges into a larger graph, a standard graph algorithm can be applied. By far the most popular expansion techniques are the clique expansion~\cite{Agarwal2006holearning,BensonGleichLeskovec2016,hadley1995,hadley1992efficient,Zhou2006learning,panli2017inhomogeneous,vannelli1990Gomoryhu,zien1999} and the star expansion~\cite{Agarwal2006holearning,heuer_et_al:LIPIcs:2018:8936,hu1985multiterminal, zien1999}. As their names imply, these operate by replacing each hyperedge with a small (possibly weighted) clique or star graph. These two expansions are frequently used in hypergraph learning and hypergraph partitioning~\cite{Agarwal2006holearning,Agarwal2005beyond,Hein2013,Huang2015,Zhou2006learning}. However, in nearly all cases, the graphs obtained by applying existing hyperedge expansion techniques only approximately model the original hypergraph in terms of cut properties. In fact, Ihler et al.\ showed that a class of expansion techniques that includes the star and the clique expansion cannot exactly model the all-or-nothing splitting function, even when additional auxiliary vertices are used~\cite{ihler1993modeling}.

These results may tempt one to conclude that representing hypergraph cut problems as graph cut problems is bound to fail. However, Lawler's expansion technique for exactly solving the all-or-nothing hypergraph $s$-$t$ cut problem proves that this is not the case~\cite{lawler1973}. The reason is that Lawler's approach converts hyperedges into a \emph{directed} graph on a larger vertex set, whereas Ihler et al.\ restrict to hyperedge expansions involving only \emph{undirected} edges and auxiliary nodes. Lawler's polynomial time solution for the all-or-nothing hypergraph $s$-$t$ cut problem leads to several fundamental open questions. Is there a broader class of hyperedge expansion techniques that can \emph{exactly} model hypergraph cut problems via graph reduction? Given the ``easy'' solution for the all-or-nothing hypergraph $s$-$t$ cut problem, are there polynomial time solutions when we consider more general splitting functions? Is the problem ever in fact hard to solve, or impossible to represent as a graph? Finally, what can we say about \emph{multiterminal} generalizations of the hypergraph $s$-$t$ cut objective?

\subsection{Our contributions}
This paper undertakes a systematic study of the hypergraph minimum $s$-$t$ cut problem for general hyperedge splitting functions. Our work includes broad contributions at the intersection of graph theory, optimization, scientific computing, and other subdisciplines in applied mathematics. We identify new polynomial time algorithms for certain variants of the hypergraph $s$-$t$ cut problem and NP-hardness results for others.  We also provide a new set of tools for modeling and analyzing higher-order interactions, which we expect will be broadly useful for machine learning and data mining applications. More specifically, we present a unified framework for \emph{exactly} modeling hypergraph cut problems via standard graph cut problems. Our results on multiway generalizations of the hypergraph $s$-$t$ cut problem are also closely related to previous results from the theory and discrete algorithms community on multiway cut and hypergraph labeling problems.
Finally, our framework also leads to several new clear-cut complexity questions in discrete algorithms and theoretical computer science.
We summarize our main contributions and outline the remainder of our paper below.

\subsubsection*{Cardinality-Based Splitting Functions}
We analyze a natural class of hyperedge splitting functions that we call \emph{cardinality-based}. These assign penalties based only on the \emph{number} of nodes of a hyperedge that are placed on either side of a two-way split. Cardinality-based functions are particularly relevant given that in most applications, a node's name or identity is not expected to affect the overall quality of a cut. Nearly all splitting functions used in practice are either cardinality-based, or can be viewed as multiway generalizations of a cardinality-based function~\cite{alpert1995survey,gong_km1,Karypis:1999:MKW:309847.309954,panli_submodular,sun1995absorption,yaros2013imbalanced}.

Figure~\ref{fig:FourUCard} shows how different cardinality-based splitting functions lead to different optimal solutions for the hypergraph minimum $s$-$t$ cut problem on a toy 4-uniform hypergraph. In Section~\ref{sec:experiments}, we conduct related experiments on hypergraphs constructed from real data, where nodes represent tags used in an online mathematics forum, and hyperedges represent sets of tags used in the same post. Figure~\ref{fig:jaccard1} previews these results and shows the Jaccard similarity between the \emph{all-or-nothing} $s$-$t$ cut solution and the solution to $s$-$t$ cut problems with different cardinality-based splitting functions. The changes in similarity show that different splitting functions can lead to a range of different types of solutions for the hypergraph $s$-$t$ cut problem in practice.

\begin{figure}[t]
	\centering
	\subfloat[$w_2 = 0.5$ solution]{\label{fig:4uc1}\includegraphics[width=.3\linewidth]{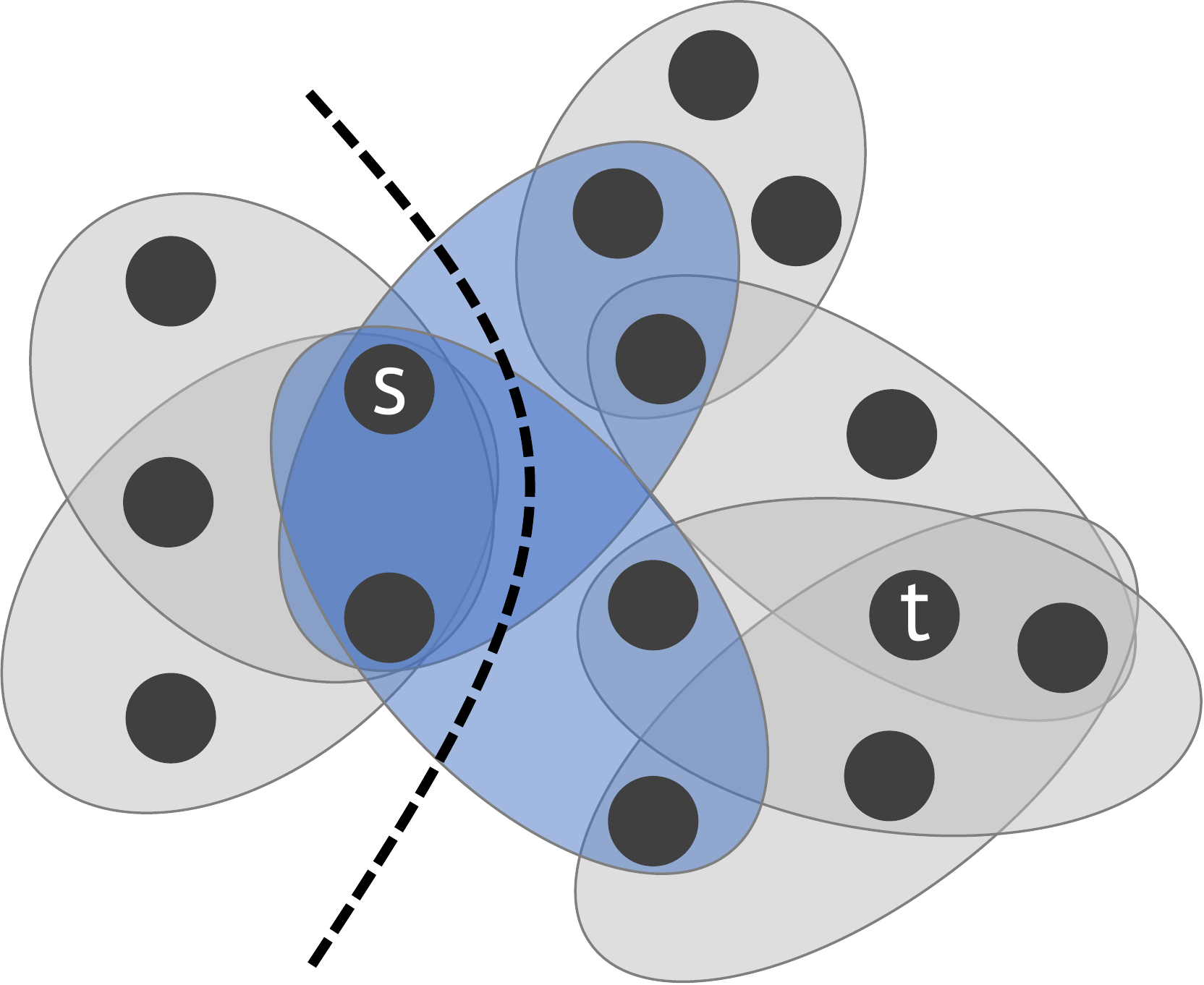}}\hfill
	\subfloat[$w_2 = 1.5$ solution]
	{\label{fig:4uc2} \includegraphics[width=.3\linewidth]{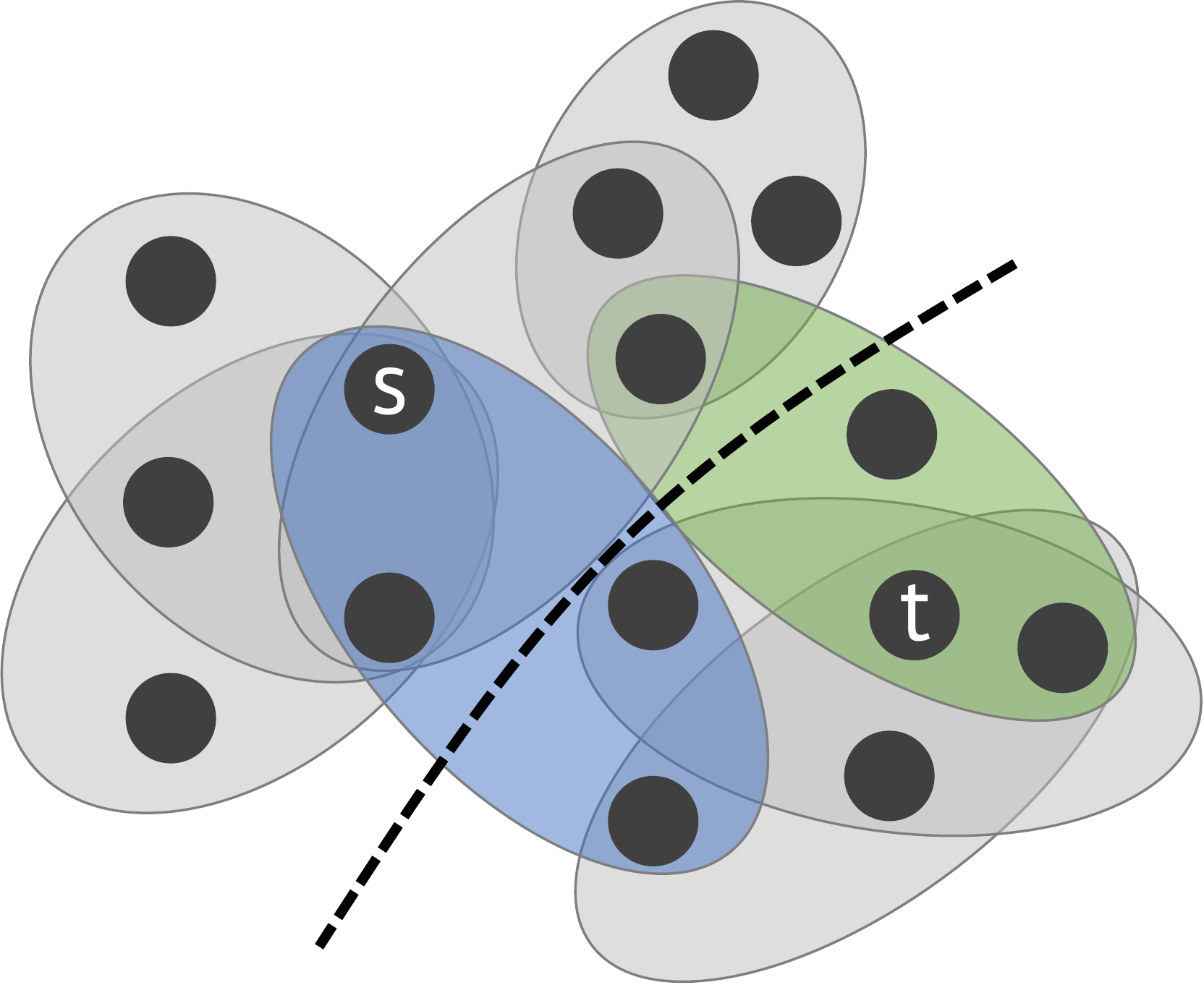}}\hfill
	\subfloat[$w_2 = 2.5$ solution]
	{\label{fig:4uc3}\includegraphics[width=.3\linewidth]{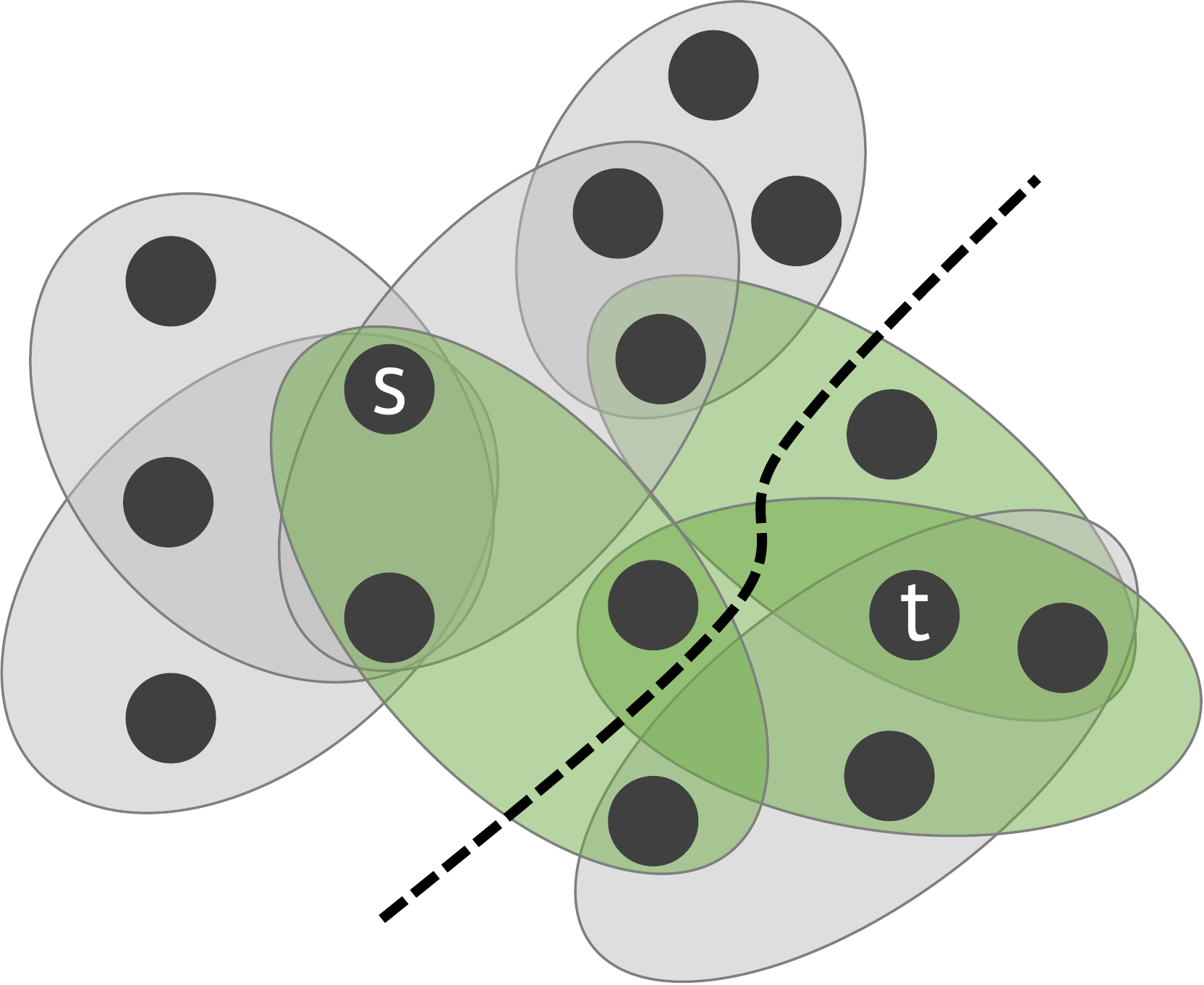}}
	\caption{For four-node hyperedges, cardinality-based splitting functions assign a penalty of $w_1$ if one node is separated from the other three, and a penalty of $w_2$ if the hyperedge has a 2-2 split. Here we show the minimum $s$-$t$ cut solution on a small hypergraph when $w_1 = 1$ is fixed and $w_2$ takes on three different values. Grey indicates uncut hyperedges, blue indicates 2-2 splits, and green indicates 1-3 splits. Solutions for $w_2 \in \{0.5, 1.5\}$ are unique; for $w_2 = 2.5$ we illustrate a solution with a minimum number of source-side nodes.
	Among other results, we prove that for 4-uniform hypergraphs, this problem is NP-hard for case (a) ($w_2 < w_1$, Theorem~\ref{thm:4maxcut}), is tractable for case (b) ($w_2 \in [w_1, 2w_1]$, Theorem~\ref{thm:iffsub}), and has unknown complexity in case (c) ($w_2 > 2w_1$, Figure~\ref{fig:tractable2}). 
}
	\label{fig:FourUCard}
\end{figure}
\begin{figure}[tbhp]
	\centering
		\centering
	\subfloat[Hyperedges containing ``hypergraphs'' tag\label{fig:math1}]
	{\includegraphics[width=.475\linewidth]{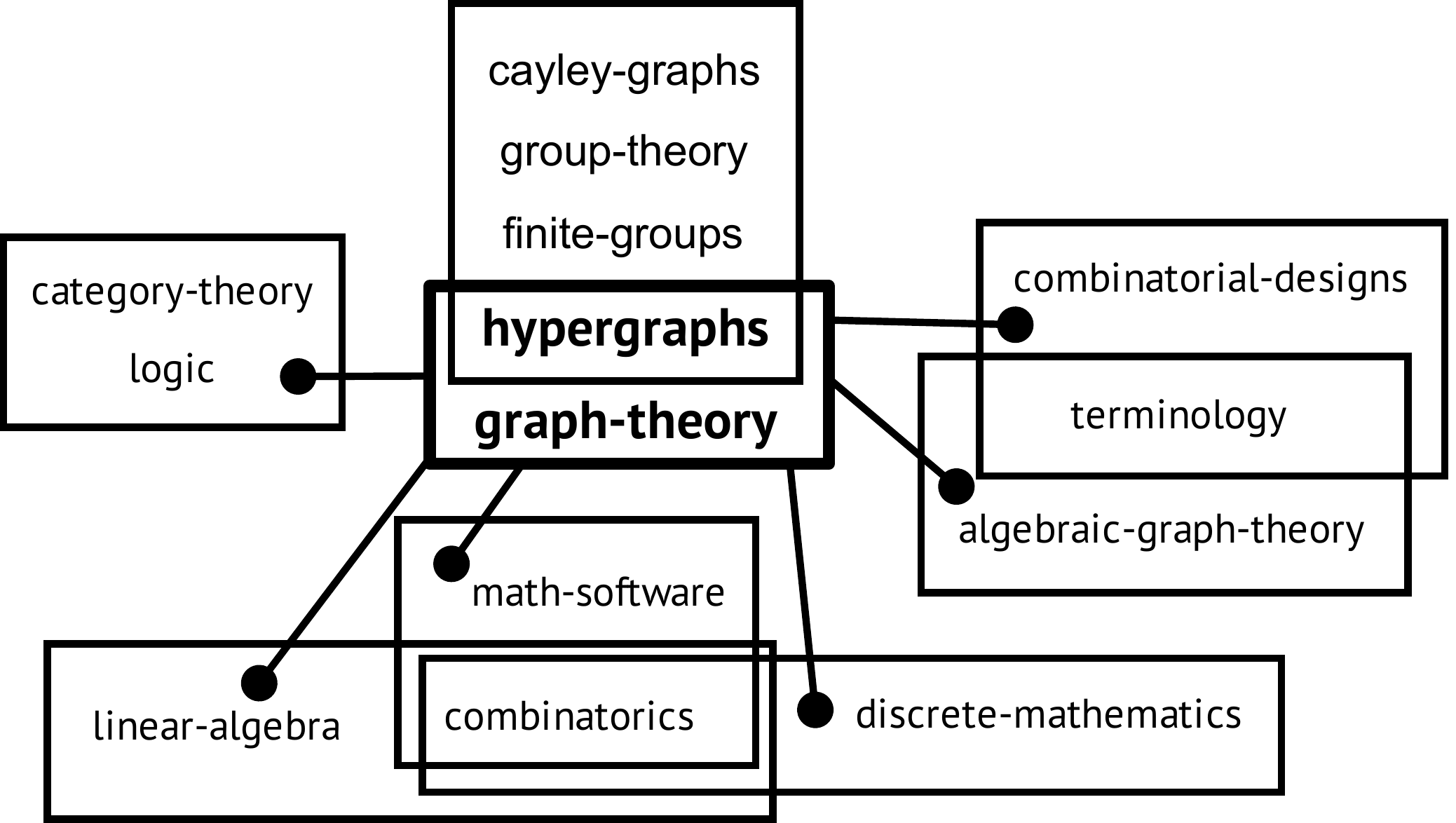}}\hfill
	\subfloat[Jaccard similarity scores\label{fig:math2}]
	{\includegraphics[width=.4\linewidth]{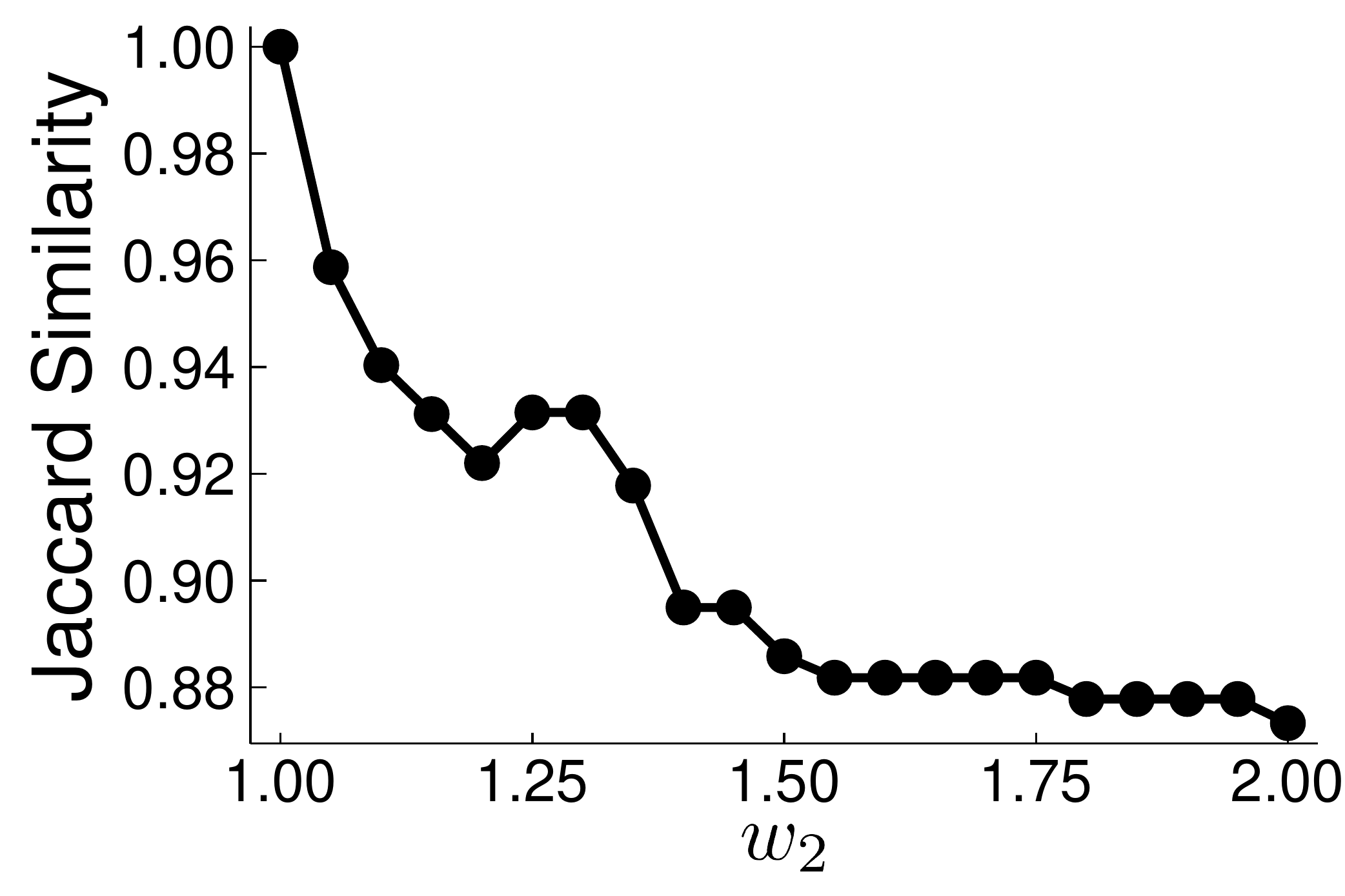}}
	\caption{Our new techniques enable us to find minimum $s$-$t$ cuts of a real-world hypergraph under generalized notions of hypergraph cuts. The hypergraph here is constructed from questions posted in an online math forum~(\url{https://math.stackexchange.com/}). Each node is a tag for a math term (e.g., ``logic'', ``combinatorics''), and hyperedges represent 2 to 5 tags from a single post.
          (a)~All hyperedges in the dataset that have four nodes and contain the tag ``hypergraphs.'' Many overlap and all but one also contain the ``graph-theory'' tag.
          (b)~For {cardinality-based} splitting functions, separating one node by itself costs $w_1 = 1$ and splits with two nodes together have penalty $w_2$.
          There are substantial changes in the Jaccard similarity between the solution for the problem under the standard all-or-nothing splitting function ($w_2 = w_1 = 1$)
          and the solution for the problem when $w_2 > 1$. Section~\ref{sec:experiments} has more details on these experiments.}
	\label{fig:jaccard1}
\end{figure}

\subsubsection*{Positive results via graph reduction}
One of our primary contributions is a generalized framework for hyperedge expansion techniques, which enables us to completely characterize which cardinality-based hypergraph $s$-$t$ cut problems can be solved via reduction to a graph $s$-$t$ cut problem. Specifically, we show that for cardinality-based splitting functions, the hypergraph $s$-$t$ cut problem is reducible to a graph $s$-$t$ cut problem if and only if the splitting functions are submodular. 

\subsubsection*{NP-hard regimes for hypergraph $s$-$t$ cuts}
Next we identify a large class of cardinality-based~$s$-$t$ cut problems \emph{outside} the submodular region for which the $s$-$t$ cut problem is NP-hard to solve. This is somewhat surprising, given that the hypergraph $s$-$t$ cut problem has been viewed as an ``easy'' problem since the work of Lawler~\cite{lawler1973}, but again, Lawler only considered the all-or-nothing splitting function.
At the same time, we give a simple example of a cardinality-based splitting function outside the submodular region for which the problem is still tractable, even if not graph reducible. This rules out the possibility that tractability is exactly determined by submodularity of splitting functions and leads to a number of interesting open questions on the limits between tractable and intractable instances of the hypergraph $s$-$t$ cut problem. We also show that there are variants of the problem that are NP-hard to approximate to within any multiplicative factor.

\subsubsection*{Generalizations to multiway hypergraph cuts}
Finally, we consider extensions to the multiterminal case, where the goal is to form a $k$-clustering of a hypergraph so that $k$ distinguished terminal nodes are all separated from each other. This problem is NP-hard to optimize for $k >2$ even in the graph setting~\cite{Dahlhaus94thecomplexity}. However, by generalizing our hyperedge expansion techniques, we prove that for a multiway generalization of submodular cardinality-based splitting functions, the hypergraph multiway cut problem can be modeled by a node-weighted multiway graph cut problem. Thus, we can apply existing approximation algorithms for the latter problem~\cite{garg1994multiway}. We also show a special case of hypergraph multiway cut that is NP-hard to approximate.

\section{Preliminaries and Additional Related Work}
\label{sec:prelims}
We present several technical preliminaries and related work on graph and hypergraph cut problems. Our work also shares many connections with minimizing submodular functions; we give an overview of relevant related work in Section~\ref{sec:submodular}.

\subsection{Graph Cut Problems}
Let ${G} = (V,E)$ be a weighted and possibly directed graph. A \emph{partition} of a graph, which we will also call a \emph{clustering}, is a separation of its nodes into \emph{clusters} $\{V_1, V_2, \hdots, V_k\}$ such that $V_i \cap V_j = \emptyset$ if $i \neq j$, and $\bigcup_{i = 1}^k V_i = V$. A \emph{bipartition} is simply a partition with two clusters, corresponding to a set $S \subset V$ and its complement $\bar{S}  = V \backslash S$. 
For an edge $(i,j) \in E$, let $w_{ij}$ denote the (nonnegative) weight of the edge. Given any two node sets $S \subset V$ and $T \subset V$, we define the cut between $S$ and $T$ to be
\[ \cut(S,T) = \sum_{u \in S} \sum_{v \in T} w_{uv} \,.\]
Note that if the graph is undirected, $\cut(S,T) = \cut(T,S)$, though this need not be the case for directed graphs. A single edge is said to be \emph{cut} by $S$ if only one of its endpoints is in $S$. This is associated with a penalty equal to the edge's weight. In this way, every set of nodes $S \subset V$ is associated with a cut score, obtained by summing the weight of edges crossing between $S$ and its complement:
\[ \cut(S) = \cut(S, \bar{S}) = \sum_{i\in S, j \in \bar{S}} w_{ij}.\]

\subsubsection*{Minimum cuts and maximum flows}
If we identify two special \emph{terminal} nodes $s$ and $t$ in $V$ (called the \emph{source} and \emph{sink} respectively), then the minimum $s$-$t$ cut problem is the task of removing a minimum-weight set of edges in $E$ so that there are no paths from $s$ to $t$. Formally, this can be written:
\begin{equation}
\label{eq:minstcut}
\begin{array}{ll}
\minimize_{S \subset V} & \cut(S, \bar{S}) \\
\text{subject to } & s \in S, t \in \bar{S}.
\end{array}
\end{equation}
There is a subtle but important difference between minimizing this objective over directed or undirected graphs. In undirected graphs, if set $S$ contains the source $s$, the cut penalty includes any edges with one endpoint in $S$ and the other in $\bar{S}$. The directed minimum $s$-$t$ cut problem is equivalent to removing a minimum weight set of edges so that there is no \emph{directed} path from $s$ to $t$. Thus, we do not incur penalties for edges crossing from $\bar{S}$ to $S$, regardless of their weight. This distinction will be important for our results on reducing hypergraph cut problems to graph cut problems. 

The dual of finding a minimum $s$-$t$ cut in a graph is the well-known maximum $s$-$t$ flow problem. Classical algorithms for computing maximum $s$-$t$ flows are among the most well-known combinatorial graph algorithms~\cite{dinic1970algorithm,edmonds1972theoretical,Ford_fulkerson,Goldberg:1986:NAM:12130.12144}, and there is an extensive body of research dedicated to finding improved flow algorithms~\cite{kelner_soda,King:1992:FDM:139404.139438,Orlin:2013:MFO:2488608.2488705,peng_approxmaxflow,sherman2013nearlymax}. Although minimum $s$-$t$ cuts and maximum $s$-$t$ flows are intimately related, all of our results can be understood entirely in terms of cuts, so we focus on a cut-based view in the remainder of the paper.

\subsubsection*{The multiway cut problem}
The multiway cut problem is a generalization of the \emph{undirected} minimum $s$-$t$ cut problem in graphs. An input of the problem is a graph $G = (V,E)$ along with a set of $k > 2$ terminal nodes $\{t_1, t_2, \hdots , t_k\} \subset V$. The goal is to find a $k$-clustering $\{V_1, V_2, \hdots, V_k\}$ of $V$, with $t_i \in V_i$ for $i \in\{ 1, 2, \hdots, k\}$, such that the total weight of edges crossing between different clusters is minimized. The problem is NP-hard when $k \geq 3$, but can be approximated to within a factor $2(1-\frac{1}{k})$ by combining results from solving $k$ minimum $s$-$t$ cut problems~\cite{Dahlhaus94thecomplexity}. There are several generalizations of the problem, with many relating to hypergraphs~\cite{chekuri2011,Chekuri:2016:SFR:2884435.2884492,Ene:2013:LDS:2627817.2627840,garg1994multiway,Okumoto2012,Sharma:2014:MCP:2591796.2591866,Zhao2005}, and we consider these in greater depth in Section~\ref{sec:multiway}.

\subsection{Hypergraph Cut Problems}
\label{sec:hypergraph2cut}
Let $\mathcal{H} = (V,E)$ denote a hypergraph, where  $e \in E$ is a hyperedge, i.e., a set of (possibly more than two) vertices from $V$, that may be associated with a weight $w_e$. Although there are directed notions of hypergraphs, we restrict our attention to undirected hypergraphs. For a bipartition $\{S, \bar{S}\}$ of $V$, a hyperedge $e$ is \emph{cut} if it has at least one node in both $S$ and $\bar{S}$. We represent the set of cut hyperedges by
\begin{equation}
\label{eq:cutset}
\partial S = \{ e \in E : e \cap S \neq \emptyset \text{ and } e \cap \bar{S} \neq \emptyset \}.
\end{equation}
A simple and widely-studied hypergraph generalization of the graph cut is the sum of weights of cut hyperedges, which we refer to as the \emph{all-or-nothing} cut function:
\begin{equation}
\label{eq:allornothing}
\textbf{all-or-nothing}(S) = \sum_{e \in \partial{S}} w_e.
\end{equation}
However, this is only one among many cut functions that may be reasonable to minimize when solving cut-based hypergraph problems.

\subsubsection*{Hypergraph $s$-$t$ cuts}
Lawler was the first to introduce a hypergraph generalization of the $s$-$t$ cut problem~\cite{lawler1973}. Given an undirected hypergraph $\mathcal{H}$, Lawler considered how to find a minimum weight set of hyperedges to remove in order to separate two terminal nodes $s$ and $t$. In other words, the all-or-nothing cut~\eqref{eq:allornothing} is the penalty associated with any bipartition. Lawler showed that this problem can be solved in polynomial time by converting the problem into a graph $s$-$t$ cut problem in a \emph{directed} graph on a larger set of vertices. We explain the reduction in depth when we consider various graph reduction techniques in Section~\ref{sec:positive}. Liu and Wong later presented a simplification of this graph reduction~\cite{liuwong1998}, which was further simplified by Heuer et al.~\cite{heuer_et_al:LIPIcs:2018:8936}. Both results also apply to the all-or-nothing cut penalty.

\subsection{Submodular Function Minimization}
\label{sec:submodular}
Cut functions in graphs, and in turn the all-or-nothing hypergraph cut function~\eqref{eq:allornothing}, are well-known special cases of submodular functions~\cite{cunningham1985minimum}. Given a universe set $\Omega$, a set function $f$ is defined to be submodular if for all sets $A \subset \Omega$ and $B \subset \Omega$, it satisfies the following property:
\begin{equation}
\label{eq:submodularf}
f(A \cap B) + f(A \cup B) \leq f(A) + f(B).
\end{equation}
Gr{\"o}tschel et al.\ gave the first polynomial time and strongly polynomial time algorithms for minimizing general submodular functions~\cite{grotschel1981,grotschel2012geometric},
and later work developed improved algorithms~\cite{Iwata:2001:CSP:502090.502096,Iwata:2009:SCA:1496770.1496903,Orlin2009,Schrijver:2000:CAM:361537.361552}.

The minimum $s$-$t$ cut problem on a graph $G = (V,E)$ can be cast as a special case of submodular function minimization by defining 
\begin{align*}
f\colon S \subset V \backslash \{s,t\}\rightarrow \mathbb{R}\\ 
f(S) = \cut(S \cup \{s\})\,.
\end{align*}
An analogous approach can cast the all-or-nothing hypergraph $s$-$t$ cut problem as submodular minimization. Minimizing general submodular functions is a much broader problem than minimizing cut functions over graphs and hypergraphs, and therefore algorithms for the latter are much more efficient than general ones for submodular minimization. For this reason, there is a significant body of research on identifying set functions that can be represented or at least approximated by graph or hypergraph cut problems~\cite{cunningham1985minimum,devanur2013approximation,fujishige2001realization,jegelka2011fast,panli2017inhomogeneous,Yamaguchi:2016:RSS:2938781.2938842}. Submodularity and graph representability both play a central role in our results on minimizing generalized hypergraph cut functions.

\section{Technical Framework}
\label{sec:framework}
We now define our technical framework for solving hypergraph $s$-$t$ cut problems under generalized notions of cuts.
We first formalize the concept of hypergraph splitting functions, which assign cut penalties to different arrangements of nodes in a hyperedge.
After, we motivate and analyze a natural class of \emph{cardinality-based} splitting functions and later generalize to asymmetric
hypergraph cuts (Section~\ref{sec:nonsymmetric}) and \emph{multiway} splitting functions (Section~\ref{sec:multiway}).
Some of our notation and definitions are inspired by a model for inhomogeneous hypergraph cuts, where different ``splits'' of a hyperedge are approximated by weighted graph cuts on the same set of nodes~\cite{panli2017inhomogeneous}.
Our focus is on the computational complexity of generalized hypergraph $s$-$t$ and multiway cut problems, as well as algorithms for solving them.

\subsection{Hypergraph Splitting Functions}
\label{sec:splitting}
Let $\mathcal{H} = (V,E)$ denote a hypergraph where $V = \{v_1, \hdots, v_n\}$ is a set of vertices and each hyperedge $e \in E$ corresponds to a subset of $V$ of arbitrary size. A hyperedge is \emph{cut} if it spans more than one cluster in a partition of $V$. In graphs, there is no meaningful difference between the inherent weight of an edge and the cut penalty associated with separating its two nodes. However, in order to accommodate generalized cut penalties in hypergraphs, we define a \emph{hypergraph splitting function} that maps each possible configuration of a hyperedge to a nonnegative splitting penalty.
\begin{definition}
	\label{def:splitting}
For each $e \in E$, let $2^e$ denote the power set of $e$. A hyperedge splitting function on $e$ is any function $\vw_e\colon 2^e \rightarrow \mathbb{R}$ satisfying
\begin{align}
\label{eq:nn}
(\text{Non-negativity} ) &\hspace{1cm} \vw_e(S) \geq 0 & \text{ for all $S \subset e$.} \\
\label{eq:symmetric}
(\text{Symmetry} ) & \hspace{1cm}\vw_e(S) = \vw_e(e \backslash S) & \text{ for all $S \subset e$.} \\
\label{eq:zerononcut}
(\text{Non-split ignoring} )& \hspace{1cm}  \vw_e(e) = \vw_e(\emptyset) = 0. 
\end{align}
\end{definition}
In principle these requirements could be relaxed to obtain even broader generalizations of hypergraph cut problems. However, for the minimization problems we consider, it is most natural to only penalize \emph{cut} hyperedges (Property~\eqref{eq:zerononcut}), and avoid minimizing negative scores (Property~\eqref{eq:nn}), as these would not truly correspond to penalties. Property~\eqref{eq:symmetric} generalizes the fact that in undirected graphs, permuting the location of two nodes in a cut edge does not change the cut penalty. In Section~\ref{sec:nonsymmetric}, we will relax this requirement in order to consider an asymmetric version of the hypergraph $s$-$t$ cut problem. However, unless we explicitly state otherwise, the term \emph{splitting function} always refers to a symmetric function.

We will pay special attention to two classes of splitting functions.
The first class is \emph{submodular} splitting functions, which for all $S_1, S_2 \in 2^e$ satisfy 
\begin{equation}
\label{submodular}
\vw_e(S_1) + \vw_e(S_2) \geq \vw_e(S_1 \cap S_2) + \vw_e(S_1 \cup S_2).
\end{equation}
Submodular splitting functions are a more restricted class that can still encode natural structural properties for clustering applications.
Even so, in the most general setting, a submodular splitting function is parameterized by $2^{r-1}-1$ different penalty scores for an $r$-node hyperedge, which is the same number of scores needed to define a general splitting function. The reason is that submodular splitting functions are not necessarily \emph{anonymous}, in the sense that a node's name or identity may affect the splitting penalty. In constrast, many clustering applications do not place a special importance on any particular node; the goal is simply to cluster based on edge structure, without treating any node as inherently special.

To deal with these issues, we introduce a second special class of splitting functions.
\begin{definition}
A splitting function $\vw_e$ is cardinality-based if it satisfies
\begin{equation}
\label{cardinality}
\vw_e(S_1) = \vw_e(S_2) \text{ for all $S_1, S_2 \in 2^e$ with $|S_1| = |S_2|$.}
\end{equation}
\end{definition}
Cardinality-based splitting functions are anonymous: they do not distinguish between different types of nodes in a hyperedge. Instead, penalties are assigned based only on the number of nodes in each cluster. In addition to being a very natural model for applications, cardinality-based function are beneficial in that they can be parameterized by $\floor*{r /2}$ penalties for an $r$-node hyperedge. In other words, there can be at most a different penalty for each possible number of nodes on the \emph{small side} of a two-way split hyperedge. With this view in mind, we highlight an observation that will be relevant in future sections.
\begin{observation}
	\label{obs:23}
	Cardinality-based splitting functions on 2- and 3-node hyperedges are parameterized by a single penalty, and are therefore equivalent to the all-or-nothing splitting function.
\end{observation}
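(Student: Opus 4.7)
The plan is to verify the observation by direct enumeration, since for hyperedges of size $r \in \{2,3\}$ the subset lattice is small enough to list explicitly.

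First I would handle $r = 2$: for a hyperedge $e = \{u,v\}$, the power set $2^e$ has four elements, namely $\emptyset, \{u\}, \{v\}, e$. The non-split ignoring property~\eqref{eq:zerononcut} forces $\vw_e(\emptyset) = \vw_e(e) = 0$, and the cardinality-based property~\eqref{cardinality} forces $\vw_e(\{u\}) = \vw_e(\{v\})$. Thus a single nonnegative number parameterizes the splitting function, and it is assigned precisely when the hyperedge is cut; this coincides with the all-or-nothing function restricted to $e$.

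Next I would handle $r = 3$: for a hyperedge $e = \{u,v,x\}$, there are eight subsets, partitioned by cardinality into sizes $0,1,2,3$. Property~\eqref{eq:zerononcut} zeroes out the size-$0$ and size-$3$ subsets. Property~\eqref{cardinality} collapses the three size-$1$ subsets to a common value $\alpha$, and similarly the three size-$2$ subsets to a common value $\beta$. Finally, symmetry~\eqref{eq:symmetric} (or equivalently cardinality-based combined with the complement relation $|S| = 3 - |e \setminus S|$ being handled by the fact that the symmetry is automatic from cardinality here) gives $\alpha = \beta$, since each size-$1$ subset is the complement of a size-$2$ subset. So again a single penalty remains, and it applies to every cut configuration, matching the all-or-nothing function on $e$.

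Combining the two cases gives the statement. The only subtle point — and the closest thing to an obstacle — is to note for $r=3$ that the cardinality-based property already implies the symmetry condition (since size-$1$ and size-$2$ subsets share cardinality with their complements), so these two conditions together collapse the nontrivial penalty to one value; no separate invocation of symmetry is strictly required in the $r=2$ case either, since size-$1$ subsets complement size-$1$ subsets. The rest is a matter of counting, and no further machinery is needed.
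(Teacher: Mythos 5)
Your enumeration is correct, and since the paper states this as an unproved observation, a direct case check for $r=2$ and $r=3$ is exactly what is called for; your main argument (zero out sizes $0$ and $r$ by \eqref{eq:zerononcut}, collapse each cardinality class by \eqref{cardinality}, then identify the size-$1$ and size-$2$ classes via \eqref{eq:symmetric} when $r=3$) is sound. However, your closing remark contains a false claim that should be removed: for $r=3$, the cardinality-based property \eqref{cardinality} does \emph{not} imply the symmetry property \eqref{eq:symmetric}. A size-$1$ subset of a $3$-node hyperedge has a complement of size $2$, so the two do not share cardinality, and \eqref{cardinality} alone says nothing about how their penalties relate; for instance, setting $\vw_e(S)=|S|$ for $S\notin\{\emptyset,e\}$ and $\vw_e(\emptyset)=\vw_e(e)=0$ gives a nonnegative, non-split-ignoring, cardinality-based function that is not symmetric. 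The identification $\alpha=\beta$ therefore genuinely requires invoking \eqref{eq:symmetric}, which you are entitled to do because symmetry is part of Definition~\ref{def:splitting} of a splitting function --- and your main argument does invoke it, so the proof stands; only the parenthetical assertion that symmetry is ``automatic from cardinality here'' is wrong. (For $r=2$ you are right that symmetry is automatic, since size-$1$ subsets complement size-$1$ subsets.)
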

Despite this observation for small hyperedges, cardinality-based splitting functions have significantly more modeling power than the all-or-nothing splitting function in general. At the same time, they avoid having to determine $2^{r-1}-1$ scores, as is the case for defining a general splitting function or even a submodular splitting function. We also show in Section~\ref{sec:reduction} that the intersection of these splitting function classes has some remarkable properties.

Although this is the first formal definition of a cardinality-based splitting function, many existing hypergraph clustering and partitioning models indeed minimize an objective involving cardinality-based penalties. Table~\ref{tab:cardexamples} summarizes several cardinality-based splitting functions that have been (implicitly) used in practice.
\renewcommand{\arraystretch}{1.5}
\begin{table}[t]
  \caption{Examples of cardinality-based splitting functions appearing in previous literature on hypergraph clustering. In all cases, $S$ represents a subset of a hyperedge $e$.
    The first three examples are broadly used, and the citations provide a representative sample of research. In practice these are often scaled by a nonnegative weight.
    The latter two are more recent and are parameterized by some $\alpha \in (0,1)$.
    The discount count was designed for multiway partitions, but the two-cluster restriction here is also interesting and distinct from the all-or-nothing function.
    The L\&M submodular refers to a set of splitting functions used by Li and Milenkovic that is submodular and cardinality-based. Note that all of these splitting functions are the same on 2- and 3-node hyperedges (up to a scaling term).
    }
	\label{tab:cardexamples}
	\centering
\begin{tabular}{lll}
		\toprule 
	All-or-nothing & $\vw_e(S) = \begin{cases} 0 & \text{if $S \in \{e, \emptyset\}$} \\ 1 & \text{otherwise}\end{cases}$  & \cite{BensonGleichLeskovec2016,hadley1995,ihler1993modeling,lawler1973} \\
	Linear penalty & $\vw_e(S) = \min \{ |S|, |e \backslash S| \} $ & \cite{heuer_et_al:LIPIcs:2018:8936,hu1985multiterminal} \\
	Quadratic penalty & $\vw_e(S) = |S| \cdot|e \backslash S|$ & \cite{Agarwal2006holearning,hadley1995,vannelli1990Gomoryhu,Zhou2006learning,zien1999}\\
	Discount cut & $\vw_e(S) = \min \{ |S|^\alpha, |e \backslash S|^\alpha \} $  &  \cite{yaros2013imbalanced}\\
	L\&M submodular & $\vw_e(S) = \frac{1}{2} + \frac{1}{2} \cdot \min\left \{1, \frac{|S|}{ \floor*{\alpha |e|}}, \frac{|e\backslash S|}{\floor*{\alpha |e|}}\right \} $  & \cite{panli_submodular}\\
	\bottomrule
\end{tabular}
\end{table}

\subsubsection*{Scalar weights vs.\ splitting functions} 
In theory, every hyperedge in a hypergraph can be associated with its own unique splitting function. From this perspective, splitting functions can be viewed as a generalization of scalar hyperedge weights. In practice, however, a hypergraph will rarely be associated with a set of splitting functions that are somehow inherent to its structure. We therefore treat splitting functions as a way to describe the type of hypergraph cut problem we wish to solve, rather than viewing them as an inherent part of the hypergraph. Given a hypergraph $\mathcal{H} = (V,E)$, we will typically apply a single type of splitting function to all hyperedges when solving a hypergraph cut problem. For $r$-uniform hypergraphs, the same exact function can be applied. If the hypergraph comes with scalar hyperedge weights, these can be used to scale the splitting function at each hyperedge. In practice, it will often make sense to solve cut problems on the \emph{same} hypergraph using several {different} splitting functions, to highlight different types of clustering structure in the same hypergraph.

\subsection{The Generalized Hypergraph Minimum $s$-$t$ Cut Problem}
Let~$\mathcal{H} = (V,E)$ be a hypergraph and assume we have a splitting function~$\vw_e$ for each $e \in E$. If $\mathcal{H}$ comes with scalar hyperedge weights, we assume these have been incorporated directly into the splitting functions. For a set of nodes $S \subset V$, define the generalized hypergraph cut score of~$S$ to be
\begin{equation}
\cut_\mathcal{H}(S) = \sum_{e \in E} \vw_e(e \cap S) = \sum_{e \in \partial S} \vw_e(e \cap S).
\end{equation}
where $\partial S = \{ e \in E : e \cap S \neq \emptyset, e \cap \bar{S} \neq \emptyset \}$ denotes the set of cut hyperedges. 

\begin{definition}
Let $\{s,t\} \subset V$ be designated source and sink nodes in the hypergraph $\mathcal{H}$.
The generalized hypergraph minimum $s$-$t$ cut problem (\hstgen{}) is the following optimization problem:
\begin{equation}
\label{eq:hypermincut}
\begin{array}{ll}
\minimize_{S \in V} & \cut_\mathcal{H}(S) \\
{\normalfont \text{subject to }} & s \in S, t \in \bar{S}.
\end{array}
\end{equation}
\end{definition}
Alternatively, we can define the problem as a special case of function minimization.
Let $f\colon S \subseteq V\backslash\{s,t\} \rightarrow \mathbb{R}_+$ where $f(S) = \cut_\mathcal{H}(S \cup \{s\})$.
Then the generalized hypergraph minimum $s$-$t$ cut problem is simply
\begin{equation}
\label{eq:hypercutfun}
\minimize_{S\subseteq V\backslash\{s,t\}  } \,\,f(S).
\end{equation}
In cases where we apply the same splitting function to all hyperedges in $\mathcal{H}$, we will simply refer to the problem as \hc{}, preceded by the name of the splitting function. Thus, the hypergraph $s$-$t$ cut problem introduced by Lawler~\cite{lawler1973} can be referred to as \emph{all-or-nothing} \hc{}. We will pay special attention to \emph{cardinality-based} \hc{}, and \emph{submodular} \hc{}.

For {submodular} \hc{}, the hypergraph cut objective function~\eqref{eq:hypercutfun} is a sum of submodular functions, and therefore is itself submodular. This means that this special case can be solved in strongly polynomial time, using algorithms for general submodular function minimization~\cite{Iwata:2001:CSP:502090.502096,Iwata:2009:SCA:1496770.1496903,Orlin2009,Schrijver:2000:CAM:361537.361552}. However, algorithms for general submodular function minimization are much slower than algorithms for the minimum $s$-$t$ cut problem in graphs. A natural question to ask, then, is whether in special circumstances we can find more efficient algorithms for solving {submodular} \hc{}. Lawler's~\cite{lawler1973} work on the all-or-nothing penalty (which is a simple submodular splitting function) demonstrates that there is at least one special case that can be solved via reduction to the graph $s$-$t$ cut problem. Another interesting question to ask is whether or when \hstgen{} is tractable for non-submodular splitting functions. We address these questions in depth in the next sections.

\section{Positive Results via Graph Reduction}
\label{sec:positive}
A common technique for solving hypergraph cut problems is to reduce the hypergraph $\mathcal{H}$ into a related graph $G_\mathcal{H}$ and then apply existing algorithms and techniques for graph cut problems. As long as the node set of $G_\mathcal{H}$ includes all nodes from $\mathcal{H}$, partitioning $G_\mathcal{H}$ induces a partition on $\mathcal{H}$. The goal is therefore to construct $G_\mathcal{H}$ in such a way that graph cuts in $G_\mathcal{H}$ at least approximately model hypergraph cuts in $\mathcal{H}$.

We will use the term \emph{hyperedge expansion} to refer to any strategy for replacing a hyperedge with a small graph, which includes both the original nodes in the hyperedge as well as new auxiliary nodes. We begin this section by reviewing three existing techniques for hyperedge expansion, after which we demonstrate how each can be used to exactly solve the all-or-nothing \hc{} problem in 3-uniform hypergraphs. Inspired by the similarities and differences among these approaches, we present a general framework for hyperedge expansion specifically for minimum $s$-$t$ cut problems. Using this framework, we prove that instances of cardinality-based \hc{} can be exactly modeled by minimum $s$-$t$ cut problems in graphs, if and only if the cardinality-based splitting functions are also submodular.

\subsection{Three Previous Techniques for Hyperedge Expansion}
Figure~\ref{fig:expansions} displays three existing techniques for converting a hyperedge into a small graph.
\begin{figure}[t]
	\centering
	\subfloat[Hyperedge \label{fig:3edge}]
	{\includegraphics[width=.24\linewidth]{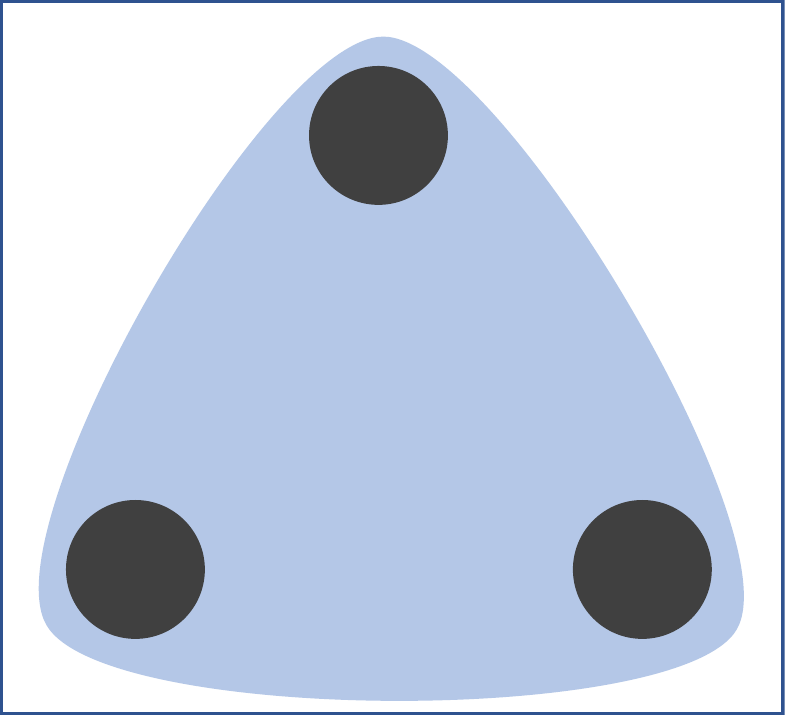}}\hfill
	\subfloat[Clique expansion \label{fig:3clique}]
	{\includegraphics[width=.24\linewidth]{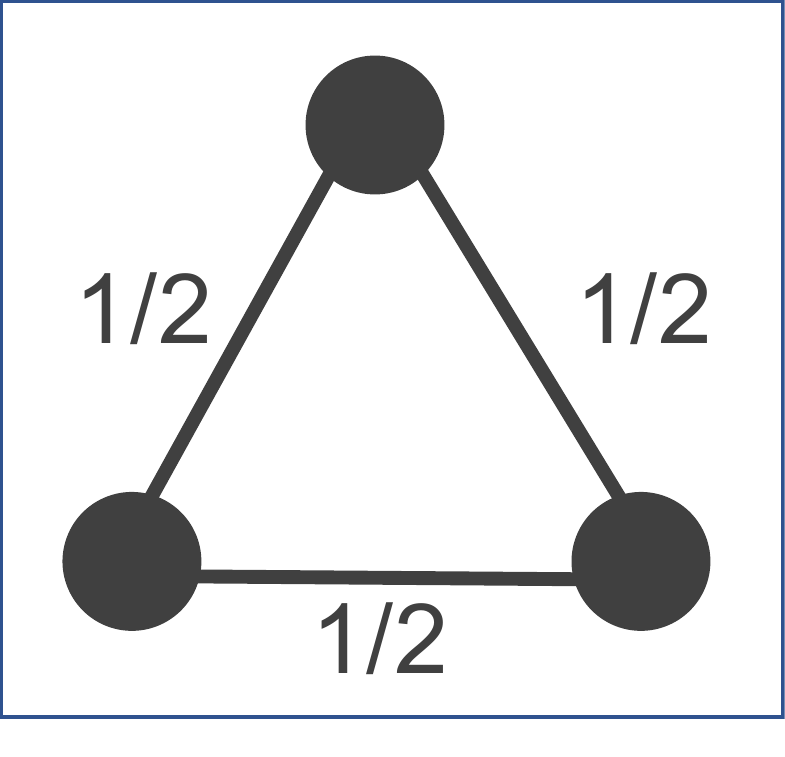}}\hfill
	\centering
	\subfloat[Star expansion\label{fig:3star}]
	{\includegraphics[width=.24\linewidth]{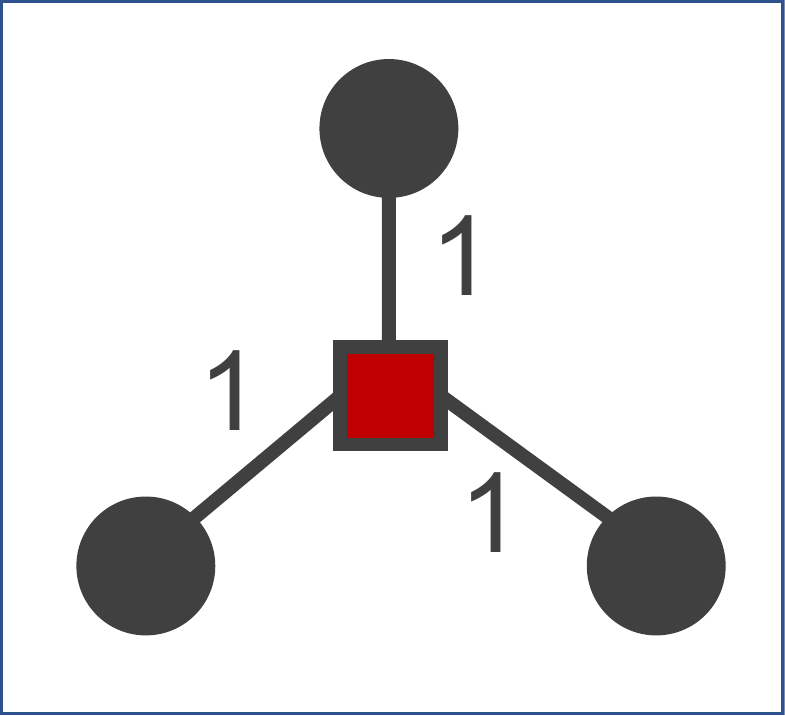}}\hfill
	\subfloat[Lawler expansion\label{fig:3lawler}]
	{\includegraphics[width=.24\linewidth]{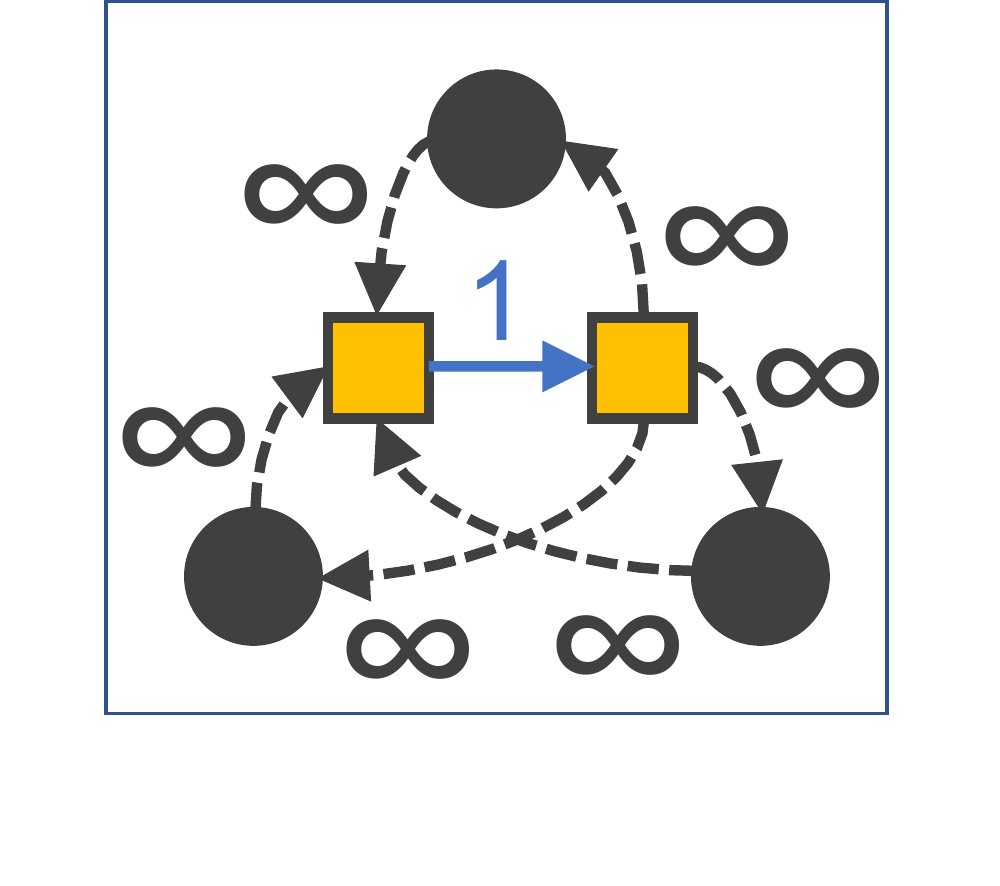}}
	\caption{Three existing techniques for converting a hyperedge into a set of edges and (possibly auxiliary) nodes. For 3-node hyperedges, each expansion technique can perfectly model the all-or-nothing hyperedge splitting function, if the displayed edge weights are used.}
	\label{fig:expansions}
\end{figure}

\textbf{Clique expansion.}
Arguably the most popular technique for reducing a hypergraph to a graph is the \emph{clique expansion}, which replaces a hyperedge $e \in E$ with a (possibly) weighted clique on all nodes in $e$. In the case of 3-node hyperedges, if all edges are given weight $1/2$, this weighting preserves the all-or-nothing hyperedge splitting penalty~\cite{ihler1993modeling,vannelli1990Gomoryhu}. In order to apply the clique expansion to an entire hypergraph, one can introduce an edge between a pair of nodes $i$ and $j$ for each hyperedge containing both nodes, and later merge all edges into one weighted edge by summing up the weights from each individual hyperedge expansion.

\textbf{Star expansion.} 
In the star expansion, a new node $v_e$ is introduced for every hyperedge $e$. Node $v_e$ is attached by an undirected edge to every node in the hyperedge $e$. When computing minimum cuts in the resulting graph, for any partition of $e$, node $v_e$ will be placed with the side of the cut that contains a majority of nodes from $e$. 

\textbf{Lawler expansion.}
We use the term \emph{Lawler expansion} to refer to the approach Lawler~\cite{lawler1973} used to convert an instance of the all-or-nothing hypergraph $s$-$t$ cut problem into a graph $s$-$t$ cut problem. Given a hyperedge $e \in E$, introduce two auxiliary nodes $e'$ and $e''$. For each $v \in e$, add a directed edge of weight infinity from $v$ to $e'$, and a directed edge of weight infinity from $e''$ to $v$. Finally, place a directed edge from $e'$ to $e''$ with weight 1. In the case of weighted hypergraphs, the edge from $e'$ to $e''$ is assigned the weight $w_e$ of hyperedge $e$. Note that the only way to separate the nodes of $e$ into an $s$-side cluster and a $t$-side cluster \emph{without} cutting an infinite weight edge, will require cutting the edge from $e'$ to $e''$. Thus, any splitting of the nodes in the reduced graph will result in the same penalty of 1 (or $w_e$ for weighted hypergraphs).

\subsubsection*{Illustration on  3-uniform hypergraphs}
The star, clique, and Lawler expansions all provide a way to exactly solve the minimum $s$-$t$ cut problems in 3-uniform hypergraphs under the all-or-nothing penalty. Splitting up any 3-node hyperedge into two clusters will always place two nodes in one cluster and one in the other. We can quickly see from Figure~\ref{fig:expansions} why this will lead to a penalty of exactly 1 when any of these expansions is used to reduce the hypergraph to a graph. Figure~\ref{fig:hypergraph9} illustrates the result of converting a 3-uniform hypergraph $s$-$t$ cut problem to a graph $s$-$t$ problem using each of these techniques. The minimum $s$-$t$ cut solution on all reduced graphs is the same. This can be seen as a consequence of Observation~\ref{obs:23}.

Although all of these expansion techniques lead to polynomial time algorithms for the 3-uniform problem, the computational complexity of solving the reduced minimum $s$-$t$ cut problem depends on which method is applied. The clique expansion has the advantage that it does not require auxiliary vertices. However, if the original hypergraph is unweighted, all edges in the star expansion will be unweighted as well. Since unweighted minimum $s$-$t$ cuts are easier to compute than weighted $s$-$t$ cuts~\cite{even1975network}, depending on the edge structure of the reduced hypergraph, it may be more efficient to apply the star expansion. Among the three expansion techniques considered here, the Lawler expansion is the least efficient approach for solving the minimum $s$-$t$ cut problem in 3-uniform hypergraphs, due to the larger number of auxiliary vertices and (weighted) edges that it requires. However, this is the only approach among the three that can be used to exactly solve the all-or-nothing \hc{} problem in hypergraphs with arbitrary-sized hyperedges, and thus it plays an important role in solving higher-dimensional problems.
\begin{figure}[t]
	\centering
	\subfloat[Hypergraph \label{fig:hyper}]
	{\includegraphics[width=.24\linewidth]{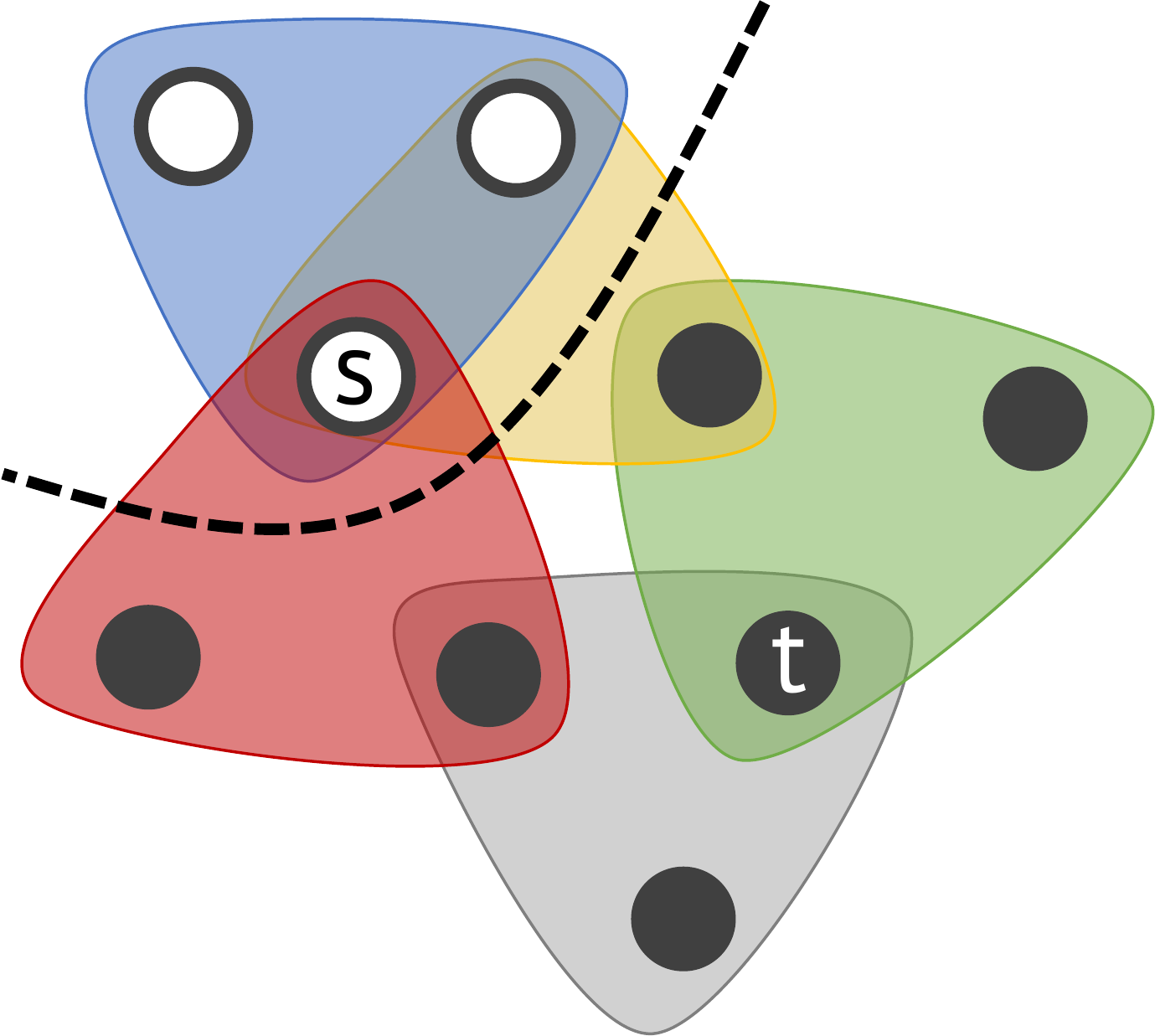}}\hfill
	\subfloat[Clique expansion \label{fig:clique}]
	{\includegraphics[width=.24\linewidth]{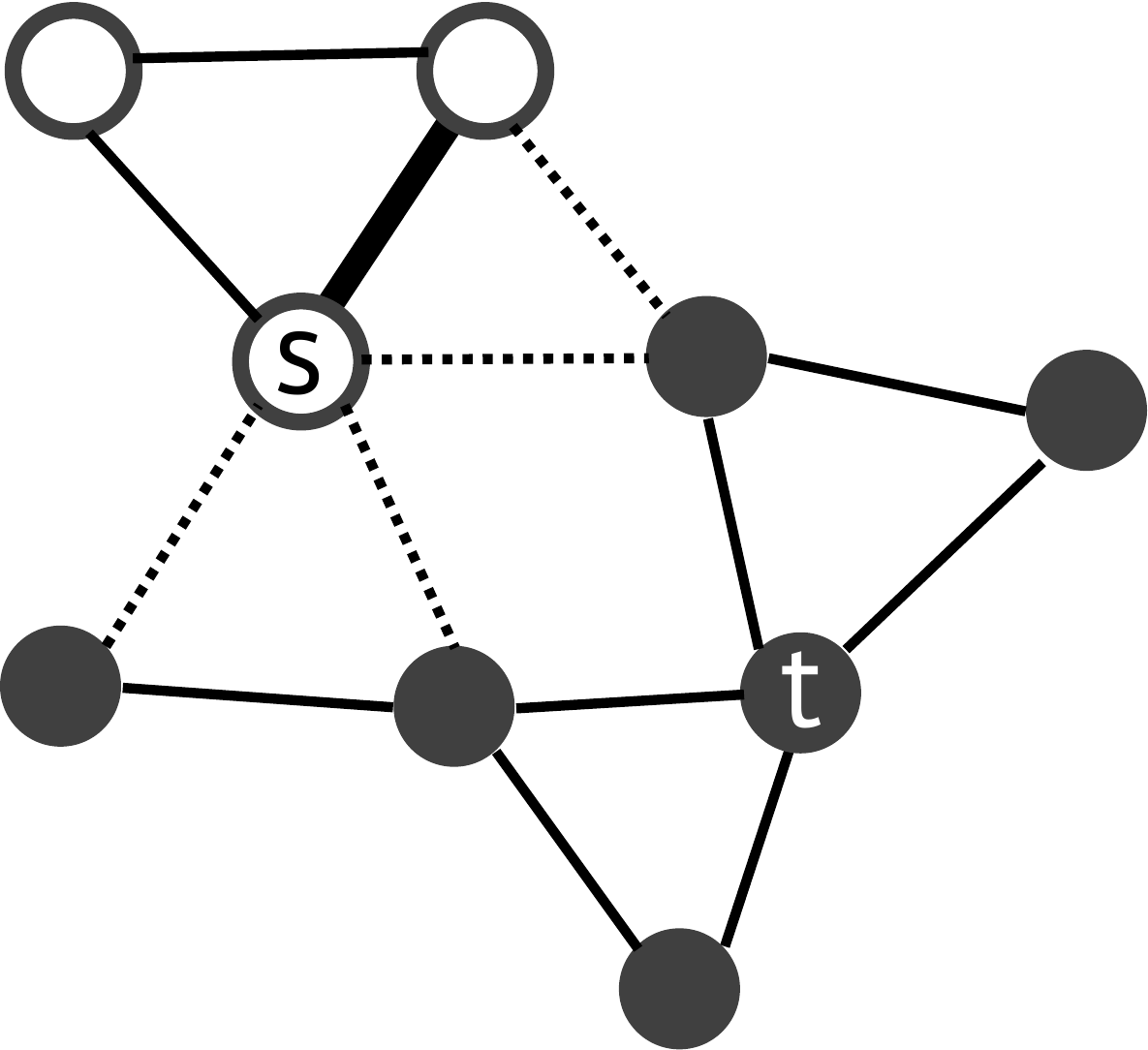}}\hfill
	\centering
	\subfloat[Star expansion\label{fig:star}]
	{\includegraphics[width=.24\linewidth]{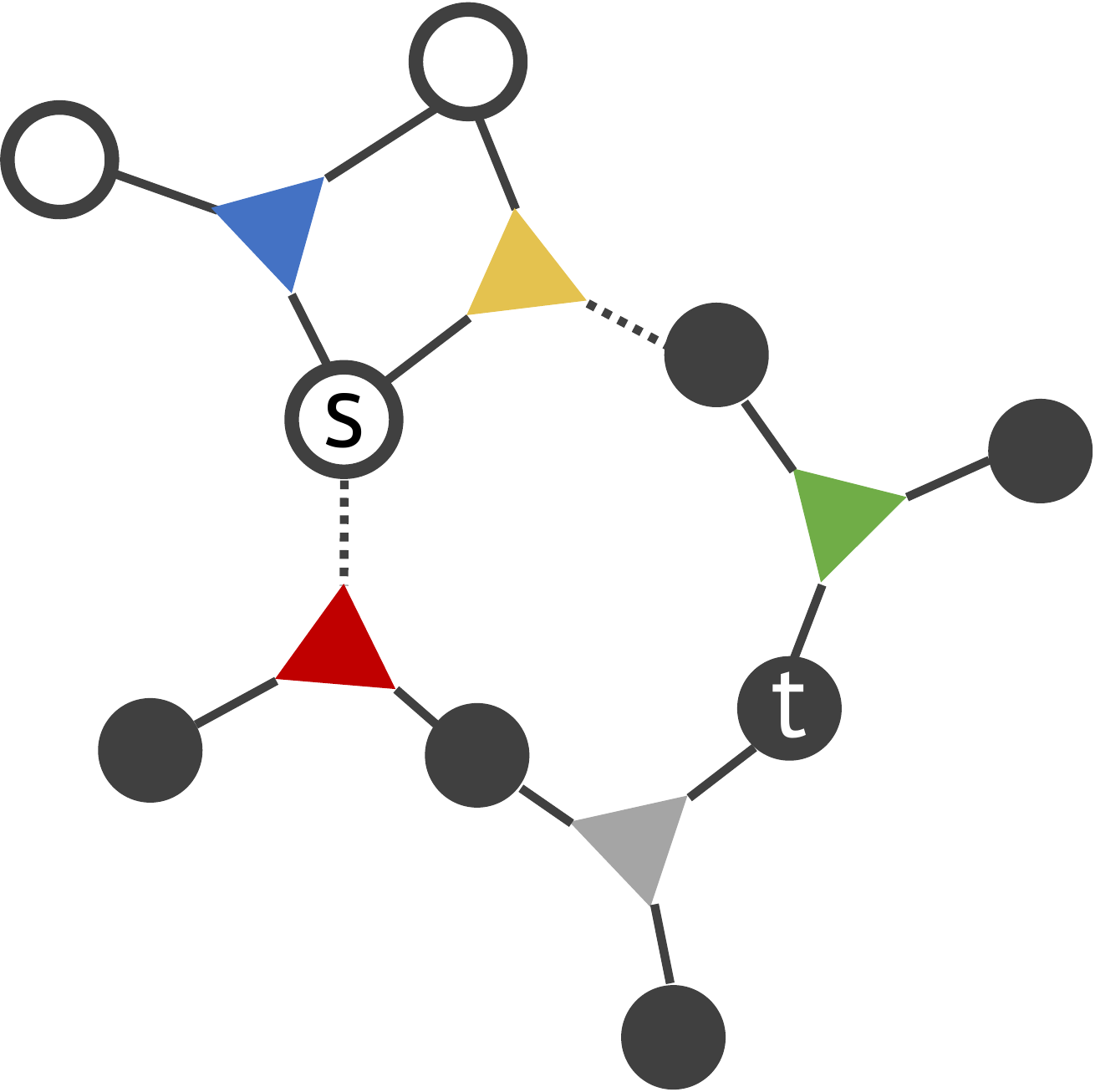}}\hfill
	\subfloat[Lawler expansion\label{fig:lawler}]
	{\includegraphics[width=.24\linewidth]{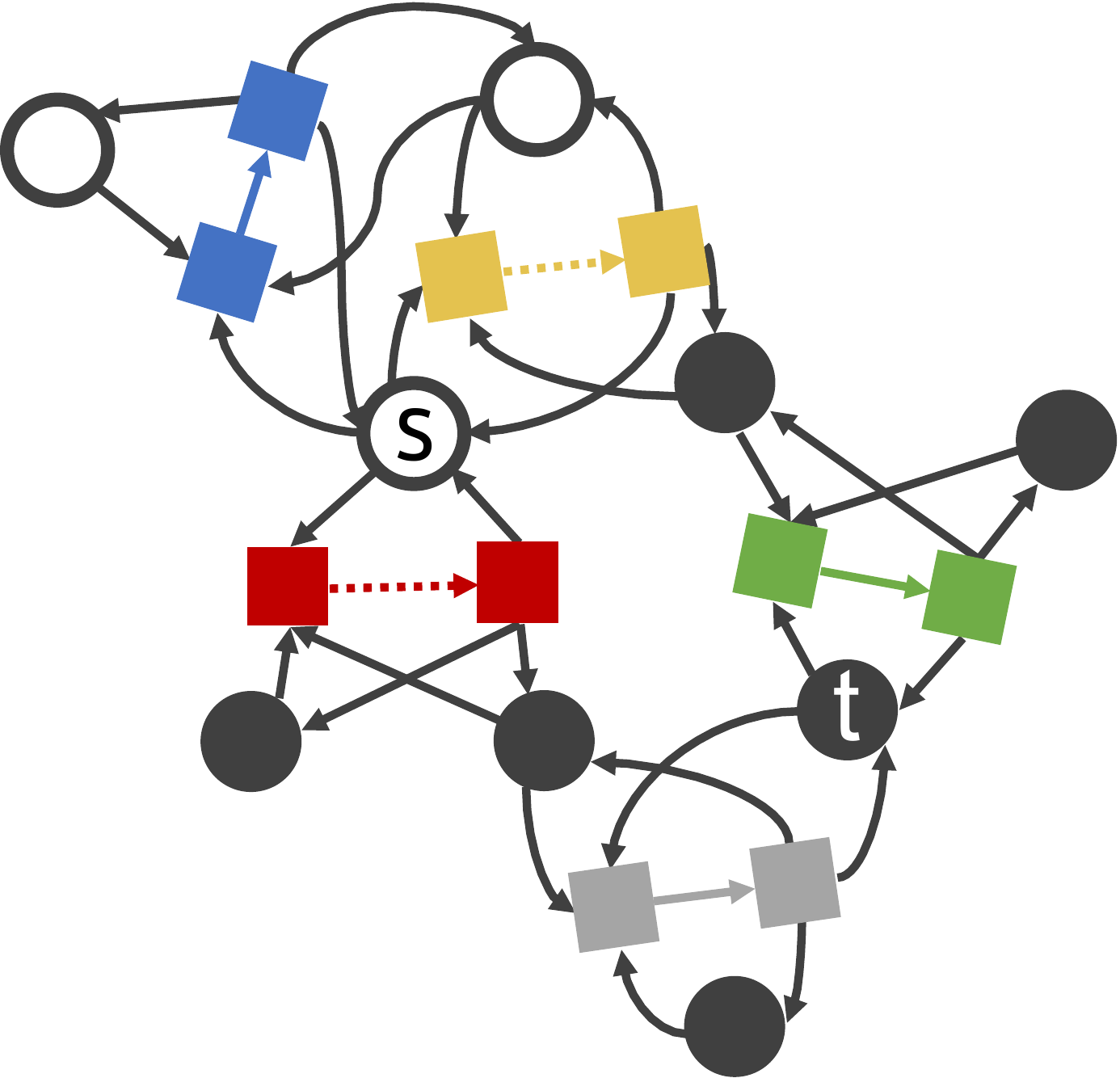}}
	\caption{A small hypergraph minimum $s$-$t$ cut problem (a) is converted into a graph using three different expansion techniques. Because the hypergraph is 3-uniform, solving the minimum $s$-$t$ cut problem on any of the resulting graphs will produce the minimum $s$-$t$ cut partition of the hypergraph under the all-or-nothing splitting function. In the clique expansion (b), all edges have weight $1/2$, except one edge with weight 1 (drawn with a thicker line). All edges in the star expansion (c) have weight 1. For the Lawler expansion (d), all black edges have weight infinity. The minimum $s$-$t$ cut score is two. We illustrate the solution with the minimum number of source side nodes, shown in white. Dotted lines indicate cut edges for each expansion technique.}
	\label{fig:hypergraph9}
\end{figure}

\subsection{Generalized Framework for Hyperedge Expansion}
\label{sec:stgadget}
The star and clique expansion are recognized as two approaches for the same high-level goal of reducing a hypergraph to a graph~\cite{Agarwal2006holearning,Hein2013,Huang2015,ihler1993modeling,zien1999}. The Lawler and star expansions have also appeared together~\cite{heuer_et_al:LIPIcs:2018:8936}, albeit to a lesser extent. These three approaches have not previously been simultaneously viewed as different strategies for the same overall goal. Motivated by these three expansions, we present a new generalized framework for reducing hypergraphs to graphs in a minimum-cut preserving way. 

\subsubsection*{Relation to previous work on hyperedge expansion}
Previously, Ihler et al.~\cite{ihler1993modeling} considered a broad framework for hyperedge expansion, with a nearly identical goal of understanding when hypergraphs can be represented by graphs with the same min-cut properties. Their conclusions were largely negative:\ except in the case of 3-node hyperedges, the all-or-nothing cut penalty cannot be exactly modeled by introducing undirected edges and auxiliary nodes. Recent research in this direction has thus focused on graph representations that approximate hypergraph cuts as best as possible~\cite{panli2017inhomogeneous}.
While our motivation is similar, our approach differs in two key ways, leading instead to several positive results for graph reduction, specifically for the \hstgen{} problem. First, we allow the addition of directed edges, given that directed $s$-$t$ cut problems are also well-defined, and since this admits Lawler-type expansions in our analysis. Second, since we do not restrict to the all-or-nothing splitting function, what Ihler et al.\ viewed as a limitation of certain hyperedge expansions turns out to be a useful feature in our framework.
If a hyperedge expansion technique \emph{does not} perfectly model the all-or-nothing splitting penalty, then it must model a different type of splitting penalty that may in fact be useful in different applications.
By combining different hyperedge expansion techniques, we will show how to model a broad range of hypergraph $s$-$t$ cut problems as graph $s$-$t$ cut problems.

\subsubsection*{Hypergraph $s$-$t$ gadgets}
We formally define the concept of a \emph{hyperedge $s$-$t$ gadget}, a type of hyperedge expansion that can be used to reduce \hstgen{} problems to related minimum $s$-$t$ cut problems on graphs.
\begin{definition}
	\label{def:gadget}
	Let  $e \in E$ be a hyperedge in a hypergraph $\mathcal{H} = (V,E)$. 
	A {hyperedge $s$-$t$ cut gadget} on $e$ is a graph $G_e = (V', E')$ with node set $V' =  e \cup \hat{V}$, where $\hat{V}$ is an auxiliary node set, and $E'$ is set of weighted and possibly directed edges. The gadget is associated with a \textbf{gadget splitting function} $\hat{\vw}_e\colon S \subseteq e \rightarrow \mathbb{R}^+$ defined by
	\begin{equation}
	\label{eq:gadgetfunction}
	\hat{\vw}_e(S) = \minimum_{\substack{ T \subseteq V'\\T\cap e = S}} \, \,\cut_{G_e}(T) \,,
	\end{equation}
	where $\cut_{G_e}(T) = \sum_{i \in S} \sum_{j\in V' \backslash T} w_{ij}$ is the standard graph cut function on $G_e$.
\end{definition}
The gadget splitting function gives a formal way to compare hyperedge splitting scores $\vw_e(S)$ in the hypergraph, with a gadget splitting function score $\hat{\vw}_e(S)$, for any $S \subseteq e$. The minimization in~\eqref{eq:gadgetfunction} encodes how, when solving the minimum $s$-$t$ cut problem on a graph formed by concatenating many hypergraph $s$-$t$ cut gadgets, we always arrange auxiliary vertices in a way that leads to a locally minimal penalty at each hyperedge gadget. In other words, given any fixed bipartition $\{S, e\backslash S\}$ of a hyperedge $e$ with $S \subset e$, the gadget splitting function implicitly ``moves'' any auxiliary nodes of the gadget in a way that yields the smallest penalty, given this fixed bipartition of $e$.

Gadget splitting functions are directly related to hyperedge splitting functions (Section~\ref{sec:splitting}), though an important distinction must be made. A \emph{hyperedge} splitting function is any type of penalty function defined on a hyperedge, satisfying properties~\eqref{eq:nn},~\eqref{eq:symmetric}, and~\eqref{eq:zerononcut} in Definition~\ref{def:splitting}. A \emph{gadget} splitting function is a function defined using a small {graph}, designed for the purpose of \emph{modeling} a hyperedge splitting function. We distinguish between these two types of functions by including a hat~(~$\hat{}$~) over the gadget splitting functions: $\hat{\vw}_e$. We will say that a gadget splitting function $\hat{\vw}_e$ \emph{models} a hyperedge splitting function $\vw_e$ if $\hat{\vw}_e = \vw_e$. We will also say that a \hstgen{} problem is \emph{graph reducible} if each of its hyperedge splitting functions can be modeled by some gadget splitting function. 

\renewcommand{\arraystretch}{1.5}
\begin{table}[t]
  \caption{Examples of cardinality-based gadget splitting functions derived from common expansions (see also Figures~\ref{fig:expansions}~and~\ref{fig:hypergraph9}).
    A hypergraph $s$-$t$ gadget is a small graph $G_e = (V', E')$ constructed from a hyperedge $e \in E$. The node set $V'$ is made up of the original nodes in $e$ along with a set of \emph{auxiliary vertices} $\hat{V}$. The edges in $E'$ can be both weighted and directed. The gadget splitting function is the result of applying equation~\eqref{eq:gadgetfunction} from Definition~\eqref{def:gadget} to $G_e$. In the case of the Lawler gadget, this provides another way to formalize Lawler's observation that the all-or-nothing splitting function can be modeled using a small directed graph. We have displayed the unit-weight version for each gadget. Other weightings can be obtained by scaling all gadget edges by a nonnegative weight.}
	\centering
	\begin{tabular}{llll}
		\toprule 
		Gadget name & $\hat{V}$ & $|E'|$  & Gadget Splitting Function\\
		\midrule
		Lawler gadget & \{$e'$, $e''$\} & $2|e| + 1$ &  $\hat{\vw}_e(S) = \begin{cases} 0 & \text{if $S \in \{e, \emptyset\}$} \\ 1 & \text{otherwise}\end{cases}$ \\
		Clique gadget & $\emptyset$ & ${|e| \choose 2}$ & $\hat{\vw}_e(S) = |S| \cdot|e \backslash S|$ \\
		Star gadget & $v_e$ & $|e|$ & $\hat{\vw}_e(S) = \min \{ |S|, |e \backslash S| \}$ \\
		\bottomrule
	\end{tabular}
\label{tab:gadgetsplitting}
\end{table}

The star, clique, and Lawler expansions can all be viewed as hypergraph $s$-$t$ cut gadgets. We summarize the auxiliary vertex set, number of edges, and the gadget splitting function for each in Table~\ref{tab:gadgetsplitting}. We use the term \emph{gadget} rather than \emph{expansion} to emphasize the fact that we are now concerned with how these expansions model hyperedge splitting penalties in \hstgen{} problems. Comparing Table~\ref{tab:gadgetsplitting} to Table~\ref{tab:cardexamples}, we see that the Lawler gadget models the all-or-nothing penalty, the star expansion models the linear penalty, and the clique expansion models the quadratic penalty. In fact, the reason the linear and quadratic penalties arise in previous work (see references in Table~\ref{tab:cardexamples}) is precisely because the clique and star expansions are often used to approximate hyperedges in certain hypergraph cut problems. 

A gadget splitting function can be \emph{cardinality-based} in the same way as a hyperedge splitting function. For a gadget function to be cardinality-based, the graph $G_e = (V', E')$ must have an edge and auxiliary node structure that is symmetric with regard to the nodes in $e$. In other words, we should still get the same gadget splitting function even if we permute the node labels.
Table~\ref{tab:gadgetsplitting} only considers the simplest versions of the Lawler, clique, and star gadgets. One can generalize them to not be cardinality-based by allowing the edges to have different weights; the clique expansion has been extended in this way to model hypergraphs with special nodes~\cite{panli2017inhomogeneous}.

\subsection{Gadget Splitting Functions are Submodular}
The clique gadget does not involve any auxiliary vertices, and therefore from Definition~\ref{def:gadget} we can see that its gadget splitting function is just a cut function on a small graph. However, for any gadget involving one or more auxiliary nodes, the gadget splitting function is not a cut function, since evaluating it requires solving a small optimization problem over different arrangements of auxiliary nodes. Despite this, we show that gadget splitting functions, just like graph cut functions, are submodular. We prove this result for a broader class of functions whose evaluation involves a minimization problem like the one in~\eqref{eq:gadgetfunction}. Let $U$ represent a ``universe" set, and let $V \subset U$ represent a fixed subset. Assume that $f\colon U \rightarrow \mathbb{R}$ is a submodular function, and define a new function $g\colon X \subset V \rightarrow \mathbb{R}$ as follows:
\begin{equation}
\label{eq:gfun}
g(X) = \minimum_{\substack{Y \subseteq U \\ Y \cap V = X}} \, f(Y)\,.
\end{equation}

\begin{theorem}
	\label{thm:gsub}
	The function $g$ is submodular.
\end{theorem}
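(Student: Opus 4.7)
The plan is to prove submodularity of $g$ by the standard ``take the minimizers and combine them'' argument, exploiting submodularity of $f$ applied to the unions and intersections of the optimal witnesses. I would first fix two arbitrary subsets $X_1, X_2 \subseteq V$ and let $Y_1, Y_2 \subseteq U$ be minimizers realizing $g(X_1)$ and $g(X_2)$ respectively, so that $Y_i \cap V = X_i$ and $f(Y_i) = g(X_i)$ for $i \in \{1,2\}$. (Since in our application $U$ is finite, the minimum in~\eqref{eq:gfun} is attained; in general one would assume attainability or argue with an $\varepsilon$-infimum and let $\varepsilon \to 0$.)

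The key observation is that intersection and union of the witnesses remain feasible for the intersection and union of the constraints. Concretely,
\begin{equation*}
(Y_1 \cap Y_2) \cap V = (Y_1 \cap V) \cap (Y_2 \cap V) = X_1 \cap X_2,
\end{equation*}
and symmetrically $(Y_1 \cup Y_2) \cap V = X_1 \cup X_2$. Hence $Y_1 \cap Y_2$ is a feasible candidate for the minimization defining $g(X_1 \cap X_2)$, and $Y_1 \cup Y_2$ is a feasible candidate for $g(X_1 \cup X_2)$. This yields $g(X_1 \cap X_2) \leq f(Y_1 \cap Y_2)$ and $g(X_1 \cup X_2) \leq f(Y_1 \cup Y_2)$.

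Adding these two inequalities and then invoking the submodularity of $f$ on $Y_1, Y_2 \subseteq U$ gives
\begin{equation*}
g(X_1 \cap X_2) + g(X_1 \cup X_2) \leq f(Y_1 \cap Y_2) + f(Y_1 \cup Y_2) \leq f(Y_1) + f(Y_2) = g(X_1) + g(X_2),
\end{equation*}
which is exactly the submodularity inequality~\eqref{eq:submodularf} for $g$. There is no real obstacle here: the proof is essentially a one-line consequence of the fact that ``partial minimization over a fixed set of coordinates'' preserves submodularity, and the only care required is confirming attainability of the minima and checking the set-theoretic identities for the restriction to $V$. To deduce the stated corollary that gadget splitting functions $\hat{\vw}_e$ are submodular, apply the result with $U = V'$, $V = e$, and $f(T) = \cut_{G_e}(T)$, which is submodular as a standard graph cut function.
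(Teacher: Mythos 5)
Your proof is correct and follows essentially the same route as the paper's: take minimizing witnesses for the two sets, observe that their intersection and union are feasible for the intersection and union constraints via the identity $(Y_1 \cap Y_2)\cap V = X_1 \cap X_2$ (and its union analogue), and then invoke submodularity of $f$. Your added remark on attainability of the minimum is a small point of extra care the paper leaves implicit.
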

\begin{proof}
	By definition, $g$ is submodular if for any $A, B \subset V$, it satisfies
	\begin{equation}
	g(A\cap B) + g(A\cup B) \leq g(A) + g(B).
	\end{equation}
	Fix any $A$ and $B$. There must exist sets $A'$ and $B'$ that are subsets of $U$ such that
	\begin{align*}
	g(A) &= f(A') \text{ with } A = A' \cap V\\	
	g(B) & = f(B')  \text{ with } B = B' \cap V.
	\end{align*}
	By submodularity of $f$, we know that
	\[f(A' \cap B') + f(A' \cup B') \leq f(A') + f(B') = g(A) + g(B). \]
	Thus, to prove submodularity of $g$, it suffices to show that
	\begin{align}
	\label{gf1}
	g(A\cap B) &\leq  f(A' \cap B')\\
	\label{gf2}
	g(A\cup B) &\leq f(A' \cup B').
	\end{align}
	To show~\eqref{gf1}, note that $(A' \cap B') \cap V = (A' \cap V) \cap (B' \cap V) = A \cap B$, so
	\[ g(A\cap B) = \minimum_{\substack{Y \subset U \\ Y \cap V = A\cap B}} f(Y) \leq f(A' \cap B'). \]
	Similarly, to show \eqref{gf2}, note that $(A' \cup B') \cap V = (A' \cap V) \cup (B' \cap V) = A \cup B$, so
	\[ g(A\cup B) = \minimum_{\substack{Y \subset U \\ Y \cap V = A\cup B}} f(Y) \leq f(A' \cup B'). \]
\end{proof}

Given that the gadget splitting function~\eqref{eq:gadgetfunction} is a special case of function~\eqref{eq:gfun}, we immediately obtain the following result.
\begin{corollary}
	\label{cor:submodgadget}
Every hyperedge $s$-$t$ cut gadget splitting function is submodular. Therefore, if a \hstgen{} problem is graph reducible, the splitting function for each of its hyperedges is submodular.
\end{corollary}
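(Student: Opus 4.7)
The plan is to derive Corollary~\ref{cor:submodgadget} as an immediate application of Theorem~\ref{thm:gsub}, by matching the definition of a gadget splitting function with the template function in equation~\eqref{eq:gfun}. Specifically, for a hyperedge $e \in E$ with gadget $G_e = (V', E')$ where $V' = e \cup \hat{V}$, I would set the ``universe'' to be $U = V'$, set the distinguished subset to be $V = e$, and take $f$ to be the graph cut function $\cut_{G_e}\colon 2^{V'} \rightarrow \mathbb{R}$ on the gadget. With these identifications, the defining formula~\eqref{eq:gadgetfunction} for $\hat{\vw}_e$ is exactly the formula~\eqref{eq:gfun} that defines $g$ in the statement of Theorem~\ref{thm:gsub}.

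The one fact I need to invoke before applying Theorem~\ref{thm:gsub} is that the graph cut function on $G_e$ is submodular, even though $G_e$ may contain directed edges. This is classical: for directed (or undirected) weighted graphs, $\cut_{G_e}(T) = \sum_{i\in T, j\notin T} w_{ij}$ satisfies $\cut_{G_e}(A) + \cut_{G_e}(B) \geq \cut_{G_e}(A\cup B) + \cut_{G_e}(A\cap B)$ by the standard edge-by-edge case check (any edge $(i,j)$ contributes at least as much to the left-hand side as to the right-hand side). With $f = \cut_{G_e}$ confirmed to be submodular, Theorem~\ref{thm:gsub} immediately yields submodularity of $\hat{\vw}_e$, proving the first sentence of the corollary.

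For the second sentence, I would unpack the definition of \emph{graph reducible} given just above Table~\ref{tab:gadgetsplitting}: a \hstgen{} problem is graph reducible if for every hyperedge $e$ there is a gadget whose splitting function $\hat{\vw}_e$ models (i.e.\ equals) the hyperedge splitting function $\vw_e$. Since each such $\hat{\vw}_e$ is submodular by the first part, and submodularity is preserved under equality of functions, each $\vw_e = \hat{\vw}_e$ is submodular, as claimed.

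I do not anticipate a substantive obstacle here, since the hard content is already packaged in Theorem~\ref{thm:gsub}; the only point that needs a brief word of care is confirming that the cut function on a \emph{directed} gadget is still submodular, so that Theorem~\ref{thm:gsub} applies uniformly to all three (and more general) gadget types considered in the paper.
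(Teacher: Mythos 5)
Your proposal is correct and follows essentially the same route as the paper, which derives Corollary~\ref{cor:submodgadget} directly by observing that the gadget splitting function~\eqref{eq:gadgetfunction} is an instance of the function~\eqref{eq:gfun} in Theorem~\ref{thm:gsub} with $U = V'$, $V = e$, and $f = \cut_{G_e}$. Your extra remark explicitly verifying that the directed graph cut function is submodular is a point the paper leaves implicit (it cites this as classical in Section~\ref{sec:submodular}), so your write-up is, if anything, slightly more careful.
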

This corollary immediately restricts the class of \hstgen{} problems that can be solved via reduction to a minimum $s$-$t$ cut problem in graphs.
Next we turn to sufficient conditions for graph reducibility in the case of cardinality-based penalties. 

\subsection{Submodular $+$ Cardinality-Based Implies Graph Reducible}
\label{sec:reduction}
Consider an $r$-node hyperedge $e \in E$ with splitting function $\vw_e$. If the splitting function is cardinality-based, then it is characterized completely by $q = \floor*{\frac{r}{2}}$ weights, which we will denote by $w_i$ for $i = 1, 2, \hdots q$, where
\begin{equation}
\vw_e(S) = w_i \text{ for every $S \subset e$ such that $\min \{ |S|, |e \backslash S| \} = i$}.
\end{equation}
In other words, for cardinality-based splitting functions, we only need to consider how many nodes of a hyperedge are on the \emph{small side} of a split. Our goal is to understand which $r$-node hyperedges with cardinality-based splitting functions can be modeled using a hypergraph $s$-$t$ cut gadget.
Corollary~\ref{cor:submodgadget} says that if such a splitting function can be modeled by an $s$-$t$ cut gadget, then it must be submodular. 
The question, then, is which cardinality-based submodular splitting functions can be modeled by $s$-$t$ cut gadgets? The answer, perhaps surprisingly, is all of them.

\subsubsection*{A new cardinality-based gadget}
We start by introducing a new hypergraph~$s$-$t$ cut gadget, which is similar in spirit to the Lawler gadget, and whose gadget splitting function depends on an integer parameter $b$.
We call this the cardinality-based gadget, or CB-gadget (Figure~\ref{fig:gcbgadget}).
\begin{figure}[t]
	\centering
	\includegraphics[width=.5\textwidth]{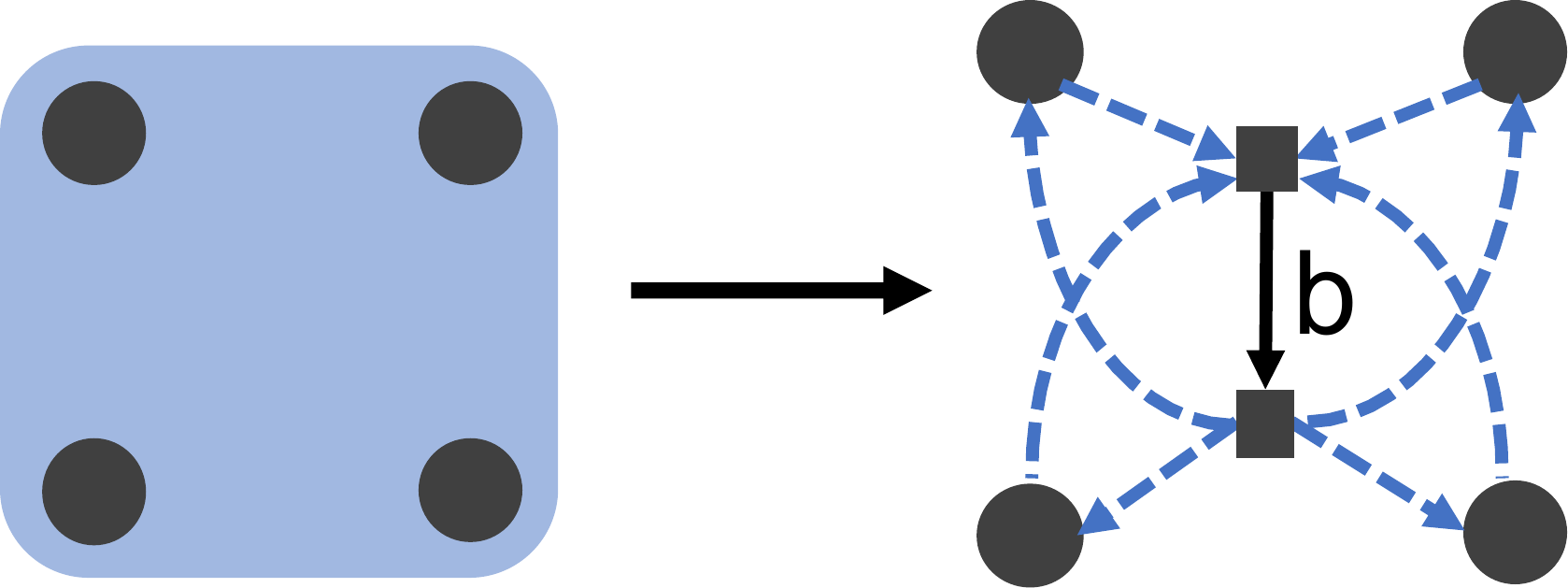}
	\caption{In the cardinality-based gadget, all edges have weight 1, except for the edge between auxiliary nodes, which has weight $b \in \mathbb{N}$.} 
	\label{fig:gcbgadget}
\end{figure}
For a hyperedge $e \in E$, the CB-gadget with parameter $b$ is formed as follows:
\begin{enumerate}
	\item Introduce two auxiliary vertices $e'$ and $e''$.
	\item For each $v \in e$, add a directed edge of weight 1 from $v$ to $e'$ and a directed edge of weight 1 from $e''$ to $v$.
	\item Add a directed edge of weight $b \in \mathbb{N}$ from $e'$ to $e''$.
\end{enumerate}	
When $b = 1$, the CB-gadget models the all-or-nothing splitting penalty, just as the standard Lawler expansion, and has the additional aesthetic appeal
of not needing edge weights. The only difference is that for the CB-gadget with $b = 1$, the auxiliary nodes may end up on different sides of a given bipartition, as compared to where they end up if the Lawler gadget were used. The overall splitting scores and the placement of nodes from $e$ remain the same. The benefit of the CB-gadget comes from setting $b$ to be different integers.
\begin{theorem}
\label{thm:gcbsplit}
 Given a hyperedge $e \in E$ and a subset of nodes $S \in e$, the CB-gadget with parameter $b$ has the following gadget splitting function:
\begin{equation}
\label{eq:genlawler}
\hat{\vw}_e(S) = \min \{|S|, |e \backslash S|, b  \}.
\end{equation}
\end{theorem}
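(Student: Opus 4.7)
The plan is to prove Theorem~\ref{thm:gcbsplit} by a direct case analysis on the placement of the two auxiliary vertices. By Definition~\ref{def:gadget}, evaluating $\hat{\vw}_e(S)$ amounts to minimizing the directed cut of $G_e$ over all $T \subseteq V'$ whose intersection with $e$ equals $S$. Since $S$ fixes the side of every original node in $e$, the only remaining degrees of freedom are the binary choices ``$e' \in T$?'' and ``$e'' \in T$?'', giving exactly four subcases to check.

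For each subcase, I would tally the contribution of the three families of directed edges, recalling that a directed edge $u \to v$ of weight $w$ contributes $w$ to $\cut_{G_e}(T)$ iff $u \in T$ and $v \notin T$. Writing $s = |S|$ and $r = |e|$, the four tallies come out as follows: if $e' \in T$ and $e'' \in T$, only the edges $e'' \to v$ with $v \in e \setminus S$ are cut, giving $r-s$; if $e' \in T$ and $e'' \notin T$, only the middle edge $e' \to e''$ is cut, giving $b$; if $e' \notin T$ and $e'' \in T$, both the $v \to e'$ edges for $v \in S$ and the $e'' \to v$ edges for $v \in e\setminus S$ are cut, giving $s + (r-s) = r$; and if $e' \notin T$ and $e'' \notin T$, only the edges $v \to e'$ with $v \in S$ are cut, giving $s$. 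Observing that the third case is always dominated by the others (since $r \geq \max\{s, r-s\}$), the overall minimum equals
\begin{equation*}
\min\{s,\ r-s,\ b\} \;=\; \min\{|S|,\ |e\setminus S|,\ b\},
\end{equation*}
which is precisely~\eqref{eq:genlawler}.

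There is no real obstacle here beyond bookkeeping: the proof is a finite enumeration of four scenarios, and the key subtlety is respecting the edge orientations so that, for example, the edges $e'' \to v$ contribute only when $e'' \in T$ and $v$ is on the opposite side. It is also worth sanity-checking the degenerate cases $S = \emptyset$ and $S = e$: the first is realized by the choice $e' \notin T$, $e'' \notin T$ (giving cost $s = 0$), and the second by $e' \in T$, $e'' \in T$ (giving cost $r - s = 0$), so the formula holds throughout the full power set $2^e$, confirming that the CB-gadget realizes the advertised cardinality-based splitting function.
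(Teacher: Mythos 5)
Your proof is correct and takes essentially the same approach as the paper: a case analysis over the placement of the two auxiliary nodes. Your version is in fact more complete than the paper's, which only sketches the optimal auxiliary placement in each of the three regimes, whereas you enumerate all four placements, tally each cut cost ($r-s$, $b$, $r$, $s$), and observe that the dominated case $r$ never matters — so the minimum is exactly $\min\{|S|, |e\setminus S|, b\}$.
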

\begin{proof}
 If $|S| \geq |e \backslash S|$, then it is better to put $e'$ and $e''$ in the same cluster as $S$ and cut all edges from $e''$ to $e\backslash S$. If $|S| < |e \backslash S|$, it is cheaper to cut all the edges from $S$ to $e'$. If $b$ is smaller that both $|S|$ and $|e \backslash S|$, then the cheapest cut for this bipartition is to put $e'$ with $S$, and $e''$ with $\bar{S}$, and cut the edge from $e'$ to $e''$. 
\end{proof}

\subsubsection*{Combining gadgets}
We can use a set of CB-gadgets with different parameters $b$ as a \emph{basis} set for constructing more sophisticated gadgets. Let $e $ be an $r$-node hyperedge with a cardinality-based and submodular splitting function $\vw_e$. Introduce $q = \floor{r/2}$ different CB-gadgets, with $e'_j$ and $e''_j$ representing the auxiliary nodes of the $j$th CB-gadget. Assign the weight from $e'_j$ to $e''_j$ to be $b_j = j$, and then scale all edge weights in this gadget by a scaling factor $c_j \geq 0$. The resulting combined gadget is made up of the original node set $e$, plus $2q$ auxiliary nodes and $q(2|e| + 1)$ edges.

Given a fixed set of scaling weights $c_1, c_2, \hdots c_q$, let $\hw_i$ be the gadget splitting function output for the combined gadget when there are $i$ nodes on the small side of the bipartition of $e$. If $i = 1$, then the $j$th gadget will return a penalty of 1 times the weight $c_j$, resulting in a combined penalty of $c_1 + c_2 + \hdots c_q = \hat{w}_1$.
In general, when there are $i$ nodes on the small side of the split, the combined gadget splitting score is
\begin{equation}
\hw_i = \sum_{j = 1}^q A_{ij} c_j\,,
\end{equation}
where $A_{ij} = \min \{i, j\}$. Let 
$\vc = [c_1 \,\, c_2 \,\, \cdots \,\, c_q]^T$
be a vector of scaling weights, and let 
$\hat{\vw} = [\hw_1 \,\, \hw_2 \,\, \cdots \,\, \hw_q]^T$
be the gadget splitting function penalties for a given choice of $\vc$. Define $\mA$ to be the $q \times q$ matrix whose $ij$ entry is $A_{ij}$. Note then that $\mA \vc = \hat{\vw}$. We write this matrix equation out explicitly for a large $q$ to illustrate a pattern in the relationship between $\vc$ and $\hat{\vw}$.
\begin{equation}
\label{matrixeq}
\begin{bmatrix}
1 & 1 & 1 & \cdots & 1 \\
1 & 2 & 2 &  \cdots & 2\\
1 & 2 & 3 &  \cdots & 3\\
\vdots & \vdots & \vdots  & \ddots & \vdots\\
1 & 2 & 3  &\cdots & q \\
\end{bmatrix}
\begin{bmatrix}
c_1 \\ c_2 \\ c_3 \\\vdots \\ c_q
\end{bmatrix}
= 
\begin{bmatrix}
\hw_1 \\ \hw_2 \\ \hw_3  \\\vdots \\ \hw_q
\end{bmatrix}.
\end{equation}
For a fixed $\vc$, this matrix equation tells us the exact gadget splitting function for the combined gadget. More importantly, by inverting the system~\eqref{matrixeq}, we can completely characterize which splitting functions our combined gadget can model. The inverse of $\mA$ is a tridiagonal matrix where $\mA^{-1}_{qq} = 1$, $\mA^{-1}_{ii} = 2$ for $i = 1, 2, \hdots, q-1$, and all entries directly above and below the main diagonal are $-1$. The inverted system is
\begin{equation}
\label{invmatrixeq}
\begin{bmatrix}
2 & -1  & \cdots & 0 & 0\\
-1 & 2 & \cdots & 0 & 0\\
\vdots  & \vdots  & \ddots & \vdots  & \vdots\\
0 & 0 &\cdots & 2 &-1 \\
0 &  0 &\cdots & -1 &1 \\
\end{bmatrix}
\begin{bmatrix}
\hw_1 \\ \hw_2  \\ \vdots \\\hw_{q-1} \\ \hw_q
\end{bmatrix}
= 
\begin{bmatrix}
2\hw_1 - \hw_2 \\ 2\hw_2 - \hw_1 - \hw_2  \\\vdots \\ 2\hw_{q-1} -\hw_{q-2}- \hw_{q}\\ \hw_q - \hw_{q-1}
\end{bmatrix}
= \begin{bmatrix}
c_1 \\ c_2  \\\vdots\\ c_{q-1} \\ c_q
\end{bmatrix}.
\end{equation}

The edges in the reduced graph must all be positive, in order to apply algorithms for the minimum $s$-$t$ cut problem, and therefore we must enforce $c_j \geq 0$ for all $j = 1, 2, \hdots q$. This means that in order for a hyperedge splitting function $\vw_e$ to be modelable by our combined gadget, its penalty scores $w_i$ for $i = 1, 2, \hdots, q$ must satisfy:
\begin{align}
\label{firstconstraint}
2w_1 &\geq w_2 \\
\label{mainconstraint}
2w_j &\geq w_{j-1} + w_{j+1} \text{ for $j = 2, \hdots, q-1$} \\
\label{extra}
w_q & \geq w_{q-1} \,.
\end{align}
Furthermore, from~\eqref{matrixeq}, the nonnegativity of $\vc$, and the structure of $\mA$,  the splitting scores for the combined gadget will always satisfy $0 \leq \hw_1 \leq \hw_2 \leq \hdots \leq \hw_q$. Therefore, in order for a hyperedge splitting function $\vw_e$ to be modeled by our combined gadget, it must satisfy the same inequality (which subsumes~\eqref{extra}):
\begin{equation}
\label{monotone}
0 \leq w_1 \leq w_2 \leq \hdots \leq w_q.
\end{equation}
All of these constraints are satisfied by submodular cardinality-based penalties.
\begin{lemma}
	\label{lem:cbsub}
	All cardinality-based submodular hyperedge splitting functions on $r$-node hyperedges satisfy inequalities~\eqref{firstconstraint}, \eqref{mainconstraint}, and~\eqref{monotone}.
\end{lemma}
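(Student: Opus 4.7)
The plan is to prove all three inequality groups directly from submodularity (combined with the cardinality-based property and symmetry), by choosing pairs of subsets $A, B \subseteq e$ with carefully prescribed cardinalities for the quantities $|A|, |B|, |A \cap B|, |A \cup B|$, and applying the submodular inequality $\vw_e(A) + \vw_e(B) \geq \vw_e(A \cap B) + \vw_e(A \cup B)$. The key observation is that, because $\vw_e$ is cardinality-based and symmetric, we have $\vw_e(S) = w_{\min\{|S|,\,r - |S|\}}$ for every $S \subseteq e$ (with $w_0 = 0$ by the non-split-ignoring property), so the value of $\vw_e$ on any set depends only on its size.

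For the concavity inequalities \eqref{firstconstraint} and \eqref{mainconstraint}, I would fix $j \in \{1, 2, \dots, q-1\}$ and pick any two subsets $A, B \subseteq e$ of size $j$ whose intersection has size $j-1$ and whose union has size $j+1$. Such sets exist since $j+1 \leq q \leq r/2 \leq r$. Then $\vw_e(A) = \vw_e(B) = w_j$, $\vw_e(A \cap B) = w_{j-1}$, and $\vw_e(A \cup B) = w_{j+1}$ (the latter using that $j+1 \leq q$, so the argument stays on the ``small side''). Submodularity gives $2 w_j \geq w_{j-1} + w_{j+1}$, which for $j=1$ (with $w_0 = 0$) is \eqref{firstconstraint} and for $j \geq 2$ is \eqref{mainconstraint}.

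For the monotonicity chain \eqref{monotone}, the non-negativity $w_1 \geq 0$ is immediate from Definition~\ref{def:splitting}, so the task is to show $w_{j-1} \leq w_j$ for each $j = 2, \dots, q$. The move I would use here is to exploit symmetry to get a larger set on the ``other side'' of the split. Pick $A \subseteq e$ of size $j$ and $B \subseteq e$ of size $r - j$ such that $|A \cap B| = j-1$ and hence $|A \cup B| = r - j + 1$; one can realize this concretely by choosing a common $(j-1)$-subset, one extra element for $A$ outside $B$, and filling $B$ with the remaining $r-2j+1$ elements. Since $j \leq q$, both $A$ and $B$ have ``small-side'' size $j$, so $\vw_e(A) = \vw_e(B) = w_j$. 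Meanwhile, $|A \cap B| = j-1 \leq q$ gives $\vw_e(A \cap B) = w_{j-1}$, and $|A \cup B| = r - j + 1$ has small-side size $j-1$ (since $j-1 \leq r - j + 1$ whenever $j \leq q$), so by symmetry $\vw_e(A \cup B) = w_{j-1}$ as well. Submodularity then collapses to $2 w_j \geq 2 w_{j-1}$, i.e.\ $w_j \geq w_{j-1}$.

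The main conceptual hurdle is the monotonicity step: submodularity by itself does not imply monotonicity, and indeed the concavity inequalities \eqref{firstconstraint}--\eqref{mainconstraint} alone permit decreasing sequences. The essential trick is to feed submodularity a pair $(A, B)$ whose union is \emph{large enough that symmetry flips it back to the small-side regime}, so that both $\vw_e(A \cap B)$ and $\vw_e(A \cup B)$ equal $w_{j-1}$. Once one sees this, the rest is a matter of verifying the cardinality bookkeeping for each $j \leq q$, which I would do in one short paragraph at the end, handling $r$ even and $r$ odd uniformly since the constraint reduces to $j \leq \lfloor r/2 \rfloor$ in both cases.
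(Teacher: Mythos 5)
Your proposal is correct and follows essentially the same route as the paper: the concavity inequalities come from submodularity applied to two $j$-element sets sharing $j-1$ elements, and the monotonicity chain comes from pairing a $j$-set with an $(r-j)$-set so that symmetry sends both the intersection and the union to the value $w_{j-1}$ (your $(A,B)$ with $|A|=j$, $|B|=r-j$ is the paper's $(T_i,S_i)$ with $j=i+1$). The cardinality bookkeeping you sketch checks out, so no gaps.
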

\begin{proof}
	Let $e = \{v_1, v_2, \hdots v_r\} $ be an $r$-node hyperedge with a submodular and cardinality-based splitting function $\vw_e$, and let $q = \floor*{\frac{r}{2}}$. To prove the inequalities in~\eqref{mainconstraint}, first choose any $j \in \{2,3, \hdots (q-1)\}$. Define two sets of nodes with $j$ nodes each: $S_j = \{ v_1, v_2, v_3, \hdots ,v_j\}$ and $T_j = \{ v_2, v_3, \hdots v_j, v_{j+1}\}$.
	Since the splitting function $\vw_e$ is submodular, we see that
	\begin{equation}
	2w_j = \vw_e(S_j) + \vw_e(T_j) \geq \vw_e(S_j \cap T_j) + \vw_e(S_j \cup T_j) = w_{j-1} + w_{j+1}.
	\end{equation}
	Thus, \eqref{mainconstraint} is satisfied. If $S_1 = \{v_1\}$ and $T_1 = \{v_2\}$, then $S_1 \cap T_1 = \emptyset$. Therefore, since $\vw_e(\emptyset) = 0$, we also satisfy inequality~\eqref{firstconstraint}. To prove constraint~\eqref{monotone}, which subsumes~\eqref{extra}, we need to apply the symmetry constraint satisfied by all hyperedge splitting functions~\eqref{eq:symmetric}. For any $i \in \{1, 2, \hdots q-1\}$, define sets
	$S_i = \{ v_1, v_2, v_3, \hdots ,v_{r - (i+1)}\}$ and $T_i = \{ v_{r-2i}, v_{r-2i + 1}, \hdots , v_{r-i}\}$. Observe that 
	\begin{align*}
	|S_i| &= r - (i+1) \implies \vw_e(S_i) = w_{i+1} \\
	|T_i| &= (r-i) - (r-2i) + 1 = (i+1) \implies \vw_e(T_i) = w_{i+1}\\
	|S_i \cap T_i| &= i \implies \vw_e(S_i \cap T_i) = w_i \\
	|S_i \cup T_i| &= (r-i) \implies \vw_e(S_i \cup T_i) = w_i\,.
	\end{align*}
	By the definition of submodularity we know that $2w_{i+1} \geq 2w_{i} \implies w_{i+1} \geq w_i$. 
\end{proof}
Combining Lemma~\ref{lem:cbsub} and Corollary~\ref{cor:submodgadget}, we conclude that inequalities~\eqref{firstconstraint}, \eqref{mainconstraint}, and~\eqref{monotone} are in fact both sufficient and necessary conditions for a cardinality-based splitting function to be submodular. These two result together completely characterize the set of cardinality-based \hc{} problems that are graph reducible. We end with a summarizing theorem.
\begin{theorem}
	\label{thm:iffsub}
	Let $\mathcal{H} = (V,E)$ be a hypergraph. If each $e \in E$ is associated with a cardinality-based splitting function $\vw_e$, then the \hstgen{} problem on $\mathcal{H}$ is graph reducible if and only if $\vw_e$ is submodular for every $e \in E$.
\end{theorem}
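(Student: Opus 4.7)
The theorem to prove is an ``if and only if'' characterization, so my plan is to handle each direction separately, and in both directions the heavy machinery has already been built earlier in the section.

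For the ``only if'' direction (graph reducible implies submodular), I would simply invoke Corollary~\ref{cor:submodgadget}. By definition, if the \hstgen{} problem on $\mathcal{H}$ is graph reducible, then for every hyperedge $e \in E$ there is a hyperedge $s$-$t$ cut gadget $G_e$ whose gadget splitting function $\hat{\vw}_e$ satisfies $\hat{\vw}_e = \vw_e$. But by Corollary~\ref{cor:submodgadget}, every gadget splitting function is submodular, so $\vw_e$ is submodular. No additional work is needed here.

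For the ``if'' direction (submodular implies graph reducible) I would argue hyperedge by hyperedge. Fix $e \in E$ with $r = |e|$, $q = \floor{r/2}$, and let $w_1, \ldots, w_q$ be the cardinality-based splitting penalties of $\vw_e$. By Lemma~\ref{lem:cbsub}, submodularity forces these penalties to satisfy \eqref{firstconstraint}, \eqref{mainconstraint}, and \eqref{monotone}. Reading these inequalities against the inverted system \eqref{invmatrixeq}, each entry on the right-hand side is nonnegative, so defining $c_j$ by that inversion gives nonnegative scalars $c_1, \ldots, c_q \geq 0$. Now build a combined gadget on $e$ by taking $q$ CB-gadgets with parameters $b_j = j$ (as in Figure~\ref{fig:gcbgadget} and Theorem~\ref{thm:gcbsplit}), scaling the edges of the $j$th gadget by $c_j$, and identifying the copies of each $v \in e$. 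By Theorem~\ref{thm:gcbsplit}, the $j$th scaled CB-gadget contributes $c_j \cdot \min\{|S|, |e\setminus S|, j\}$ to the combined gadget splitting function for any $S \subset e$, and since these contributions add, the combined gadget splitting function evaluated at a set of size $i$ equals $\sum_j A_{ij} c_j$, which is exactly the $i$th row of $\mA \vc$ in \eqref{matrixeq}. By construction $\mA \vc = \hat{\vw} = (w_1, \ldots, w_q)^T$, so the combined gadget splitting function equals $\vw_e$ on every $S \subseteq e$ (using cardinality-based symmetry and the boundary conditions $\hat{\vw}_e(\emptyset) = \hat{\vw}_e(e) = 0$, which hold because every auxiliary node can be placed entirely with the larger side at zero cost).

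Finally, I would assemble the global reduction: form the graph $G_\mathcal{H}$ by taking the disjoint union of all hyperedge gadgets and identifying, for each $v \in V$, all copies of $v$ across the gadgets in which it appears. For any bipartition $(S, \bar{S})$ of $V$ with $s \in S$, $t \in \bar S$, the minimum graph cut in $G_\mathcal{H}$ consistent with this bipartition decomposes additively over gadgets and equals $\sum_{e \in E} \hat{\vw}_e(e \cap S) = \sum_{e \in E} \vw_e(e \cap S) = \cut_\mathcal{H}(S)$, since each gadget independently optimizes the placement of its auxiliary nodes. Hence solving minimum $s$-$t$ cut on $G_\mathcal{H}$ solves the \hstgen{} problem on $\mathcal{H}$.

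The only genuine subtlety — and the place I would be most careful — is verifying that concatenating gadgets by identifying shared vertices does not interfere with the local minimization defining each $\hat{\vw}_e$: this is fine because each gadget's auxiliary vertices are private to that gadget, so a minimum cut in $G_\mathcal{H}$ independently minimizes over each gadget's auxiliary placement given the induced bipartition on $e$. Everything else is either direct bookkeeping or a citation of results already proved above.
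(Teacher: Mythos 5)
Your proposal is correct and follows essentially the same route as the paper, which states Theorem~\ref{thm:iffsub} as a summary of the preceding development: necessity from Corollary~\ref{cor:submodgadget}, and sufficiency from the combined CB-gadget construction together with Lemma~\ref{lem:cbsub} ensuring the scaling weights $c_j$ are nonnegative. Your explicit remarks on additivity over gadgets with private auxiliary vertices and on the global assembly of $G_\mathcal{H}$ make precise what the paper leaves implicit, but introduce no new ideas.
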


\subsection{Examples on Real Data}
\label{sec:experiments}
We illustrate our graph reduction techniques by solving a range of $s$-$t$ cut problems on a hypergraph constructed from real data.

\subsubsection*{Dataset}
We consider data obtained from Math Stack Exchange, an online forum for discussing math questions (\url{https://math.stackexchange.com/})~\cite{BensonE11221}.%
\footnote{Original data available at \url{https://www.cs.cornell.edu/~arb/data/threads-math-sx/index.html}.}
Each entry in the dataset corresponds to a post on the forum about a math question, which is associated with 1 to 5 different tags related to the topic of the post (e.g., ``invariance'', ``topology'', ``hypergraphs''). We associate each tag with a node in a hypergraph. A set of tags appearing in the same post defines a hyperedge. Discarding posts with only one tag, we obtain a hypergraph $\mathcal{H}$ with 1,629 nodes and 169,259 hyperedges with 2 to 5 nodes each. 
Figure~\ref{fig:math1} in the introduction illustrates all 4-node hyperedges in the dataset containing the tag ``hypergraphs.''

\subsubsection*{Constructing a hypergraph $s$-$t$ cut problem}
In practice, simply choosing one node to be the source $s$ and another to be the sink $t$ typically produces minimum $s$-$t$ cut problems where the optimal solution places one terminal node in a cluster by itself. In order to obtained more balanced and meaningful bipartitions of the dataset, we introduce super-source and super-sink nodes, and connect each terminal to a designated node \emph{plus} its neighbors in the hypergraph. In more detail, we first choose two nodes $s$-seed and $t$-seed corresponding to tags in the dataset. We attach the super-source node $s$ to $s$-seed, and all nodes that share a hyperedge with $s$-seed, but not with $t$-seed. We similarly attach the super-sink $t$ to $t$-seed and its neighbors that do not also neighbor $s$-seed. All edges adjacent to $s$ and $t$ are given infinite weight. Our construction is related to the Graph Mincut algorithm of Blum and Chawla for semi-supervised learning~\cite{Blum:2001:LLU:645530.757779}, as well as other graph-based learning techniques that connect super-source and super-sink nodes to different subsets of nodes in an input graph before solving an $s$-$t$ cut problem~\cite{Andersen:2008:AIG:1347082.1347154,LangRao2004,Orecchia:2014:FAL:2634074.2634168}. Thus, while the construction is more sophisticated than simply identifying a source and sink node in $\mathcal{H}$, it is a better reflection of how our framework might be used in applications. 

\subsubsection*{Results}
We solve the cardinality-based $s$-$t$ cut problem for several different pairs of tags from the Math Stack Exchange hypergraph. Since minimum $s$-$t$ cut solutions are the same up to a multiplicative scaling of edge weights, we first fix $w_1 = 1$. The submodular region then corresponds to $w_2 \in [1,2]$, so we compute $s$-$t$ cut solutions with $w_2$ varying from 1 to 2 in increments of $0.05$. Minimum $s$-$t$ cuts may not be unique, so we always consider the cut with the minimum number of source-side nodes. 

Recall that hyperedges with three or fewer nodes are characterized by a single splitting penalty (Observation~\ref{obs:23}), and therefore, varying $w_2$ will only affect penalties at hyperedges with four or five nodes. Despite this, we observe significant differences in $s$-$t$ cut solutions as $w_2$ changes. Given a fixed $s$-seed and $t$-seed, let $S^*$ be the source-side solution set when $w_2 = 1$ (i.e., the all-or-nothing solution). For values of $w_2 \in [1,2]$, we compute the Jaccard similarity between the solution $S_{w_2}$ and the all-or-nothing solution: $\textbf{Jaccard}(S_{w_2},S^*) = \frac{|S_{w_2} \cap S^*|}{|S_{w_2} \cup S^*|}$. In Figure~\ref{fig:a}, we plot Jaccard similarity curves for a variety of different ($s$-seed, $t$-seed) pairs selected from the hypergraph. For some pairs, we observe noticeable differences in Jaccard scores as $w_2$ varies, while in other cases the curves remain mostly constant. Overall, Jaccard scores tend to steadily decrease as the gap between $w_2$ and $w_1$ increases. However, the decrease is not always monotonic, as can be seen in the green curve in Figure~\ref{fig:a}. 

\begin{figure}[t]
	\centering
	\subfloat[Jaccard Scores\label{fig:a}]
	{\includegraphics[width=.32\linewidth]{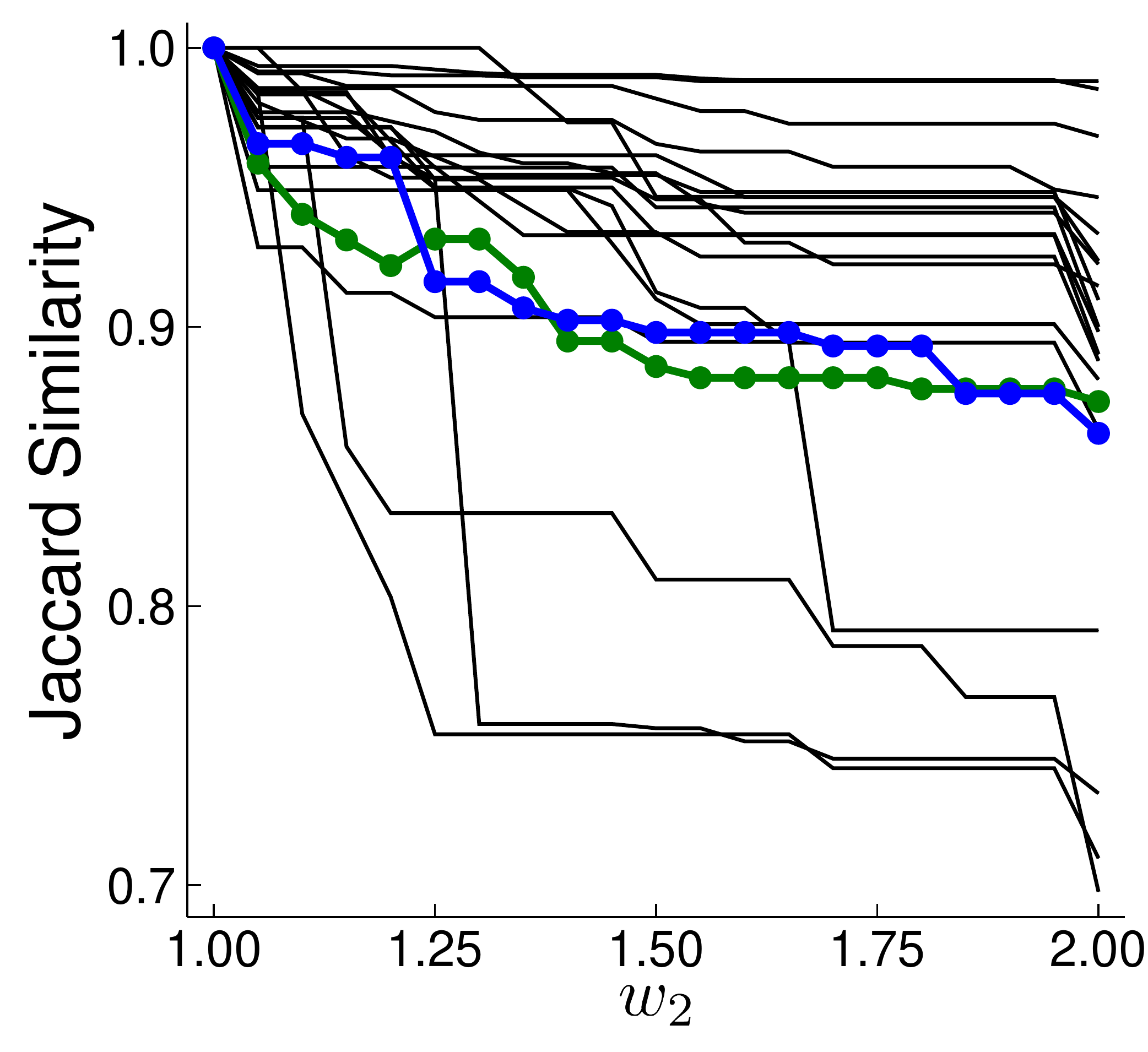}}\hfill
	\hspace{.01\linewidth}
	\subfloat[Green curve boundary\label{fig:b}]
	{\includegraphics[width=.335\linewidth]{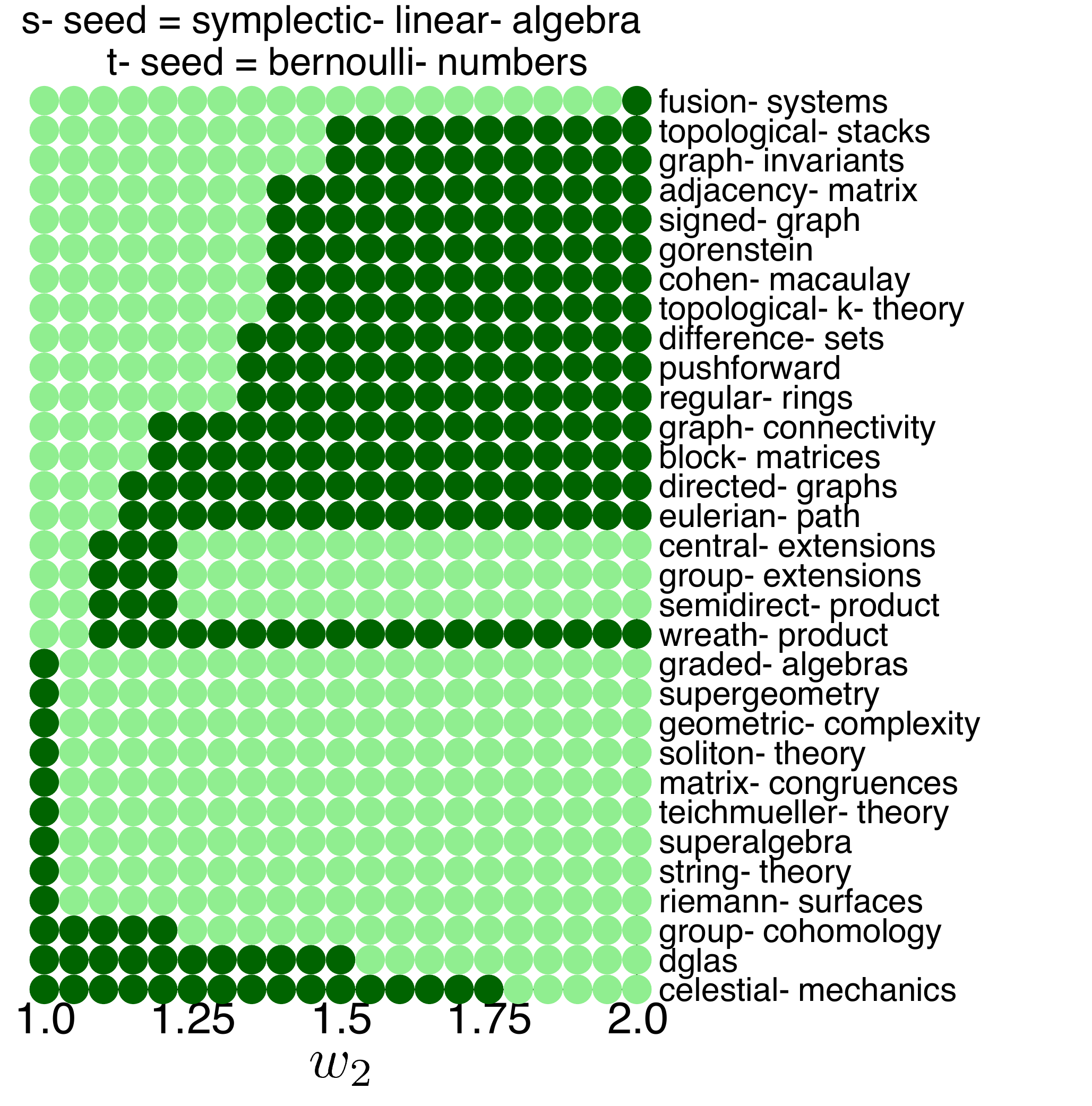}}\hfill
	\subfloat[Blue curve boundary\label{fig:c}]
	{\includegraphics[width=.325\linewidth]{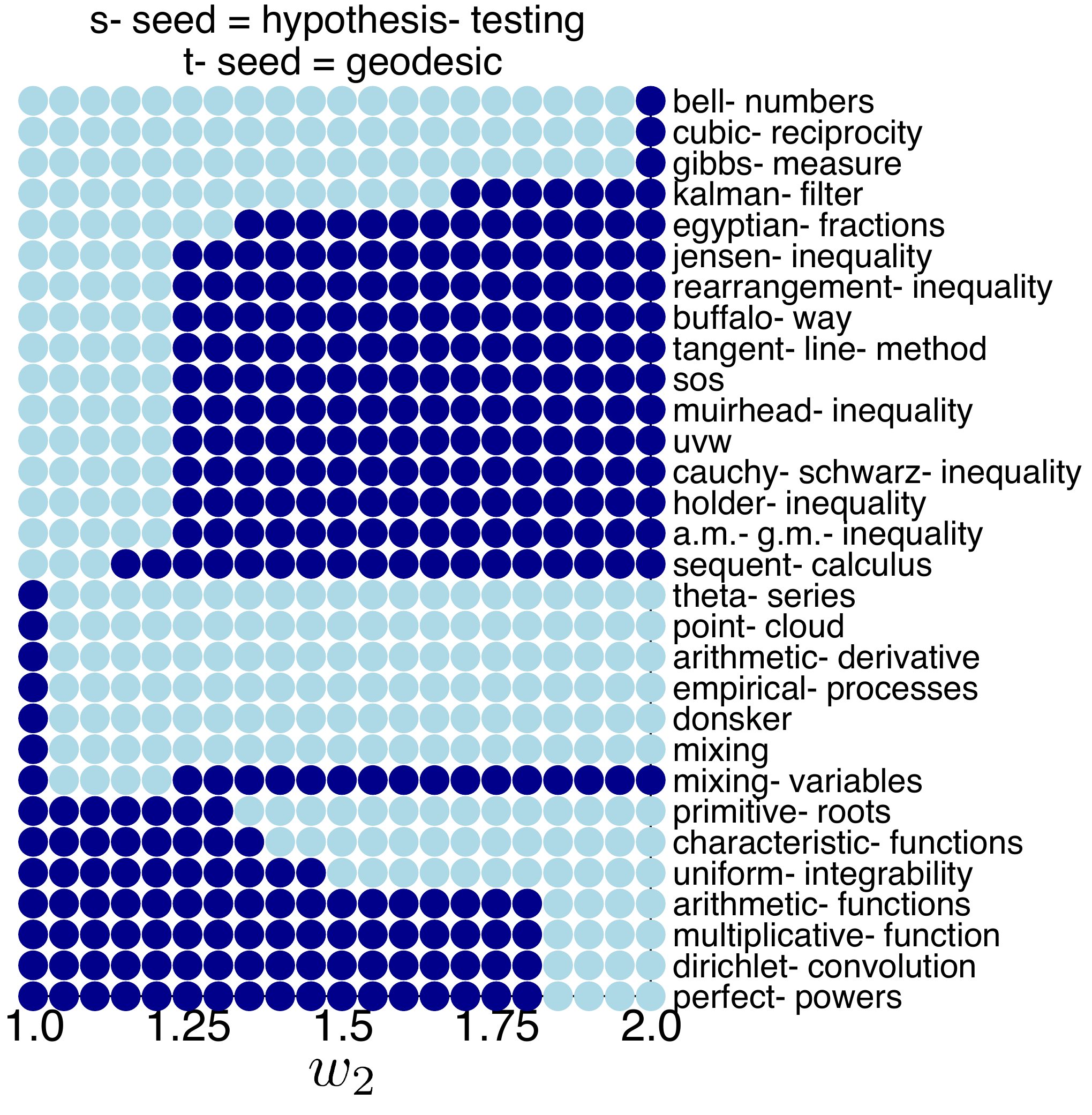}}
	\caption{Each curve in (a) corresponds to a sequence of cardinality-based minimum $s$-$t$ cut solutions on the Math Stack Exchange hypergraph, computed as $w_2$ ranges from 1 to 2. The super-source $s$ is attached via an infinite weight edge to one tag, and its neighbors. The super-sink is similarly attached to a different tag and its neighbors. Jaccard similarities are computed between the all-or-nothing (i.e., $w_2 = w_1$) solution, and the solution for the given $w_2$ in the horizontal axis.  Plots (b) and (c) show the changing cluster assignment for all of the nodes on the boundary of the blue and green curve respectively. Dark circles indicate the node is on the source side; light circles indicate a sink-side assignment. We also list the tags corresponding to each boundary node.}
	\label{fig:mathstack2}
\end{figure}

The green curve in Figure~\ref{fig:a} corresponds to setting $s$-seed and $t$-seed to the ``symplectic-linear-algebra'' and ``bernoulli-numbers'' tags respectively. For this problem, there are 30 nodes that switch sides at least once as $w_2$ changes. Figure~\ref{fig:b} illustrates where each of these boundary nodes is assigned for each value of $w_2$. Figure~\ref{fig:c} is a similar plot when the two seed nodes are ``hypothesis-testing'' and ``geodesic''. In many cases, there are interesting patterns in the tags on the boundary between $s$ and $t$ terms. For example, in Figure~\ref{fig:b}, we observe a number of tags associated with topics in graph theory (e.g., ``graph-invariants'', ``adjacency-matrix'', ``signed-graph'', ``directed-graphs''). The boundary nodes in Figure~\ref{fig:c} include six different tags associated with well-known mathematical inequalities (e.g., ``jensens-inequality'', ``rearrangement-inequality'', ``cauchy-schwartz inequality'').

In other experiments, we observed even lower Jaccard similarity scores as $w_2$ grew, as well as cases where a larger number of nodes in the hypergraph switched from source- to sink-side or vice versa.
We also noticed many examples of thematic boundary tags, relating to one or more specific subtopics in mathematics. For example, when we set ($s$-seed, $t$-seed) = (``random functions'',``svd''), 17 nodes switch sides at least once as $w_2$ varies from 1 to 2, and nearly all of these related to probability theory including ``polya-urn-model'', ``birth-death-process'', ``stopping-times'', ``brownian-motion'', three tags involving the word ``martingales'', and four involving ``stochastic''. Overall, these experiments show that solving the hypergraph $s$-$t$ cut problem with different cardinality-based splitting functions can lead to a range of different cuts in the same dataset. Furthermore, exploring differences in these cut solutions can uncover meaningful patterns in a dataset that would not be detected by solving only the all-or-nothing hypergraph $s$-$t$ cut problem.

\subsection{Asymmetric \hc{}}
\label{sec:nonsymmetric}
Our definition of a hyperedge splitting function includes a symmetry requirement, i.e., $\vw_e(S) = \vw_e(e \backslash S)$ for all $S \subseteq e$ (Definition~\ref{def:splitting}, Property \ref{eq:symmetric}). In principle, we can remove this requirement to obtain a well-defined notion of an \emph{asymmetric} \hc{} problem, for which splitting penalties at hyperedges depends both on \emph{how} the nodes of a hyperedge are split, as well as \emph{which} nodes are clustered on the source- or sink-side of a cut. This is a natural formulation to consider, for example, in $s$-$t$ cut problems where the source-side of a cut is intended to represent a cluster of nodes possessing a certain function or property. Consider, for example, a 3-node hyperedge $\{a,b,c\}$ where we have some prior reason to believe node $a$ possesses some property $X$ of interest, but $b$ and $c$ do not. If the goal of the $s$-$t$ cut problem is to identify a cluster of nodes with property $X$, then clustering $a$ with the source and $\{b,c\}$ with the sink should be treated differently from placing $\{b,c\}$ with the source and $a$ with the sink. In other cases, we may have a preference for including more (or fewer) nodes on the source-side of a cut, in which case a 1--2 split of a 3-node hyperedge should be treated differently from a 2--1 split. These distinctions cannot be modeled with symmetric splitting functions. 

\subsubsection*{Formal definitions}
An \emph{asymmetric} hyperedge splitting function is a function $\vy_e\colon 2^e \rightarrow \mathbb{R}_+$ that satisfies the splitting function requirements of Definition~\ref{def:splitting} except for the symmetry constraint~\eqref{eq:symmetric}. Thus, for a hyperedge $e$ and a subset $S \subset e$, an asymmetric hyperedge splitting function $\vy_e$ can have $\vy_e(S) \neq \vy_e(e\backslash S)$.
However, if $S = e$, we still have $\vy_e(S) = \vy_e(e\backslash S) = 0$ so that penalized hyperedges are in fact cut. Notions of cardinality-based~\eqref{cardinality} and submodular~\eqref{submodular} splitting functions naturally extend to this setting. An asymmetric cardinality-based splitting function on an $r$-node hyperedge is characterized by $r-1$ penalty scores $y_i$ for $i \in \{1,2, \hdots, r-1\}$, where $y_i$ is the penalty for placing $i$ nodes on the source-side of the cut. 

If the function is also submodular, it satisfies several properties that are closely related to inequalities~\eqref{firstconstraint}, \eqref{mainconstraint}, and~\eqref{monotone} considered in Theorem~\ref{thm:iffsub}.
\begin{lemma}
	\label{lem:asub}
	If $\vy_e$ is an asymmetric cardinality-based submodular splitting function on an $r$-node hyperedge $e = \{v_1, v_2, \hdots, v_r\}$, then its splitting penalties satisfy
	\begin{align}
	\label{a1}
	2y_1 &\geq y_2 \\
	\label{a2}
	2y_j &\geq y_{j-1} + y_{j+1} \text{ for $j = 2, \hdots, r-3$} \\
	\label{a3}
	2y_{r-2} &\geq y_{r-1}.
	\end{align}
\end{lemma}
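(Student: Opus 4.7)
The plan is to mirror the proof of Lemma~\ref{lem:cbsub}, deriving each of the three claimed inequalities by applying the submodular relation $\vy_e(A) + \vy_e(B) \geq \vy_e(A \cap B) + \vy_e(A \cup B)$ to a carefully chosen pair of subsets $A, B \subseteq e$. Because $\vy_e$ is cardinality-based, each of the four evaluations reduces to a function of set size alone, so the whole task amounts to picking $A$ and $B$ so that the four cardinalities $|A|$, $|B|$, $|A \cap B|$, $|A \cup B|$ line up with the indices appearing in the target inequality. No appeal to a symmetry property will be needed, which is the qualitative difference from Lemma~\ref{lem:cbsub}.

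For~\eqref{a1} I would take $A = \{v_1\}$ and $B = \{v_2\}$, so that $|A| = |B| = 1$, $A \cap B = \emptyset$, and $|A \cup B| = 2$. Invoking $\vy_e(\emptyset) = 0$ from property~\eqref{eq:zerononcut}, submodularity immediately yields $2 y_1 \geq y_2$. For the middle inequality~\eqref{a2} at index $j \in \{2, \ldots, r-3\}$, the choice $A = \{v_1, \ldots, v_j\}$ and $B = \{v_2, \ldots, v_{j+1}\}$ gives $|A| = |B| = j$, $|A \cap B| = j-1$, and $|A \cup B| = j+1$. The upper bound $j \leq r-3$ guarantees $j+1 \leq r-2$, so every penalty appearing is among the defined $y_1, \ldots, y_{r-1}$, and submodularity directly gives $2 y_j \geq y_{j-1} + y_{j+1}$.

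For~\eqref{a3} I would take $A = \{v_1, \ldots, v_{r-2}\}$ and $B = \{v_2, \ldots, v_{r-1}\}$, yielding $|A| = |B| = r-2$, $|A \cap B| = r-3$, and $|A \cup B| = r-1$. Submodularity then delivers $2 y_{r-2} \geq y_{r-3} + y_{r-1}$, and the non-negativity property~\eqref{eq:nn} lets me drop the term $y_{r-3}$ to obtain the stated $2 y_{r-2} \geq y_{r-1}$.

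The argument is essentially mechanical, and I anticipate no real obstacle beyond being careful about the valid index range so that the constructed sets are genuine subsets of $e$. The one conceptual point worth flagging is that the boundary inequality~\eqref{a3} is handled here without invoking symmetry: in Lemma~\ref{lem:cbsub} the rightmost inequality $w_q \geq w_{q-1}$ had to be extracted via pairs of sets whose complements carried matching penalties, whereas in the asymmetric case large subsets have their own independently defined penalties, and non-negativity alone suffices to absorb the extra $y_{r-3}$ term produced by submodularity.
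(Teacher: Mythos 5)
Your proof is correct and follows essentially the same route as the paper: the paper derives~\eqref{a1} from $A=\{v_1\}$, $B=\{v_2\}$ and then handles~\eqref{a2} and~\eqref{a3} with the same consecutive-block sets $A = \{v_1,\hdots,v_j\}$, $B = \{v_2,\hdots,v_{j+1}\}$ for $j \in \{2,\hdots,r-2\}$. Your treatment of the $j = r-2$ endpoint (dropping the nonnegative $y_{r-3}$ term) is just the explicit version of what the paper leaves as ``similarly shown.''
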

\begin{proof}
	By definition of submodularity, for all sets of nodes $A,B \subseteq 2^e$,
	\begin{equation}
	\label{eq:submod}
	\vy_e(A) + \vy_e(B) \geq \vy_e(A \cap B) + \vy_e(A\cup B).
	\end{equation}
	If we set $A = \{v_1\}$, and $B = \{v_2\}$, then $\vy_e(A) = \vy_e(B) = y_1$, $\vy_e(A\cup B) = y_2$, and $\vy_e(A\cap B) = 0$, so inequality~\eqref{eq:submod} reduces to inequality~\eqref{a1}. Constraints~\eqref{a2} and~\eqref{a3} are similarly shown by using $A = \{v_1, v_2, \hdots, v_j\}$ and $B = \{v_2, v_3,\hdots , v_{j+1}\}$ for $j \in \{2, 3, \hdots , r-2\}$. 
\end{proof}

\subsubsection*{Graph reducibility for asymmetric cardinality-based functions}
Just as we did for symmetric splitting functions, we can model asymmetric splitting functions using hypergraph $s$-$t$ gadgets. Definition~\ref{def:gadget} of hypergraph $s$-$t$ gadgets does not explicitly consider any notion of symmetry, so it can be directly applied to the asymmetric setting. As before, we say that an asymmetric splitting function $\vy_e$ is \emph{modeled} by a gadget with gadget splitting function $\hat{\vy}_e$ if $\vy_e = \hat{\vy}_e$. An asymmetric \hc{} problem is \emph{graph reducible} if the splitting function at every hyperedge can be modeled by a gadget splitting function. 
\begin{theorem}
	\label{thm:iffns}
	An instance of asymmetric cardinality-based \hc{} is graph reducible if and only if every splitting function is submodular.
\end{theorem}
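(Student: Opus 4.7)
The plan is to mirror the proof of Theorem~\ref{thm:iffsub}, replacing the symmetric CB-gadget with an asymmetric analogue and generalizing the matrix-inversion argument of Section~\ref{sec:reduction}.

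\emph{Necessity} is immediate: Definition~\ref{def:gadget} and the proof of Theorem~\ref{thm:gsub} never invoke symmetry, so Corollary~\ref{cor:submodgadget} applies verbatim and forces every graph-reducible asymmetric splitting function to be submodular.

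\emph{Sufficiency} is the substantive direction. I first introduce an \emph{asymmetric CB-gadget} with three nonnegative parameters $(w_s, w_t, b)$: auxiliary nodes $e', e''$, directed edges $v \to e'$ of weight $w_s$ and $e'' \to v$ of weight $w_t$ for every $v \in e$, and $e' \to e''$ of weight $b$. The same four-case analysis over placements of $\{e', e''\}$ used in the proof of Theorem~\ref{thm:gcbsplit} shows the gadget splitting function is
\[
\hat{\vy}_e(S) \;=\; \min\{w_s\,|S|,\; w_t\,|e \setminus S|,\; b\}.
\]
This recovers the symmetric CB-gadget when $w_s = w_t = 1$, and specializes to pure source-side or sink-side gadgets when one of $w_s, w_t$ is taken very large.

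For an $r$-node hyperedge with cardinality-based submodular asymmetric splitting function $\vy_e$, I use $r-1$ of these gadgets as a basis: for each $j \in \{1, \ldots, r-1\}$, gadget $G_j$ has parameters $(r-j,\, j,\, M)$ with $M$ chosen large enough that the third argument of the $\min$ is never active. Then $G_j$ realizes the tent function $t^j_i := \min\{(r-j)\,i,\, j\,(r-i)\}$, which is piecewise linear with a single kink at $i = j$. Scaling $G_j$ by a nonnegative factor $c_j$ and concatenating (just as in the symmetric case) yields a combined gadget with splitting function $\sum_j c_j t^j$, so matching $\vy_e$ reduces to the $(r-1)\times(r-1)$ linear system $\mT \vc = \vy$ with $T_{ij} = t^j_i$. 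The key computation is that, with the convention $t^j_0 = t^j_r = 0$, the discrete Laplacian $-t^j_{i-1} + 2 t^j_i - t^j_{i+1}$ equals $r$ when $i = j$ and $0$ otherwise; equivalently $\mT^{-1}$ is $\tfrac{1}{r}$ times the $(r-1)\times(r-1)$ tridiagonal matrix with diagonal $2$ and off-diagonals $-1$. Inverting then gives $c_j = (2 y_j - y_{j-1} - y_{j+1})/r$ with $y_0 = y_r = 0$. Each $c_j \geq 0$ is exactly the cardinality-based submodularity inequality at index $j$, which (following the proof of Lemma~\ref{lem:asub}) is just~\eqref{submodular} applied to appropriate nested subsets of $e$; hence nonnegative gadget weights exist if and only if $\vy_e$ is submodular.

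The main obstacle, relative to Theorem~\ref{thm:iffsub}, is identifying the correct basis. Symmetry reduces the parameter count from $r-1$ to $\lfloor r/2 \rfloor$ and makes the nested capped functions $\min\{i, r-i, b\}$ a complete basis; in the asymmetric cone the extreme rays are the genuinely two-sided tents $t^j$ with peaks at arbitrary interior $j$, whose realization requires the asymmetric edge weights $(w_s, w_t) = (r-j, j)$ rather than unit weights. Once this basis is fixed, the matrix equation inverts as cleanly as~\eqref{invmatrixeq} and yields submodularity as both necessary and sufficient for graph reducibility.
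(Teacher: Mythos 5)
Your proof is correct and follows essentially the same route as the paper: the necessity direction via Corollary~\ref{cor:submodgadget}, and sufficiency via a basis of $r-1$ asymmetric gadgets with weights $(r-j,j)$ realizing the tent functions $\min\{(r-j)i,\,j(r-i)\}$, the same tridiagonal inverse, and Lemma~\ref{lem:asub}. The only (cosmetic) difference is that you use a two-auxiliary-node Lawler-style gadget with the capping edge weight $M$ taken large enough to be inactive, whereas the paper uses a single-auxiliary-node directed star gadget with per-direction weights $a$ and $b$; once the cap is deactivated these produce identical splitting functions, so the two constructions coincide.
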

\begin{proof}
First, Theorem~\ref{thm:gsub} and Corollary~\ref{cor:submodgadget} apply in the same way to the asymmetric setting, and thus any cardinality-based splitting function that can be modeled by an $s$-$t$ gadget is submodular. To prove sufficiency, we consider a new asymmetric CB-gadget for a hyperedge $e$, parameterized by two integers $a$ and $b$, and whose construction is shown in Figure~\ref{fig:nss}.
\begin{figure}[h]
	\label{fig:nss}
	\begin{minipage}{0.45\textwidth}
		\includegraphics[width=\linewidth]{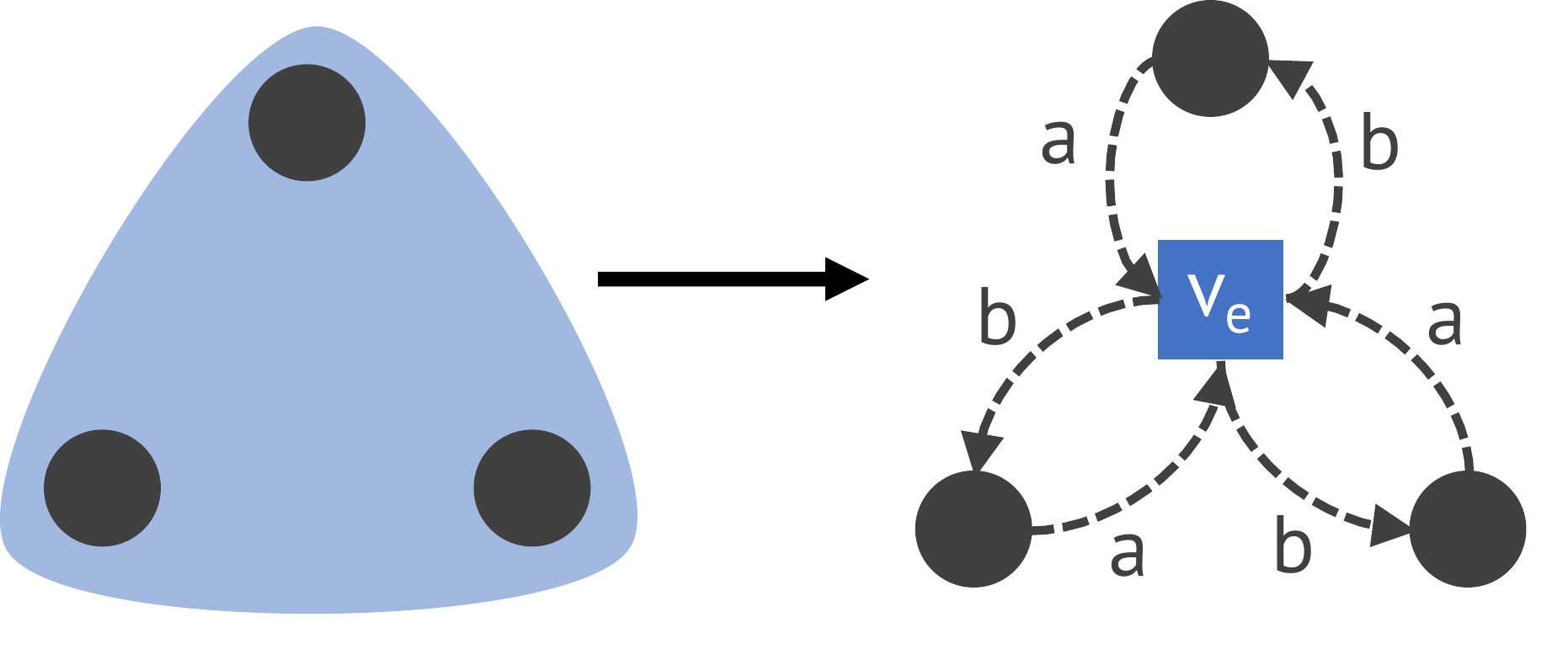}
	\end{minipage}\hfill
	\begin{minipage}{0.55\textwidth}		
		\begin{itemize}
			\item Introduce one new auxiliary nodes $v_e$.
			\item For each $v \in e$, add a directed edge $(v,v_e)$ with weight a, and a directed edge $(v_e,v)$ with weight $b$.
		\end{itemize}
	\end{minipage}
	\vspace{.25cm}
	\caption{}
\end{figure}

The gadget splitting function for the asymmetric CB-gadget is 
\begin{equation}
\label{eq:nonsymmgadgets}
\hat{\vy}_{a,b}(S) = 
\min \{|S|\cdot a,  |e\backslash S| \cdot b\}.
\end{equation}

To model an $r$-node asymmetric submodular splitting function, we use $r-1$ asymmetric CB-gadgets with carefully chosen parameters $a$ and $b$. For $j \in \{1, 2, \hdots , r-1\}$, define the $j$th gadget to be an asymmetric star gadget with parameters $a = r-j$ and $b = j$.
Let $\hat{y}_i^{(j)}$ denote the penalty that gadget $j$ assigns to a hyperedge split with $i$ nodes on the source-side of a split. From the construction of the $j$th gadget and from~\eqref{eq:nonsymmgadgets} we see that
\begin{equation}
\label{asgadget}
\hat{y}_i^{(j)} = \min \{i \cdot (r-j),  (r-i) \cdot j\} = 
\begin{cases}
i \cdot (r-j) & \text{if $i < j$} \\
(r-i) \cdot j & \text{if $i \geq j$.} 
\end{cases}
\end{equation}
Scaling the $j$th gadget by a multiplicative weight $c_j \geq 0$ and combining all gadgets results in a larger gadget with $r-1$ new auxiliary vertices and $2\cdot j \cdot r$ directed edges. As we did for the symmetric case in Section~\ref{sec:reduction}, let
$\vc = [c_1 \,\, c_2 \,\, \cdots \,\, c_{r-1}]^T$ store scaling weights, $\hat{\vy} = [\hy_1 \,\, \hy_2 \,\, \cdots \,\, \hy_{r-1}]$ store splitting penalties for the combined gadget, and define a matrix $\mA = (A_{ij})$ where $A_{ij} = \hat{y}_i^{(j)}$. The splitting penalties of the combined gadget are then given by the linear system~$\mA \vc = \hat{\vy}$. We illustrate both $\mA$ and its inverse when $r = 6$:
\begin{equation}
\label{matrixeq_multi}
\mA = 
\begin{bmatrix}
5  &   4   & 3  &  2   &  1\\
4   &  8   &  6  &   4   &  2 \\
3   &  6   &  9   &  6   &  3 \\
2   &  4   &  6    & 8   &  4 \\
1   &  2 &    3   &  4   &  5 \\
\end{bmatrix}\,,
\hspace{.5cm}
\mA^{-1} = 
\frac{1}{r}
\begin{bmatrix}
  2   & -1   &  0  &  0   &  0 \\
-1   &  2   & -1   &  0   &  0 \\
0   & -1   &  2   & -1   &  0 \\
0   &  0   & -1   &  2   & -1 \\
0   &  0   &  0   & -1   & 2\\
\end{bmatrix}\,.
\end{equation}
In general, the inverse of $\mA$ is a tridiagonal matrix with a value of $\frac{2}{r}$ on each diagonal entry and $- \frac{1}{r}$ along off diagonals. Inverting the system $\mA \vc = \hat{\vy}$ and constraining the right hand side to be greater than zero produces a set of inequalities that defines the class of submodular splitting functions we can model with this approach. As can be seen from the non-zero pattern in $\mA^{-1}$, this set of inequalities is
	\begin{align}
	2\hy_1 &\geq \hy_2 \\
	2\hy_j &\geq \hy_{j-1} + \hy_{j+1} \text{ for $j = 2, \hdots, r-3$} \\
	2\hy_{r-2} &\geq \hy_{r-1}.
	\end{align}
Lemma~\ref{lem:asub} proves that this set of inequalities is satisfied by every asymmetric submodular cardinality-based splitting function. Thus, submodularity is both a necessary and sufficient condition for modeling asymmetric cardinality-based splitting functions with $s$-$t$ gadgets.
\end{proof}

\subsection{Graph Reducibility of General Submodular Penalties}
Although we are primarily focused on cardinality-based splitting functions, some applications assign penalties that treat certain nodes in a hyperedge differently from others~\cite{panli2017inhomogeneous}.
Our result on the graph reducibility of submodular cardinality-based splitting functions raises a natural open question.
While Corollary~\ref{cor:submodgadget} proves that \emph{all} graph reducible hyperedge splitting functions are submodular, Theorems~\ref{thm:iffsub} and~\ref{thm:iffns} only prove the converse for cardinality-based functions, in the symmetric and asymmetric cases respectively. A natural question to ask then, is whether all submodular \hc{} problems are graph reducible. If so, how we can construct a hypergraph $s$-$t$ cut gadget for modeling an arbitrary submodular hyperedge splitting function? We state this question as an open conjecture which we believe may hold for both symmetric and asymmetric splitting functions. 
\begin{conjecture}
	\label{con:submod}
	Any submodular splitting function can be modeled by a hypergraph $s$-$t$ gadget.
\end{conjecture}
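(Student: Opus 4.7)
The plan is to extend the basis-gadget decomposition used for cardinality-based functions in Theorems~\ref{thm:iffsub} and~\ref{thm:iffns} to a richer family of primitive gadgets that spans the full submodular cone. First I would introduce, for each ordered pair $(A, B)$ of disjoint nonempty subsets of $e$, a \emph{directed-cut gadget} consisting of two auxiliary nodes $p$ and $q$, infinite-weight directed edges $v \to p$ for each $v \in A$, infinite-weight directed edges $q \to v$ for each $v \in B$, and a single directed edge $p \to q$ of finite weight $c$. A case analysis analogous to the proof of Theorem~\ref{thm:gcbsplit} shows that the corresponding gadget splitting function equals $c$ when $A \cap S \neq \emptyset$ and $B \cap \bar S \neq \emptyset$ and vanishes otherwise. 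These indicators strictly generalize the CB-gadgets, and concatenating scaled copies yields, by the same argument as in Section~\ref{sec:reduction}, a combined gadget whose splitting function is the non-negative linear combination of the chosen primitives.

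The next step would be to show that any submodular splitting function $\vw_e$ admits such a non-negative decomposition. Concretely, one would cast the existence of a decomposition as a linear feasibility problem over coefficients $c_{A,B} \geq 0$ indexed by the ordered pairs, set up a rectangular system analogous to equation~\eqref{matrixeq} (now indexed by subsets of $e$ rather than cardinalities), and attempt to prove that submodularity of $\vw_e$ implies feasibility. A natural candidate is an explicit M\"obius-style inversion on $2^e$ that produces the $c_{A,B}$ from the values $\vw_e(S)$; one would then verify non-negativity by invoking the submodularity inequality~\eqref{submodular} applied to a carefully chosen family of set pairs. For the symmetric conjecture one would symmetrize by pairing the gadget for $(A, B)$ with the gadget for $(B, A)$.

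The hardest part will be establishing that the decomposition is non-negative for every submodular $\vw_e$, because the submodular cone on an $r$-element ground set has a complicated extreme ray structure and it is not obvious a priori that directed-cut indicators span it. Related expressibility questions for pseudo-Boolean functions are known to have negative answers once one restricts to purely ``pairwise" primitives on the extended variable set, so if the directed-cut family turns out to be insufficient the fallback plan is to enlarge the primitive family by allowing the pair $(p,q)$ to be replaced with a longer auxiliary directed path or a small auxiliary subgraph, which can implement strictly more intricate indicator patterns. I would first verify the conjecture for $|e|=4$, where the non-cardinality-based submodular splitting functions are parameterized by only a handful of values and the extreme rays of the cone can be enumerated explicitly. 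A direct construction there would both confirm the conjecture in the smallest new case and guide the choice of primitive family for a general proof; a failure there, on the other hand, would pinpoint an obstruction and suggest a refinement of the conjecture.
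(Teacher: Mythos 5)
This statement is posed in the paper as an open \emph{conjecture}, not a theorem: the paper proves it only for $3$-node hyperedges (via an explicit weighted, directed star gadget) and then reports a \emph{failed} systematic attempt at the $4$-node symmetric case using essentially the strategy you propose --- fix a finite family of primitive gadgets, take nonnegative linear combinations, invert the resulting linear system as in~\eqref{matrixeq}, and hope the induced inequalities contain the whole submodular cone. For every parameter setting the authors tried, they exhibit a submodular splitting function that violates the induced inequalities. Within your proposal, the parts you actually argue are fine: the case analysis giving the primitive gadget splitting function $c\cdot[A\cap S\neq\emptyset]\cdot[B\cap\bar S\neq\emptyset]$ is correct (the same reasoning as Theorem~\ref{thm:gcbsplit} applies, since the infinite edges force $p$ into $T$ exactly when $A\cap S\neq\emptyset$ and force $q$ out of $T$ exactly when $B\cap\bar S\neq\emptyset$), and concatenating gadgets with disjoint auxiliary sets does add their splitting functions. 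But the entire content of the conjecture lives in the step you defer: showing that \emph{every} submodular splitting function is a nonnegative combination of your primitives. You offer only a ``natural candidate'' (a M\"obius-style inversion) and an intention to ``verify non-negativity''; no argument is given, and this is exactly the point at which the paper itself is stuck. (You also have not checked that your disjoint-pair indicator family even recovers the cases the paper already handles, e.g.\ the Lawler and CB-gadget functions for $r\geq 4$.)

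There is, moreover, a concrete structural reason to expect the missing step to fail, and your fallback plan does not escape it. By Definition~\ref{def:gadget}, \emph{any} hyperedge $s$-$t$ gadget splitting function --- yours, the paper's, or any enlargement with longer auxiliary directed paths or richer auxiliary subgraphs --- is a partial minimization over auxiliary Boolean variables of a sum of terms $w_{uv}\,[u\in T]\,[v\notin T]$, i.e., a projection of a sum of \emph{binary} submodular functions (this is also why Corollary~\ref{cor:submodgadget} holds). It is known from work of \v{Z}ivn\'y, Cohen, and Jeavons on the expressive power of binary submodular functions that such projections do not exhaust the submodular functions of arity four, even when arbitrarily many auxiliary variables are allowed. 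Unless the normalization and symmetry constraints on splitting functions happen to exclude all such counterexamples, no choice of primitive gadget family can succeed. So your final suggestion --- enumerate the extreme rays of the $4$-node submodular splitting cone and test representability there --- is the right \emph{first} move, but you should undertake it expecting to find an obstruction (and hence a refutation or restriction of the conjecture) rather than a confirmation.
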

While we believe this is true, we note an important caveat. If we apply the same reduction strategy we used for cardinality-based submodular functions, then this reduction will be exponential in the size of the hyperedge. For an $r$-node cardinality-based function, the number of splitting penalties is linear in $r$, which led us to introduce a linear number of CB-gadgets, and thus a linear number of auxiliary vertices. Submodular splitting functions are instead characterized by an exponential number of splitting penalties in $r$. Thus, a direct extension of our previous techniques would require introducing $2^{O(r)}$ gadgets and auxiliary vertices. 
Even so, a positive answer to Conjecture~\ref{con:submod} would still provide a significant benefit for solving submodular \hc{} problems, since in many applications, the maximum hyperedge size $r$ is small. Even if the runtime for solving such a problem via graph reduction is exponential in $r$, for constant $r$ this would be an improvement over using generic submodular minimization algorithms. 

It is worth noting that a positive answer to Conjecture~\ref{con:submod} would provide a conceptually simple way to solve the following problem.
\begin{definition}
	Let $f$ be a symmetric submodular function with a ground set $V$, and let $i,j \in V$. The $i$-$j$ \textbf{submodular minimization problem} is the task of finding a set $A \subset V$ that contains $i$ but not $j$, such that $f(A)$ is minimized.
\end{definition}
The $i$-$j$ submodular minimization problem is another generalization of the graph $s$-$t$ cut problem, and is closely related to symmetric submodular function minimization. 
Independent of Conjecture~\ref{con:submod}, this problem can be cast as a submodular hypergraph $s$-$t$ cut problem. Let $f \colon 2^V \rightarrow \mathbb{R}$ be a symmetric submodular function. We can assume without loss of generality that $f$ is nonnegative, and that $f(V) = 0$. To see why, note that for any $A \subset V$, submodularity and symmetry imply that 
\[
f(A) + f(V\backslash A) \geq f(A\cup (V\backslash A) ) + f(A\cap (V\backslash A)) \implies f(A) \geq f(U).
\]
We could therefore create a new function $g(A) = f(A) - f(V)$, so that $g(V) = g(\emptyset) =0$, and $g(A) \geq 0$ for all $A \subset V$. Assume then that to begin with, the function $f$ is nonnegative and satisfies $f(V) = f(\emptyset) = 0$. Construct a hypergraph made up of a single edge $e$ containing all of $V$, with a submodular splitting function $\vw_e = f$. Solving the hypergraph $s$-$t$ cut problem with $s = i$ and $t = j$ will solve the $i$-$j$ submodular minimization problem. Thus, if Conjecture~\eqref{con:submod} is true, the $i$-$j$ submodular minimization problem can be cast as a graph $s$-$t$ cut problem. At least conceptually, this would provide a simpler solution than applying symmetric submodular function minimization algorithms. However, the runtime of this approach would depend on the number of auxiliary nodes required to model a submodular splitting function on a large hyperedge. 

We end this section with a proof of Conjecture~\ref{con:submod} for the special case of 3-node hyperedges, as well as a partial answer for the 4-node case.

\subsubsection*{Three-node submodular splitting functions}
Conjecture~\ref{con:submod} is true for 3-node hyperedges with general submodular splitting functions.
The 3-node symmetric case was proved using a clique expansion~\cite{panli2017inhomogeneous}; here we show that a weighted and directed star gadget is sufficient to model the more general \emph{asymmetric} case.
Let $e = \{1,2,3\}$ be a 3-node hyperedge with splitting penalties $p_1$, $p_2$, $p_3$, $p_{23}$, $p_{13}$, and $p_{12}$, where $p_X$ is the penalty for placing the set $X \subset e$ on the source-side of the cut. 
The splitting function of $e$ is submodular if and only if these penalties satisfy the following inequalities:
\begin{align*}
p_1 \leq p_{12} + p_{13},& \hspace{.5cm}  p_2 \leq p_{12} + p_{23}, \hspace{.5cm} p_3 \leq p_{13} + p_{23}, \\
p_{23} \leq p_{2} + p_3,& \hspace{.5cm}  p_{13} \leq p_1 + p_3, \hspace{.5cm}  p_{12} \leq p_1 + p_2.
\end{align*}
\begin{figure}[h]
	\centering
	\includegraphics[width=.5\textwidth]{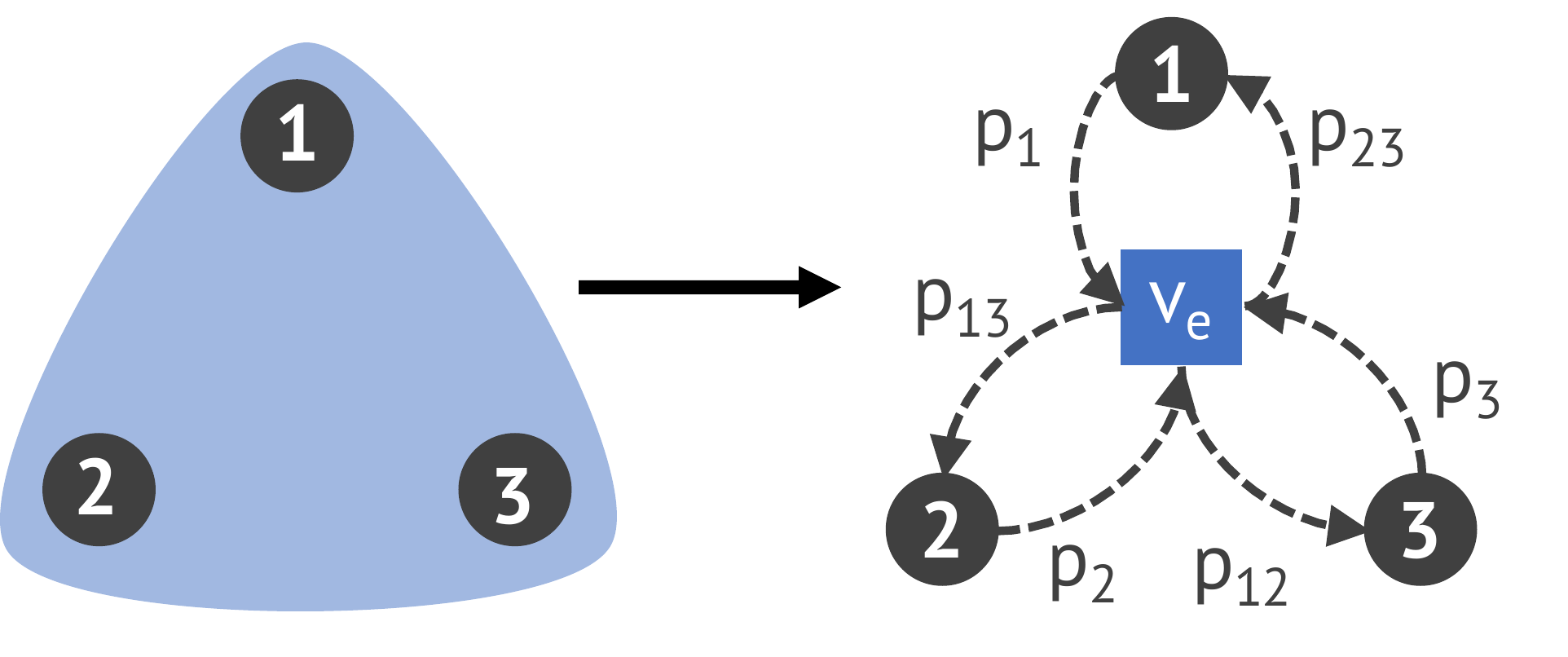}
	\caption{}	\label{fig:ssg}
\end{figure}
Given these penalty scores, we construct the weighted and directed star gadget in Figure~\ref{fig:ssg}. 
For node $i \in \{1,2,3\}$, we add a directed edge $(i,v_e)$ with weight $p_i$, and a directed edge $(v_e,i)$ with weight $p_{jk}$, where $j$ and $k$ are indices of the two other nodes in $e$. It is a simple exercise to check that any split of this gadget will produce a cut score equal to the splitting penalty of the original hyperedge. For example, consider placing node $1$ on the source-side and $\{2,3\}$ with the sink.
We must cut either the edge from $1$ to $v_e$, or the edges from $v_e$ to nodes 2 and 3.
By submodularity, $p_1 \leq p_{12} + p_{13}$, so the gadget splitting penalty will be $p_1$, as desired.
We can also see what happens for symmetric splitting functions when $p_1 = p_{23}$, $p_2= p_{13}$, and $p_3 = p_{12}$.
In this case, the gadget collapses to an weighted and undirected gadget with three edges
that models a symmetric submodular splitting function.

\subsubsection*{Four-node symmetric submodular splitting functions} 
\begin{figure}[t]
	\centering
	\includegraphics[width=.8\textwidth]{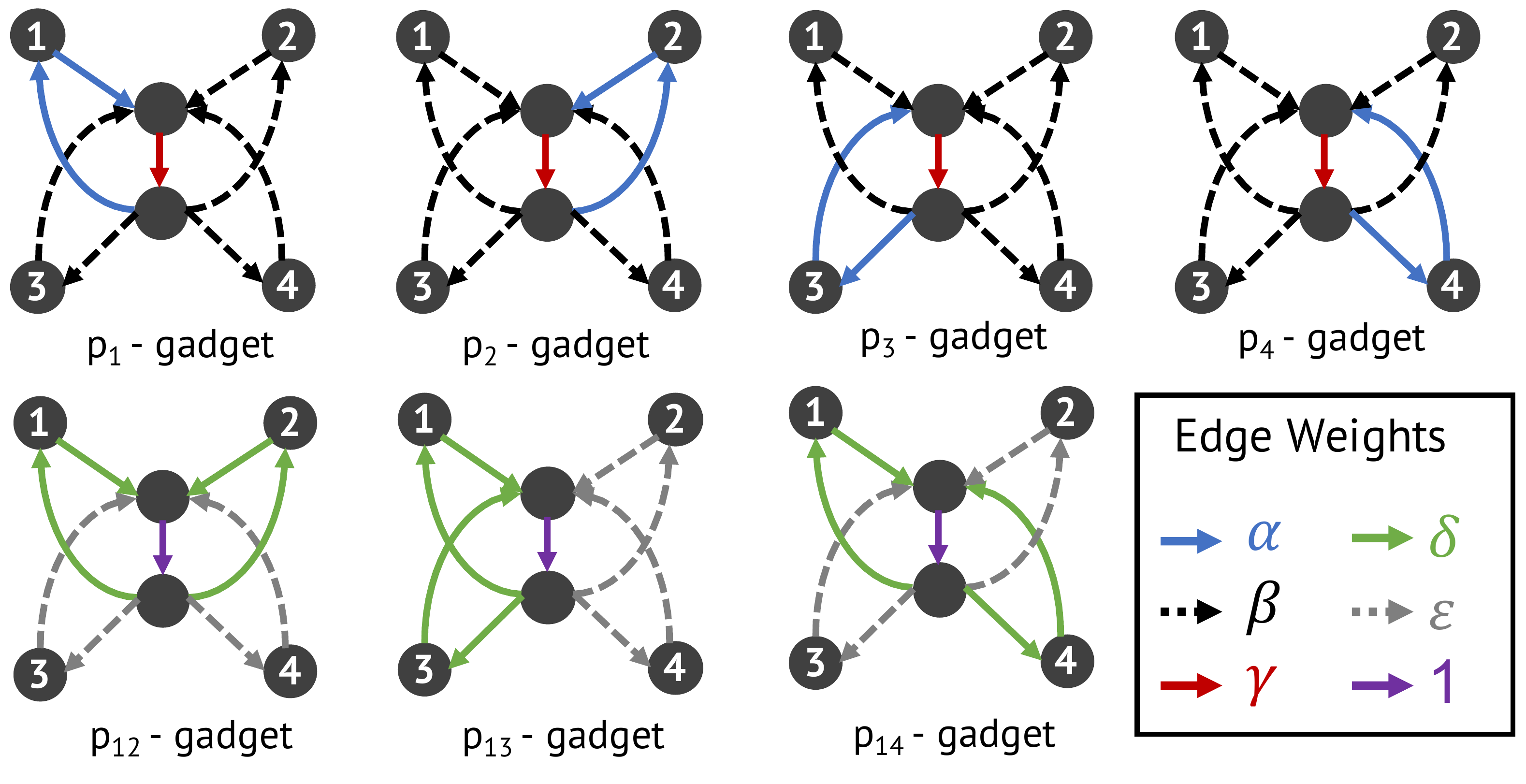}
	\caption{``Basis'' gadgets for modeling submodular hyperedges with 4 nodes. The top four are {Type-1} basis gadgets, and the bottom three are {Type-2}. The edge weight parameters are $\{\alpha, \beta, \gamma, \delta, \varepsilon\}$. After fixing edge weight parameters, we can scale the basis gadgets by nonnegative scaling weights: $c_1$, $c_2$, $c_3$, $c_4$, $c_{12}$, $c_{13}$, and $c_{14}$. This leads to a $7 \times 7$ linear system that defines the gadget splitting function for a combined gadget. If edge parameters are chosen carefully, the matrix can be inverted, leading to a system of inequalities that defines a class of submodular splitting functions that can be modeled for the chosen set of edge parameters.}
	\label{fig:four_node_basis}
\end{figure}
The complexity of the problem increases considerably for a four-node hyperedge $e = \{1,2,3,4\}$ with a general submodular splitting function. We do not have a proof of graph reducibility for this case, but we present here a partial answer for how to model 4-node \emph{symmetric} submodular splitting functions. In this case, we must assign a penalty for each set of one or two nodes that could be on the same side of the split. The splitting function is determined by $7 = 2^3 - 1$ penalties: $p_1$, $p_2$, $p_3$, $p_4$, $p_{12}$, $p_{13}$, and $p_{14}$, where $p_i$ represents the penalty for placing node $i \in \{1,2,3,4\}$ by itself, and $p_{ij}$ is the penalty for putting nodes $i$ and $j$ together. Observe that $p_{34} = p_{12}$, $p_{23} = p_{14}$, and $p_{24} = p_{13}$. 

To model this hyperedge, we introduce 7 carefully constructed \emph{submodular basis gadgets}, one for each of the possible splitting penalties for a submodular hyperedge splitting function.
After, we follow the ideas in Section~\ref{sec:reduction} and take nonnegative linear combinations of the basis gadgets.
This leads to a system of equations that is invertible under certain conditions.
Inverting the system will produce a set of inequalities that defines a class of submodular hyperedge splitting functions that can be modeled using the basis gadgets. 

Figure~\ref{fig:four_node_basis} illustrates a general class of basis gadgets. The basic edge and node structure for each gadget is the same as our CB-gadget.
However, we consider a wider range of different possible edge weights, in order to assign penalties that distinguish between different specific node subsets, rather than assigning penalties that depend only on the number of nodes on the small side of the split. In total, we use five edge parameters ($\alpha$, $\beta$, $\gamma$, $\delta$, and $\gamma$), and we define {Type-1} and {Type-2} basis gadgets. Type-1 gadgets correspond to the one-node penalties $\{p_1, p_2, p_3, p_4\}$ and have a special weight $\alpha$ for the edges adjacent to a single special node. For example, the Type-1 gadget for penalty $p_1$ has weight $\alpha$ for edges adjacent to node 1, another edge weight $\beta$ for edges adjacent to nodes 2, 3, and 4, and a third edge weight $\gamma$ for the edge between auxiliary vertices. These weights are also used for the other Type-1 gadgets. Type-2 gadgets correspond to penalties $p_{12}$, $p_{13}$, and $p_{14}$, and in a similar fashion have edge weights $\delta$ and $\varepsilon$, which distinguish between edges based on whether or not they are adjacent to a certain \emph{pair} of nodes. Given that we will take positive linear combinations of basis gadgets, we only need to be concerned with the relative scale between the edge parameters we assign. Therefore, without loss of generality we can fix the weight for one of the edge types. For this reason, for Type-2 gadgets, we always fix the weight of the edge from one auxiliary vertex to another to be 1. (We could use different edge weights for each of the basis gadgets; however, the parameter space is already challenging to navigate, so for our partial answer we restrict to using the same parameters for Type-1 gadgets, and the same parameters for all Type-2 gadgets.)

By trying different values for edge parameters and then taking linear combinations of basis gadgets, we can set up a matrix equation that defines a gadget splitting function corresponding to the combination of basis gadgets. As an example, consider setting $\alpha = \gamma = 1$, $\beta = \delta = 1/2$, and $\varepsilon = 1/4$. It is not hard to reason through the different penalties that each gadget would assign to each bipartition of the hyperedge. For the given parameters, the resulting linear system is:
\begin{equation}
\label{matrixeq_gen}
\begin{bmatrix}
1 & 	\frac{1}{2} & 	\frac{1}{2} & 	\frac{1}{2} & 	1 & 	1 & 	1   \\ 
\frac{1}{2} & 	1 & 	\frac{1}{2} & 	\frac{1}{2} & 	1 & 	1 & 	1   \\ 
\frac{1}{2} & 	\frac{1}{2} & 	1 & 	\frac{1}{2} & 	1 & 	1 & 	1   \\ 
\frac{1}{2} & 	\frac{1}{2} & 	\frac{1}{2} & 	1 & 	1 & 	1 & 	1   \\ 
\frac{1}{2} & 	\frac{1}{2} & 	\frac{1}{4} & 	\frac{1}{4} & 	\frac{1}{2} & 	\frac{3}{4} & 	\frac{3}{4}   \\ 
\frac{1}{2} & 	\frac{1}{4} & 	\frac{1}{2} & 	\frac{1}{4} & 	\frac{3}{4} & 	\frac{1}{2} & 	\frac{3}{4}   \\ 
\frac{1}{2} & 	\frac{1}{4} & 	\frac{1}{4} & 	\frac{1}{2} & 	\frac{3}{4} & 	\frac{3}{4} & 	\frac{1}{2}   \\ 
\end{bmatrix}
\begin{bmatrix}
c_1 \\ c_2 \\ c_3 \\ c_4 \\ c_{12} \\ c_{13} \\ c_{14} 
\end{bmatrix}
=
\begin{bmatrix}
\hp_1 \\ \hp_2 \\ \hp_3 \\ \hp_4 \\ \hp_{12} \\ \hp_{13} \\ \hp_{14} 
\end{bmatrix}
= \hat{\vp}.
\end{equation}
where the vector $\hat{\vp}$
is the set of penalties which completely defines the resulting gadget splitting function. For the given choice of edge parameters, the above matrix is invertible. Inverting the system and constraining the right hand side to be nonnegative gives inequalities defining the class of submodular cardinality-based splitting functions that we can model by taking nonnegative linear combinations of these specific gadgets:
\begin{equation}
\label{invmatrixeq_gen}
\begin{bmatrix}
4 & 	0 & 	0 & 	0 & 	-2 & 	-2 & 	-2 & \\ 
2 & 	2 & 	0 & 	0 & 	-2 & 	-2 & 	-2 & \\ 
2 & 	0 & 	2 & 	0 & 	-2 & 	-2 & 	-2 & \\ 
2 & 	0 & 	0 & 	2 & 	-2 & 	-2 & 	-2 & \\ 
-2 & 	1 & 	-1 & 	-1 & 	-1 & 	3 & 	3 & \\ 
-2 & 	-1 & 	1 & 	-1 & 	3 & 	-1 & 	3 & \\ 
-2 & 	-1 & 	-1 & 	1 & 	3 & 	3 & 	-1 & \\ 
\end{bmatrix}
\begin{bmatrix}
\hp_1 \\ \hp_2 \\ \hp_3 \\ \hp_4 \\ \hp_{12} \\ \hp_{13} \\ \hp_{14} 
\end{bmatrix}
= \vc.
\end{equation}
Constraining the right hand side to be nonnegative, first row of equation~\eqref{invmatrixeq_gen} says that the set of edge penalties we can model with this approach must satisfy $4\hp_1 \geq 2\hp_{12} + 2\hp_{13} + 2\hp_{14}$. Other constraints on modelable penalties can be derived from the other seven rows. Our choice of edge parameters makes it possible to model a wide range of submodular splitting functions. However, it does not accommodate \emph{all} possible submodular hyperedge splitting functions. For example, consider the following submodular splitting function:
\begin{equation}
\vw_e(S) = \begin{cases} 
0 &\text{ if $S \in \{1, \{2,3,4\} \}$}\\
2 & \text{ otherwise.}
\end{cases}
\end{equation}
This is in fact the splitting function resulting from creating a clique on nodes 2, 3, and 4, and leaving node 1 detached from the rest. While this splitting function is clearly graph-reducible, the edge penalties do not satisfy the system of inequalities defined by inverting system~\eqref{matrixeq_gen}, and therefore cannot be modeled by applying the 7 basis gadgets with $\alpha = \gamma = 1$, $\beta = \delta = 1/2$, and $\varepsilon = 1/4$. Despite systematically checking a wide range of parameter settings, we were unable to identify settings for $\{\alpha, \beta, \gamma, \delta, \varepsilon\}$ which did not lead to a similar counterexample. Furthermore, it is not clear whether it is possible cover the entire submodular region by using different edge parameter settings to cover different subregions.
The submodular region is a complicated subset of $\mathbb{R}^7$, and it is not clear how to even find a minimal set of inequalities that characterizes this space.

If it is indeed possible to model all submodular splitting functions with a single fixed set of 7 basis functions, we conjecture that this may require using different edge weights for each gadget, rather than sharing weights across Type-1 gadgets and sharing weights across Type-2 gadgets. A more sophisticated approach may be needed to determine a more versatile set of basis functions, since it is infeasible to check a wide enough range of parameter settings by brute force, even for 4-node hyperedges. However, a working solution for the 4-node case may help illuminate a useful pattern that could be generalized to larger hyperedges.

\section{NP-Hard Regimes for Hypergraph $s$-$t$ Cuts}
\label{sec:tractability}
We now turn to several hardness results for classes of hypergraph minimum $s$-$t$ cut problems, focusing specifically on \emph{symmetric} cardinality-based splitting functions. Since enforcing the symmetry constraints only makes the problem more specific, any hardness result for symmetric splitting functions automatically implies the same hardness result for the more general class of splitting functions that are not required to be symmetric.

\subsection{Hardness Results for Cardinality-Based Splitting Functions}
\label{sec:nphardcb}
Our hardness results for cardinality-based \hc{} are based on reduction from \mcut{}. Given an unweighted and undirected graph $G = (V,E)$, the \mcut{} problem seeks a bipartition of $V$ that maximizes the number of cut edges. We restrict our attention to $r$-uniform hypergraphs in which every hyperedge has the \emph{same} {cardinality-based} splitting function. We refer to the corresponding hypergraph $s$-$t$ cut problem as $r$-CB \hc{}. As before, let $w_i$ denote the penalty associated with splitting up an $r$-node hyperedge in such a way that there are $i \leq \floor{r/2}$ nodes on the small side of the split. We begin with a result for 4-uniform hypergraphs. A cardinality-based splitting function for 3-node hyperedges is simply the all-or-nothing penalty, so 4-uniform hypergraphs provide the smallest example of cardinality-based hypergraph cut problems that could be NP-hard.
\begin{theorem}
	\label{thm:4maxcut}
	The 4-CB \hc{} problem is NP-hard for any $w_2 < w_1$.
\end{theorem}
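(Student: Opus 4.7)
The plan is to reduce from \mcut{}, which is well-known to be NP-hard. Given an instance of \mcut{} consisting of an unweighted, undirected graph $G = (V,E)$, I would construct a 4-uniform hypergraph $\mathcal{H}$ as follows: add two new terminal nodes $s$ and $t$ to $V$, and for every edge $\{u,v\} \in E$ introduce a 4-node hyperedge $e_{uv} = \{s, t, u, v\}$. Equip every hyperedge of $\mathcal{H}$ with the cardinality-based splitting function defined by $w_1$ (for 1--3 splits) and $w_2$ (for 2--2 splits), where $0 \le w_2 < w_1$ are as in the theorem statement.

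The key step is to analyze what happens to each hyperedge $e_{uv}$ under any feasible $s$-$t$ bipartition $(S, \bar{S})$. Because $s \in S$ and $t \in \bar{S}$ are forced, the hyperedge $e_{uv}$ is always cut, and its split depends only on where $u$ and $v$ land. If $u$ and $v$ fall on the same side of the bipartition, then $e_{uv}$ is split 1--3, incurring penalty $w_1$; if $u$ and $v$ fall on opposite sides, $e_{uv}$ is split 2--2, incurring penalty $w_2$. Consequently, for any feasible $S$,
\begin{equation*}
\cut_{\mathcal{H}}(S) \;=\; w_1 \cdot |E| \;-\; (w_1 - w_2)\cdot c(S),
\end{equation*}
where $c(S)$ is the number of edges of $G$ whose endpoints are separated by the bipartition of $V$ induced by $S$.

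Since $w_1 - w_2 > 0$, minimizing $\cut_{\mathcal{H}}(S)$ over feasible sets is equivalent to maximizing $c(S)$, i.e., solving \mcut{} on $G$. Any source-side solution of the \hc{} instance restricts (via $S \cap V$) to an optimal bipartition for \mcut{}, and conversely any optimal \mcut{} bipartition extends to a feasible $s$-$t$ cut of $\mathcal{H}$ with the corresponding value. The reduction is clearly polynomial (linear in $|V|+|E|$), so \mcut{} $\le_P$ 4-CB \hc{} whenever $w_2 < w_1$, yielding NP-hardness.

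The construction is essentially forced once one observes the useful feature of the hyperedge $\{s,t,u,v\}$: with $s,t$ fixed on opposite sides, the 2--2 vs.\ 1--3 distinction becomes a clean indicator for whether $u$ and $v$ are separated. There is no real obstacle beyond this observation; the only thing to double-check is the nonnegativity of $w_2$ and the strict inequality $w_2 < w_1$, which together ensure that the affine transformation between $\cut_{\mathcal{H}}(S)$ and $c(S)$ is order-reversing so that minimization and maximization line up correctly.
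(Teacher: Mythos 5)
Your proposal is correct and takes essentially the same approach as the paper: the identical reduction from \mcut{} via hyperedges $\{s,t,u,v\}$, with the same observation that a 2--2 versus 1--3 split cleanly indicates whether $u$ and $v$ are separated. Your explicit affine identity $\cut_{\mathcal{H}}(S) = w_1|E| - (w_1-w_2)c(S)$ is a slightly more formal presentation of the equivalence the paper argues in words, but the argument is the same.
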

\begin{proof}
	Consider an instance of $\mcut$ given by a graph $\hat{G} = (\hat{V}, \hat{E})$. To reduce this to an instance of 4-CB \hc{}, introduce source and sink nodes $s$ and $t$, and for each $(u,v) \in \hat{E}$, introduce a 4-node hyperedge $(s,t,u,v)$ (see Figure~\ref{fig:4max}).
\begin{figure}
	\begin{minipage}[c]{0.45\linewidth}
		\centering
		\includegraphics[width=\textwidth]{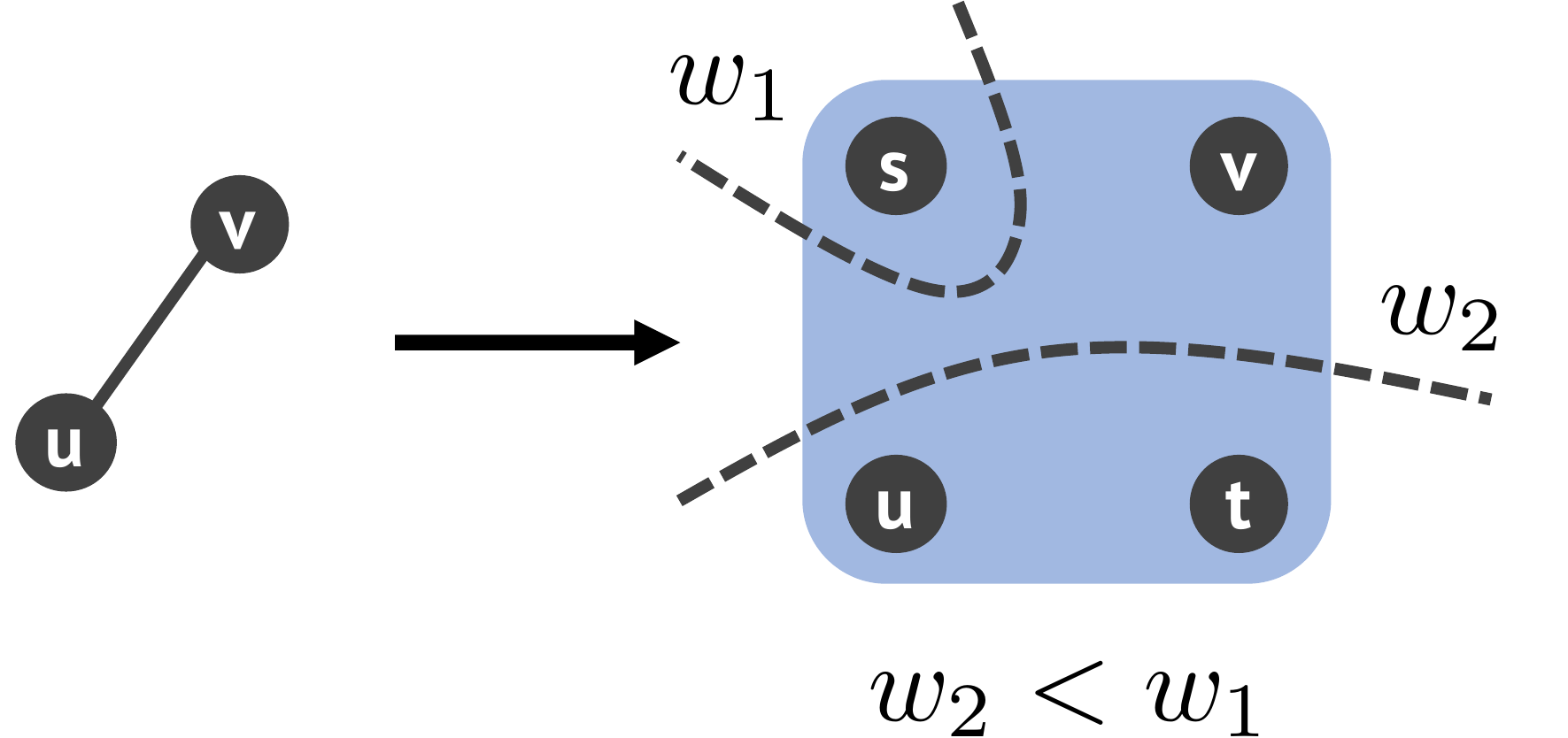}
	\end{minipage}
	\hspace{0.5cm}
	\begin{minipage}[c]{0.4\linewidth}
		\caption{We can reduce an instance of \mcut{} to 4-uniform cardinality-based hypergraph $s$-$t$ cut by introducing a hyperedge $(s,t,u,v)$ for every edge $(u,v)$ in the \mcut{} instance. For any $w_2 < w_1$, the problems are equivalent.}
	\label{fig:4max}
	\end{minipage}
\end{figure}
 The number of hyperedges in the 4-CB \hc{} problem is equal to the number of edges in $\hat{G}$, and they all must be cut in one way or another since all hyperedges involve both $s$ and $t$. Since $w_2 < w_1$, it is cheaper to separate $u$ and $v$, so that one is on the $s$-side and the other is on the $t$-side. The goal of 4-CB \hc{} is therefore equivalent to  bipartitioning the nodes $\hat{V}$ in a way that maximizes the number of $w_2$ splits, which is equivalent to maximizing the number of cut edges $(u,v) \in \hat{E}$ in $\hat{G}$. Thus, this instance of 4-CB \hc{} is equivalent to \mcut{} on $\hat{G}$.
\end{proof}

\subsubsection*{General strategy for identifying NP-hard regimes}
We can generalize the proof technique in Theorem~\ref{thm:4maxcut} to identify other NP-hard regimes for $r$-CB~\hc{} when $r > 4$. The general proof strategy is to replace each edge $(u,v)$ from an instance of \mcut{} with a small hypergraph gadget $\mathcal{H}_{uv}$, and then identify splitting penalty regimes that implicitly reward hyperedge splits that separate $u$ and $v$. Our proof of Theorem~\ref{thm:4maxcut} fits this paradigm, with a hypergraph gadget defined by a single hyperedge $(u,v,s,t)$, and splitting penalties satisfying $w_2 < w_1$. To generalize this result, we define the $(r,j)$-maxcut-gadget for an edge $(u,v)$, where $r$ denotes hyperedge size, and $j \leq \floor{r/2}$ is an index for a splitting penalty $w_j$.
\begin{definition}
	\label{def:rjgadget}
	Let $r \geq 4$ be an integer, $(u,v)$ be an edge in an instance of \mcut{}, and $\{s,t\}$ be source and sink nodes. The $(r,j)$-maxcut-gadget on $(u,v)$ is a hypergraph defined by the following disjoint auxiliary node sets and hyperedges.
\[
\begin{minipage}[c]{0.55\linewidth}
\textbf{Auxiliary Nodes}
\begin{itemize}
	\item $A$: set of $j - 2$ auxiliary nodes
	\item $B$: set of $r - j -2$ auxiliary nodes
	\item $U$: set of $j +1$ auxiliary nodes
	\item $V$: set of $r-j+1$ auxiliary nodes.
\end{itemize}
\end{minipage}\hfill
\begin{minipage}[c]{0.4\linewidth}

\textbf{Hyperedges}
\begin{itemize}
	\item $h_{st} = \{u,v,s,t, A, B \}$
	\item $h_u = \{u, B, U\}$
	\item $h_v = \{v, A, V\}$.
\end{itemize}
\end{minipage}
\]
\end{definition}
A visualization of the $(r,j)$-maxcut-gadget is given in Figure~\ref{fig:wjgadget}. By design, all hyperedges in the gadget have $r$ nodes. When reducing an instance of \mcut{} defined on a graph $\hat{G} = (\hat{V}, \hat{E})$ to an instance of $r$-CB \hc{}, the same source and sink nodes will be shared by all maxcut-gadgets. Each node $u \in \hat{V}$ will appear in multiple gadgets (equal to the number of edges $u$ is in), and each auxiliary node will be unique to the maxcut-gadget for which it was introduced. We prove the following Lemma regarding special splitting functions that can be applied to the $(r,j)$-maxcut-gadget. This generalizes the regime $w_2 < w_1$ that we considered for Theorem~\ref{thm:4maxcut}.
	\begin{figure}[h]
	\begin{minipage}[c]{0.55\linewidth}
		\centering
		\includegraphics[width=\linewidth]{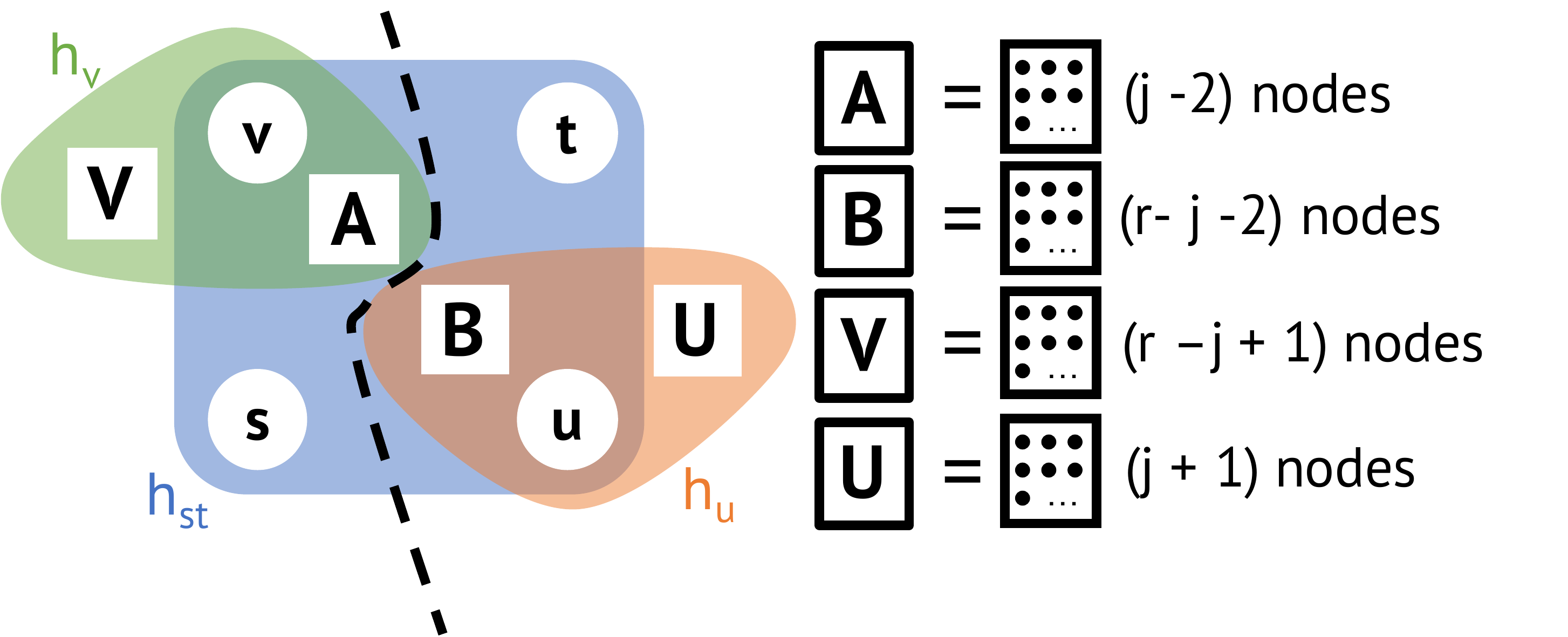}
	\end{minipage}
	\hspace{0.1cm}
	\begin{minipage}[c]{0.4\linewidth}
		\caption{If we use a splitting function satisfying~\eqref{NPhardweights}, the minimum $s$-$t$ cut in the $(r,j)$-maxcut-gadget $\mathcal{H}_{uv}$ is $w_j$, which is achieved only when $u$ and $v$ are separated. 
	When $u$ and $v$ are constrained to be together, the minimum $s$-$t$ cut is some value $y >w_j$.}
		\label{fig:wjgadget}
	\end{minipage}
\end{figure}

\begin{lemma}
	\label{lem:rj}
	For $r \geq 4$ and $j \in \{2, 3, \hdots \floor{r/2} \}$, let $\mathcal{H}_{uv}$ be the $(r,j)$-maxcut-gadget for an edge $(u,v)$ in an instance of \mcut{}. Associate each hyperedge in the gadget with a cardinality-based splitting function with penalties satisfying
	\begin{equation}
	\label{NPhardweights}
	0 < w_j  < w_1 \text{ and } w_j \leq w_i \text{ for all $i \neq 1$}.
	\end{equation} 
	Then the minimum $s$-$t$ cut in $\mathcal{H}_{uv}$ is $w_j$, which can only be achieved when $u$ and $v$ are clustered apart. If we constrain $u$ and $v$ to be on the same side of the split, the minimum $s$-$t$ cut score is a value strictly greater than $w_j$.
\end{lemma}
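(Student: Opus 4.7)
My plan is to prove the lemma in three steps: (i) exhibit an explicit $s$-$t$ bipartition that separates $u$ and $v$ and costs exactly $w_j$, giving an upper bound on the unconstrained minimum; (ii) observe that every $s$-$t$ bipartition of $\mathcal{H}_{uv}$ costs at least $w_j$, since $h_{st}$ must be cut and $w_j = \min_i w_i$ under \eqref{NPhardweights}; and (iii) show that any $s$-$t$ bipartition with $u$ and $v$ on the same side costs strictly more than $w_j$. Steps (i) and (ii) yield that the minimum cut equals $w_j$, and step (iii) yields that every minimizer separates $u$ and $v$ and simultaneously gives the claimed strict increase when $u$ and $v$ are constrained to lie together.

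For step (i), place $u$ and all of $B \cup U$ in $S$, and place $v$ together with all of $A \cup V$ in $\bar S$. Then $h_u = \{u\} \cup B \cup U$ lies entirely in $S$ and contributes $0$; $h_v = \{v\} \cup A \cup V$ lies entirely in $\bar S$ and contributes $0$; and $h_{st}$ has $\{s,u\} \cup B$ on the source side (total $r-j$ nodes) and $\{t,v\} \cup A$ on the sink side (total $j$ nodes), for a penalty of $w_j$ since $j \le r-j$. Step (ii) follows because $h_{st}$ contains both $s$ and $t$ and so is always cut, contributing at least $\min_i w_i = w_j$.

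For step (iii), suppose for contradiction that some $s$-$t$ bipartition with $u$ and $v$ on the same side has total cost at most $w_j$. Since $h_{st}$ is always cut and every cut hyperedge costs at least $w_j > 0$, this forces $h_u$ and $h_v$ to both be uncut, and $h_{st}$ to be cut with penalty at most $w_j$. Uncut $h_u$ forces $B \cup U$ to the same side as $u$, and uncut $h_v$ forces $A \cup V$ to the same side as $v$. Combined with $u$ and $v$ together, say both in $S$ (the case $u,v \in \bar S$ is symmetric by swapping the roles of source and sink), this places $A, B \subseteq S$, so $h_{st}$ has $\{s,u,v\} \cup A \cup B$ on the source side (a total of $r-1$ nodes) and only $\{t\}$ on the sink side. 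The resulting penalty is $w_1$, which by hypothesis strictly exceeds $w_j$, contradicting the assumed cost bound.

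The argument is essentially bookkeeping once the gadget structure is unpacked, so I do not foresee serious obstacles. The one place where the strict inequality $w_j < w_1$ (rather than $w_j \le w_1$) is genuinely needed is step (iii), where ruling out the $1$-vs-$(r-1)$ split of $h_{st}$ requires $w_1 > w_j$. The only edge case to confirm is $r = 4$, $j = 2$, where $A = B = \emptyset$; the same bookkeeping still goes through since $h_{st} = \{s,t,u,v\}$ then admits a $2$-vs-$2$ split exactly when $u,v$ are separated and only a $1$-vs-$3$ split when they are together.
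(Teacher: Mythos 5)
Your proof is correct and follows essentially the same route as the paper's: exhibit the bipartition placing $\{s,u\}\cup B\cup U$ against $\{t,v\}\cup A\cup V$ to achieve $w_j$, note that $h_{st}$ is always cut so $w_j=\min_i w_i$ is a lower bound, and observe that when $u,v$ are together either one of $h_u,h_v$ is also cut (cost $\geq 2w_j$) or both are uncut, isolating $s$ or $t$ in $h_{st}$ and incurring $w_1>w_j$. The only cosmetic difference is that you phrase the last step as a contradiction while the paper gives a direct two-case analysis; the content is identical.
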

\begin{proof}
If $\{v,s\}$ are clustered together and $\{u,t\}$ are clustered together, then the optimal bipartition of $\mathcal{H}_{uv}$ places $\{u, t, B, U\}$ together and $\{v, s, A, V\}$ together. This leads to a penalty of $w_j$, since there are $|A| + 2 = j$ nodes on one side of the split of hyperedge $h_{st}$, and $|B| + 2 = r-j$ nodes on the other side. Hyperedges $h_u$ and $h_v$ are not split, so they incur no penalty. This is the minimum possible penalty we can incur at this gadget, since $s$ and $t$ must be split up in some way, and $w_j$ is the smallest among all splitting penalties by~\eqref{NPhardweights}. Similarly, we can show this minimum penalty of $w_j$ is achieved if instead $\{s,u\}$ are clustered together and $\{t,v\}$ are clustered together.

However, if $u$ and $v$ are both clustered with $s$ or both clustered with $t$, there are two possibilities. Either one of $\{s,t\}$ is clustered by itself, or at least one of $\{h_u, h_v\}$ will be cut in addition to $h_{st}$. The first case leads to a penalty of $w_1 > w_j$. In the second case, the penalty will be at least $2w_j > w_j$, since at least two hyperedges are cut. Given the symmetric relationship between $s$ and $t$ in $\mathcal{H}_{uv}$, this penalty will be the same whether $\{u,v\}$ are clustered with $s$ or with $t$. Thus, the minimum $s$-$t$ cut penalty is achieved only when $u$ and $v$ are clustered apart, and in every other case the penalty will be strictly greater. 
\end{proof}

Lemma~\ref{lem:rj} leads to the following generalized hardness result.
\begin{theorem}
	Every $r$-CB \hc{} problem with a splitting function satisfying property~\eqref{NPhardweights} is NP-hard.
\end{theorem}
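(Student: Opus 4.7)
The plan is to reduce \mcut{} to $r$-CB \hc{} by stitching together one copy of the $(r,j)$-maxcut-gadget per edge of the \mcut{} instance, so that Lemma~\ref{lem:rj} forces the optimum to separate as many original pairs $(u,v)$ as possible. This directly generalizes the proof of Theorem~\ref{thm:4maxcut}, where the gadget was a single hyperedge $\{s,t,u,v\}$ and the regime was $w_2 < w_1$.

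Given an instance of \mcut{} on an unweighted graph $\hat{G} = (\hat{V}, \hat{E})$, I would construct a hypergraph $\mathcal{H}$ as follows. Introduce fresh terminals $s$ and $t$, keep the node set $\hat V$, and for each edge $(u,v) \in \hat{E}$ introduce a private copy of the auxiliary sets $A,B,U,V$ from Definition~\ref{def:rjgadget} together with the three hyperedges $h_{st}, h_u, h_v$ of the $(r,j)$-maxcut-gadget $\mathcal{H}_{uv}$. The terminals $s,t$ and the vertices $u,v$ are shared across gadgets; all auxiliary nodes are private to their gadget. Every hyperedge has exactly $r$ nodes, and I assign each the cardinality-based splitting function whose penalties satisfy \eqref{NPhardweights}. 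This reduction is polynomial, adding only $O(r)$ nodes and $3$ hyperedges per edge of $\hat G$.

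The core observation is that the cut decomposes over gadgets. For any $S$ with $s \in S$ and $t \notin S$,
\[
\cut_\mathcal{H}(S) = \sum_{(u,v) \in \hat{E}} c_{uv}(S),
\]
where $c_{uv}(S)$ aggregates the penalties of the three hyperedges of $\mathcal{H}_{uv}$. Since the auxiliary nodes of $\mathcal{H}_{uv}$ appear in no other gadget, each $c_{uv}(S)$ can be locally minimized given the placement of $s,t,u,v$. Lemma~\ref{lem:rj} then yields $\min c_{uv}(S) = w_j$ when $u,v$ lie on opposite sides and $\min c_{uv}(S) = y$ when they lie together, where $y > w_j$ is a fixed constant depending only on $r$, $j$, and the splitting function (so it is the same for every edge of $\hat G$). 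Writing $k$ for the number of edges of $\hat G$ separated by the induced bipartition of $\hat V$, the globally optimal auxiliary placement achieves cut value
\[
k \cdot w_j + (|\hat{E}| - k)\cdot y \;=\; |\hat{E}|\cdot y - (y - w_j)\cdot k.
\]
Since $y - w_j > 0$, minimizing $\cut_\mathcal{H}$ is equivalent to maximizing $k$, i.e.\ to solving \mcut{} on $\hat G$, and NP-hardness follows.

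The main obstacle I foresee is cleanly justifying that the gadget-local optima combine to the global optimum. This rests on two points: auxiliary nodes of distinct gadgets are disjoint, so no hyperedge couples two gadgets in the cut sum; and Lemma~\ref{lem:rj} supplies an \emph{attainable} minimum for each gadget by exhibiting the optimal placement of $A,B,U,V$, so both sides of the equivalence are realized. A minor point to verify in passing is that the constant $y$ really is independent of $(u,v)$, which is immediate because all gadgets are isomorphic as hypergraphs and carry the same splitting function.
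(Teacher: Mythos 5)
Your proof is correct and follows essentially the same route as the paper: one $(r,j)$-maxcut-gadget per edge of $\hat G$ with shared terminals and private auxiliary nodes, gadget-wise decomposition of the cut justified by the disjointness of auxiliary sets, and Lemma~\ref{lem:rj} supplying the per-gadget optima $w_j$ versus $y>w_j$, yielding the same objective $|\hat E|\,y - (y-w_j)\,k$. The paper additionally notes explicitly (via the symmetry of $s$ and $t$ in the gadget) that $y$ does not depend on whether $\{u,v\}$ sit with $s$ or with $t$, a point your appeal to Lemma~\ref{lem:rj} already covers.
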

\begin{proof}
  Let $\hat{G} = (\hat{V}, \hat{E})$ represent an instance of $\mcut$. Introduce two terminal nodes $s$ and $t$, and for each $(u,v) \in \hat{E}$, construct an $(r,j)$-maxcut-gadget with a splitting function satisfying property~\eqref{NPhardweights}. This produces a hypergraph $\mathcal{H}_r$ with $3|\hat{E}|$ hyperedges, representing an instance of $r$-CB \hc{}. The same terminal nodes $s$ and $t$ are shared across all maxcut-gadgets in $\mathcal{H}_r$, and each node $u \in \hat{V}$ shows up in exactly $d_u$ gadgets, where $d_u$ is the degree of node $u$ in $\hat{G}$. Each auxiliary vertex that is introduced shows up in exactly one maxcut-gadget. 
	
	At optimality, the $\mcut$ and $r$-CB \hc{} objectives depend only on the underlying bipartition of $\hat{V}$. This is true for \mcut{} simply by definition. The reason this also holds true for $r$-CB \hc{} is that each auxiliary node is associated with a unique $(r,j)$-maxcut-gadget. Therefore, given any bipartition of $\hat{V}$, we can arrange all auxiliary nodes in such a way that the penalty at each gadget is minimized, subject to the placement of nodes from $\hat{V}$ (each of which may appear in multiple gadgets). 
	Consider a fixed set $S \subset \hat{V}$, which defines a set of cut edges. If an edge $(u,v)$ is cut, by Lemma~\ref{lem:rj}, we know there is a penalty of $w_j$ at its gadget. If $(u,v)$ is not cut, then the penalty at its gadget is some value $y$. The value $y$ may differ depending on the splitting function. However, as long as~\eqref{NPhardweights} is satisfied, Lemma~\ref{lem:rj} guarantees that $y > w_j$. Furthermore, this value $y$ will be the same across all gadgets $\mathcal{H}_{uv}$ for which $u$ and $v$ are clustered together. Therefore, the minimum possible cut score in $\mathcal{H}_r$ that has $S$ on one side and $\hat{V} - S$ on the other, is given by
	\begin{equation}
	\label{eq:SstH}
	\textbf{Hyper-st-Cut}(S) = \sum_{uv \in \partial S} w_j + \sum_{ uv \notin \partial S } y = |\hat{E}| y - \sum_{uv \in \partial S} (y - w_j)\,,
	\end{equation}
	where $\partial S$ denotes the set of edges that cross between $S$ and $\hat{V} - S$. Meanwhile, the $\mcut$ score for the set $S$ is
	\begin{equation}
	\label{eq:maxcutobj}
	\textbf{MaxCut}(S) = |\partial S| = \sum_{uv \in \partial S} 1.
	\end{equation}
	Finally, since $(y-w_j) > 0$,
	$\argmax_S \,\, \textbf{MaxCut}(S) = \argmin_{S} \,\, \textbf{Hyper-st-Cut}(S)$.
\end{proof}



\subsection{Hardness Results in Needy-Node Hypergraphs}
Recall that for three-node hyperedges, cardinality-based splitting functions are equivalent to the all-or-nothing penalty. We briefly deviate from our study of cardinality-based hyperedge splitting functions to show that in the most general case, \hstgen{} is in fact NP-hard to approximate to within any multiplicative factor, even for three-uniform hypergraphs. We show this by first reducing a general instance of boolean satisfiability to a special case of \hstgen{} that we call \emph{Needy-Node} \hc{}. We then consider a related reduction from 3-SAT which allows us to prove this same hardness result holds even in the case of 3-uniform hypergraphs. 

\subsubsection*{Boolean satisfiability}
Boolean satisfiability (SAT) is an NP-complete problem that is frequently used in NP-hardness reductions~\cite{Garey:1990:CIG:574848}. An instance of SAT is given by a formula of boolean variables $x_1, x_2, \hdots x_n$ and their negations $\neg x_1, \neg x_2, \hdots, \neg x_n$ (both of which are called \emph{literals}) and the goal is to find an assignment of variables to \emph{true} and \emph{false} so that the formula evaluates to \emph{true}. The formula is said to be in conjunctive normal form if the formula is expressed as a conjunction of clauses, where a clause is a disjunction of literals. Informally, a formula is in conjunctive normal form if it is an AND ($\land$) of ORs ($\lor$), e.g. $(x_1 \lor x_2 \lor \neg x_3) \land (\neg x_2 \lor x_3) \land (\neg x_1 \lor x_4 \lor x_3 \lor x_2)$. In the special case where all clauses involve exactly 3 literals, the problem is known as 3SAT, and remains NP-complete. 

\subsubsection*{Needy-Node \hc} 
Let $\mathcal{H} = (V, E)$ be a hypergraph, and define hyperedge splitting functions so that every hyperedge $e \in E$ is associated with a special ``needy-node'' $z_e \in e$, such that the splitting function $\vw_e$ is
\begin{equation}
\label{eq:needynode}
\vw_e(S) = \begin{cases} 1 & \text{ if $z_e$ is clustered by itself} \\
				0 & \text{ otherwise.}
				\end{cases}
\end{equation}
In other words, a penalty is incurred at $e$ if and only if the needy node is not clustered with at least one other node from the hyperedge. A node may play the role of needy node in some hyperedges, without being the needy node for all the hyperedges it is in. Solving the minimum $s$-$t$ cut problem on $\mathcal{H}$ with these splitting functions is called \nnhc.
\begin{figure}[t]
	\centering
	\subfloat[Penalty incurred if needy node is alone.\label{fig:nn1}]
	{\includegraphics[width=.3\linewidth]{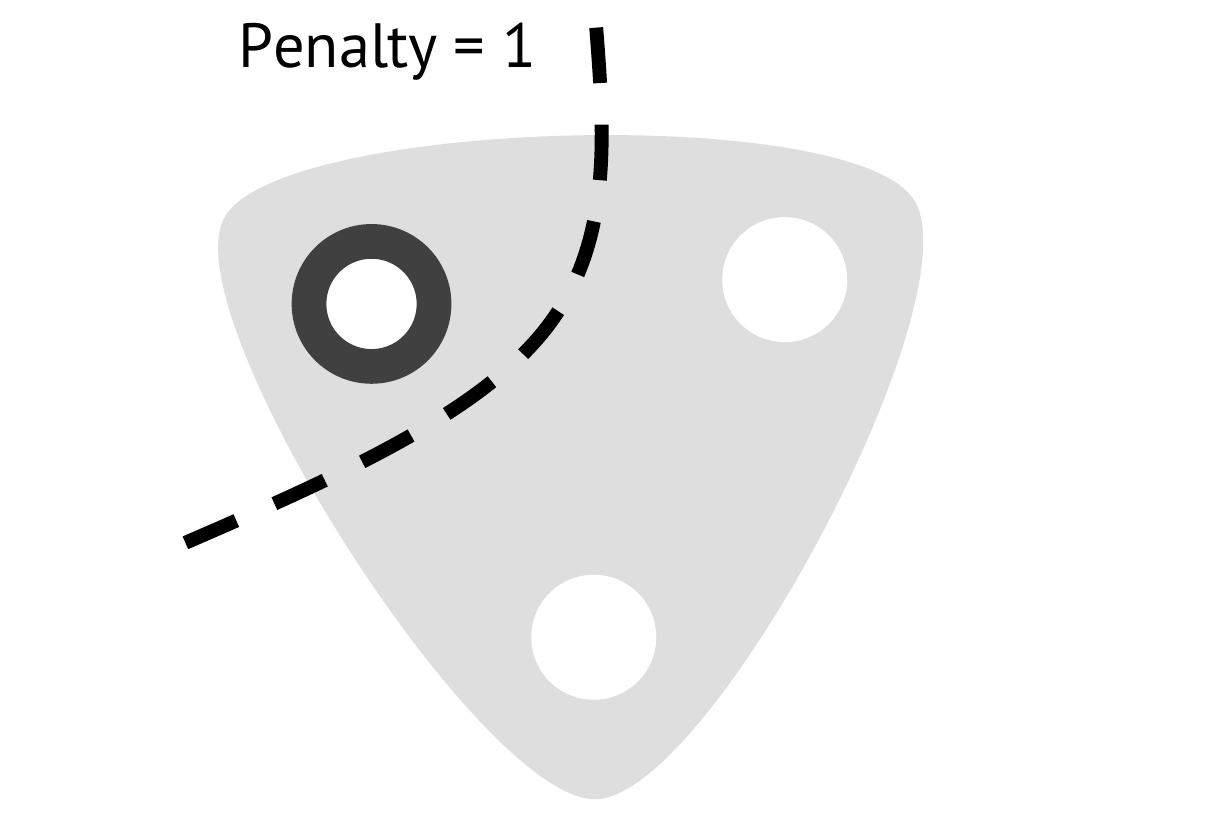}}\hfill
	\subfloat[One of $\{x_i, \neg x_i\}$ is true, the other is false. \label{fig:nn2}]
	{\includegraphics[width=.3\linewidth]{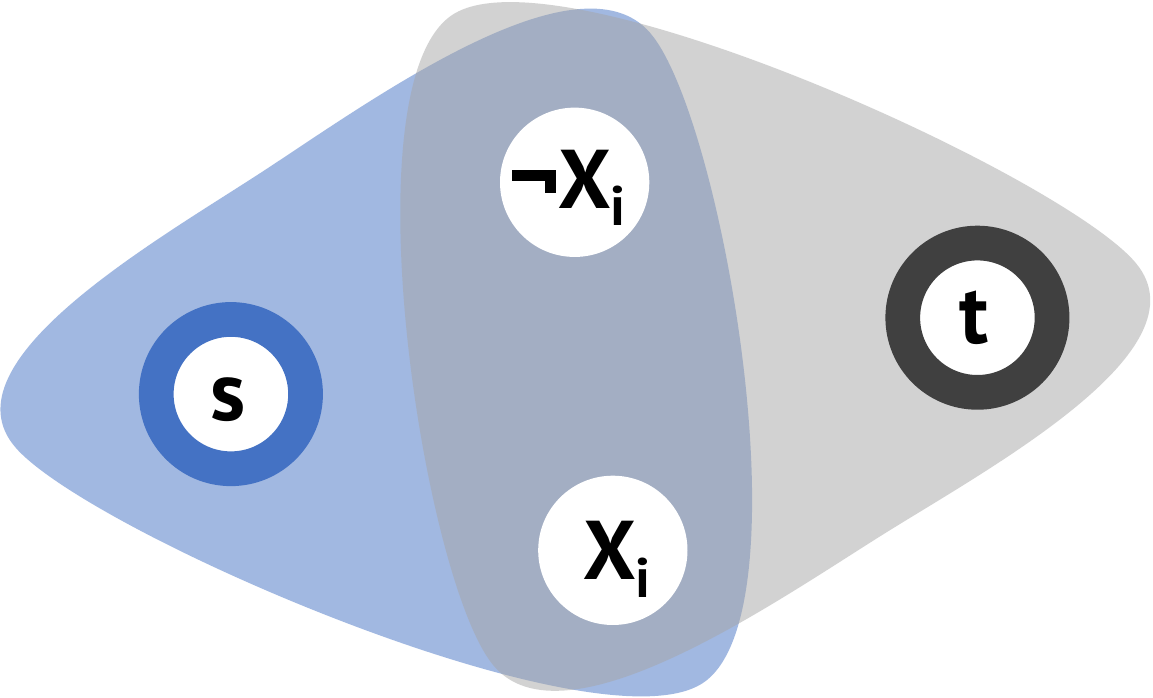}}\hfill
	\subfloat[At least one literal will be true for clause $(x_i \lor x_j \lor x_k)$ \label{fig:nn3}]
	{\includegraphics[width=.3\linewidth]{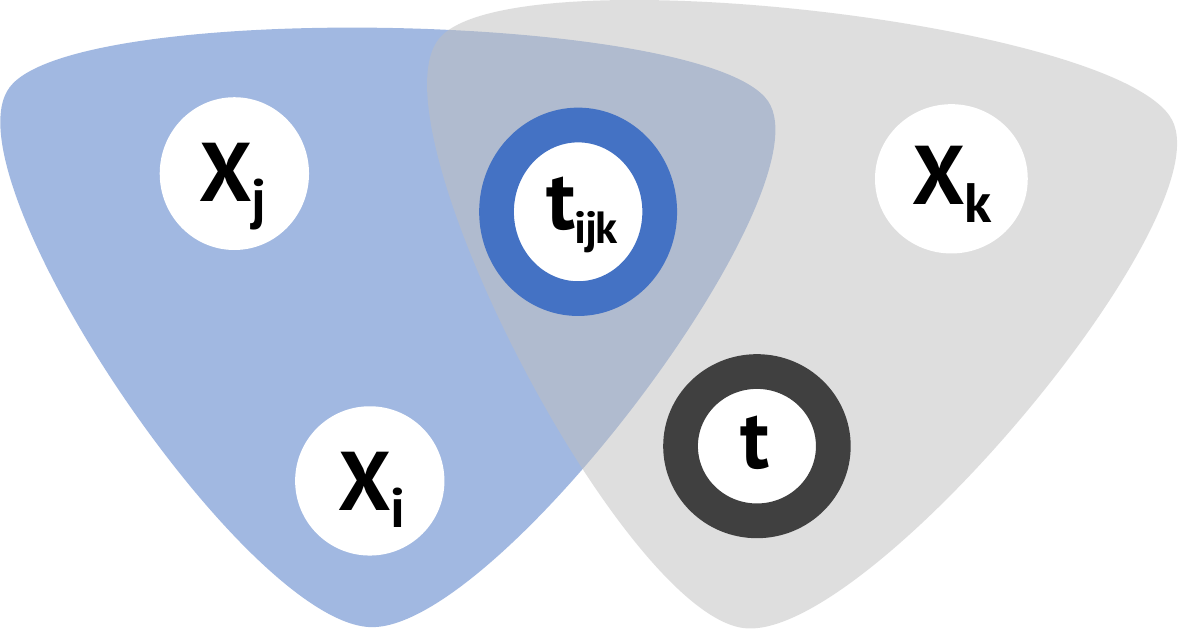}}\hfill
	\caption{Any instance of 3SAT can be cast as \nnhc{}. Each circled node is the needy-node for the hyperedge of the corresponding color. Clustering a node with the sink $t$ is equivalent to assigning \emph{true} to the literal. Clustering $x_i$ with $s$ means assigning the literal to \emph{false}.}
	\label{fig:3sat}
\end{figure}

\begin{theorem}
	\label{thm:sat}
	Any instance of SAT can be reduced in polynomial time and space to an instance of \nnhc.
\end{theorem}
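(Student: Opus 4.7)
The plan is to encode a SAT instance directly as an \nnhc{} instance whose optimal value is $0$ if and only if the formula is satisfiable. Given a formula $\Phi$ in conjunctive normal form with variables $x_1,\ldots,x_n$ and clauses $C_1,\ldots,C_m$, I would build a hypergraph $\mathcal{H}=(V,E)$ with terminal nodes $s$ and $t$, and for each variable two literal nodes $x_i$ and $\neg x_i$. The intended semantics is that a literal placed on the $t$-side of the cut is assigned \emph{true} and one placed on the $s$-side is assigned \emph{false}.

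For each variable $x_i$ I would introduce two consistency hyperedges: $e_1^i = \{s, x_i, \neg x_i\}$ with needy node $s$, and $e_2^i = \{t, x_i, \neg x_i\}$ with needy node $t$. By the splitting function~\eqref{eq:needynode}, $e_1^i$ contributes penalty $0$ exactly when at least one of $x_i,\neg x_i$ lies on the $s$-side, while $e_2^i$ contributes $0$ exactly when at least one lies on the $t$-side. Both vanish simultaneously iff the two literal nodes are placed on opposite sides, i.e.\ the assignment is consistent. For each clause $C_j = \ell_{j,1}\lor\cdots\lor\ell_{j,k_j}$, where each literal $\ell_{j,k}$ is one of the already-existing nodes $x_i$ or $\neg x_i$, I would add a clause hyperedge $c_j = \{t,\ell_{j,1},\ldots,\ell_{j,k_j}\}$ with needy node $t$. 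Its penalty is $0$ iff at least one literal in $C_j$ is on the $t$-side, which is precisely clause satisfaction.

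The construction uses $2n+2$ nodes and $2n+m$ hyperedges and can clearly be written down in time and space polynomial in $|\Phi|$. To close the reduction I would verify both directions of the equivalence. Forward: given a satisfying assignment, place each literal on the side matching its truth value; by the case analysis above every hyperedge has its needy node non-isolated, so the $s$-$t$ cut value is $0$. Backward: in any $s$-$t$ cut of value $0$ no needy node is isolated within its hyperedge, so the consistency hyperedges force exactly one of $\{x_i,\neg x_i\}$ to the $t$-side, and the clause hyperedges force each $C_j$ to contain a $t$-side literal; reading off the $t$-side assignment yields a model of $\Phi$.

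There is no real technical obstacle, but the one point requiring care is that a single consistency hyperedge per variable is insufficient: it would leave room for both literals to land on the same side without incurring a penalty, producing ill-defined assignments. Pairing an $s$-needy hyperedge with a $t$-needy hyperedge is exactly what removes this slack, and it relies on the asymmetry between ``needy node isolated'' and ``needy node with company'' that \eqref{eq:needynode} provides. Since the optimal \nnhc{} value is $0$ when $\Phi$ is satisfiable and at least $1$ otherwise, the same construction simultaneously gives the inapproximability statement in the paragraph preceding the theorem, because no finite multiplicative factor can distinguish $0$ from a positive optimum.
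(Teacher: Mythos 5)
Your construction is exactly the reduction used in the paper: the same literal nodes, the same pair of consistency hyperedges $(s,x_i,\neg x_i)$ and $(t,x_i,\neg x_i)$ with needy nodes $s$ and $t$ respectively, and the same clause hyperedges with $t$ as the needy node, together with the observation that the optimum is $0$ iff the formula is satisfiable. The argument is correct, and your added remarks (why a single consistency hyperedge per variable would not suffice, and the resulting inapproximability) are consistent with the paper's discussion following the theorem.
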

\begin{proof}
Consider a SAT problem on variables $x_1, x_2, \hdots , x_n$ and their negations $\neg x_1 \neg x_2, \hdots , \neg x_n$ in conjunctive normal form. Construct an instance of \nnhc{} as follows:
\begin{itemize}
	\item Introduce a node for each literal $x_i$ and another node for $\neg x_i$.
	\item Introduce a source node $s$ and sink node $t$, corresponding to \emph{false} and \emph{true} assignments respectively.
	\item For each $(x_i, \neg x_i)$ pair, introduce a hyperedge $(s,x_i, \neg x_i)$ where $s$ is the needy-node, and another hyperedge $(t,x_i, \neg x_i)$ where $t$ is the needy-node.
	\item For each clause in the instance of SAT, introduce a hyperedge on the nodes defined for all literals in the clause, plus node $t$ as a needy node. 
\end{itemize}
With this construction, the minimum $s$-$t$ cut in the resulting graph will be zero if and only if there is a satisfying assignment for the SAT problem. The two hyperedges involving nodes $\{x_i, \neg x_i\}$ ensure that one of these nodes will be on the sink side of the cut, and the other will be on the source side. We can think of this as assigning one to \emph{true}, and the other to \emph{false}. The second type of hyperedge we introduce ensures that for every clause in the SAT problem, at least one of the nodes associated with the literals will be on the sink side of the cut, i.e.\ the \emph{true} side.
\end{proof}

We can adapt the above result in order to get an NP-hardness result even if the graph is just 3-uniform. We do this by beginning with an instance of 3SAT, which is more restrictive than general SAT but still NP-complete. Figure~\ref{fig:3sat} illustrates the construction. As before, we introduce a source and sink node $s$ and $t$, and introduce hyperedges $(x_i, \neg x_i, s)$ and $(x_i, \neg x_i, t)$ for each literal $x_i$. Any clause in the instance of 3SAT will involve only three literals. For a clause $(x_i \lor x_j \lor x_k)$ (which may also involve negative literals), we will introduce a new node $t_{ijk}$. We then define two hyperedges: $(t_{ijk}, x_i, x_j)$ with needy node $t_{ijk}$, and $(t, t_{ijk}, x_k)$ with needy node $t$. (We can arbitrarily select any two nodes from $(x_i, x_j, x_k)$ to be in the hyperedge with $t_{ijk}$, as long as we place the third node with $t$ and $t_{ijk}$ in the other hyperedge.) This construction guarantees that at least one of ($x_i$, $x_j$, $x_k$) will end up with node $t$ if we solve the minimum $s$-$t$ cut problem and get a solution with zero penalty. Thus, the Needy-Node \hc{} problem will be zero if and only if there is a satisfying assignment. We can conclude the following theorem.
\begin{theorem}
	\label{thm:3sat}
	Needy-Node \hc{} in 3-uniform hypergraphs is at least as hard as 3SAT, and is therefore NP-hard.
\end{theorem}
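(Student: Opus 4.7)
The plan is to adapt the SAT-to-\nnhc{} construction from Theorem~\ref{thm:sat} so that every hyperedge has exactly three nodes, and to argue that the resulting optimum equals zero iff the 3SAT instance is satisfiable. Given a 3SAT instance with variables $x_1, \ldots, x_n$ and clauses $C_1, \ldots, C_m$ (each exactly three literals), I would build a 3-uniform hypergraph with source $s$, sink $t$, one node per literal $x_i$ and $\neg x_i$, and one auxiliary node $t_C$ per clause. The \emph{variable gadget} for $x_i$ uses the two hyperedges $(s, x_i, \neg x_i)$ with needy node $s$ and $(t, x_i, \neg x_i)$ with needy node $t$, as in Theorem~\ref{thm:sat}. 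The \emph{clause gadget} for $C = (\ell_i \lor \ell_j \lor \ell_k)$ uses $(t_C, \ell_i, \ell_j)$ with needy node $t_C$ together with $(t, t_C, \ell_k)$ with needy node $t$. The total size is linear in the 3SAT instance, so the reduction is polynomial.

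For the forward direction, given a satisfying assignment I would place a literal node on the $t$-side iff the literal is true, and choose the side of each $t_C$ based on a short case analysis. Each variable gadget then has zero penalty because the pair $\{x_i, \neg x_i\}$ straddles $s$ and $t$, so both needy terminals are joined to at least one literal. For the clause gadget, if at least one of $\ell_i, \ell_j$ is true I would place $t_C$ with $t$, which gives $t$ company in $(t, t_C, \ell_k)$ and gives $t_C$ company in $(t_C, \ell_i, \ell_j)$; otherwise the clause is satisfied only because $\ell_k$ is true, and I would instead place $t_C$ with $s$, which gives $t$ the company of $\ell_k$ in $(t, t_C, \ell_k)$ and $t_C$ the company of whichever of $\ell_i, \ell_j$ lies on the $s$-side (both of them, by assumption) in $(t_C, \ell_i, \ell_j)$.

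For the reverse direction, zero total penalty already forces each variable gadget to split the pair $\{x_i, \neg x_i\}$ across $s$ and $t$, which yields a well-defined truth assignment. Zero penalty on a clause gadget forces a case split on the side of $t_C$: if $t_C$ lies with $t$, then the $t_C$-needy hyperedge $(t_C, \ell_i, \ell_j)$ requires at least one of $\ell_i, \ell_j$ to lie with $t$; if $t_C$ lies with $s$, then the $t$-needy hyperedge $(t, t_C, \ell_k)$ requires $\ell_k$ to lie with $t$. Either way at least one literal of $C$ is true, so the assignment satisfies every clause. Because the objective is a sum of integer $0/1$ penalties and equals zero iff the 3SAT instance is satisfiable, any multiplicative-factor approximation would decide satisfiability, matching the inapproximability remark preceding the theorem.

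The main obstacle I expect is the case analysis for the clause gadget: one must verify both that every satisfying assignment admits a placement of $t_C$ yielding penalty zero, and that no placement of $t_C$ avoids penalty when the clause fails. This is a finite check but needs to be laid out explicitly, and I would note that the role of $\ell_k$ (versus $\ell_i, \ell_j$) in the gadget is arbitrary — any one of the three literals may be singled out without affecting the argument, since the correctness proof is symmetric in the two hyperedges modulo which side of the cut the auxiliary node takes.
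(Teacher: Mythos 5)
Your construction is exactly the one in the paper: the same variable gadgets $(s,x_i,\neg x_i)$ and $(t,x_i,\neg x_i)$, and the same two-hyperedge clause gadget with auxiliary node $t_C$, needy in $(t_C,\ell_i,\ell_j)$ and paired with the $t$-needy hyperedge $(t,t_C,\ell_k)$; your case analysis on the side of $t_C$ correctly establishes the zero-penalty equivalence in both directions. The proposal is correct and matches the paper's proof, down to the observation that the choice of which literal is singled out as $\ell_k$ is arbitrary.
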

Note that in Theorems~\ref{thm:sat} and~\ref{thm:3sat}, we have shown that it is NP-hard to detect whether a zero solution exists for the Needy-Node \hc{} instances we have defined. Thus, for these instances, it is NP-hard to even approximate the generalized hypergraph $s$-$t$ cut problem to within any multiplicative factor.

\subsection{Tractability Regions and Open Questions}
\label{sec:openstcut}
\begin{figure}[t]
	\centering
	\subfloat[4--5 node hyperedes \label{fig:tractable2}]
	{\includegraphics[width=.31\linewidth]{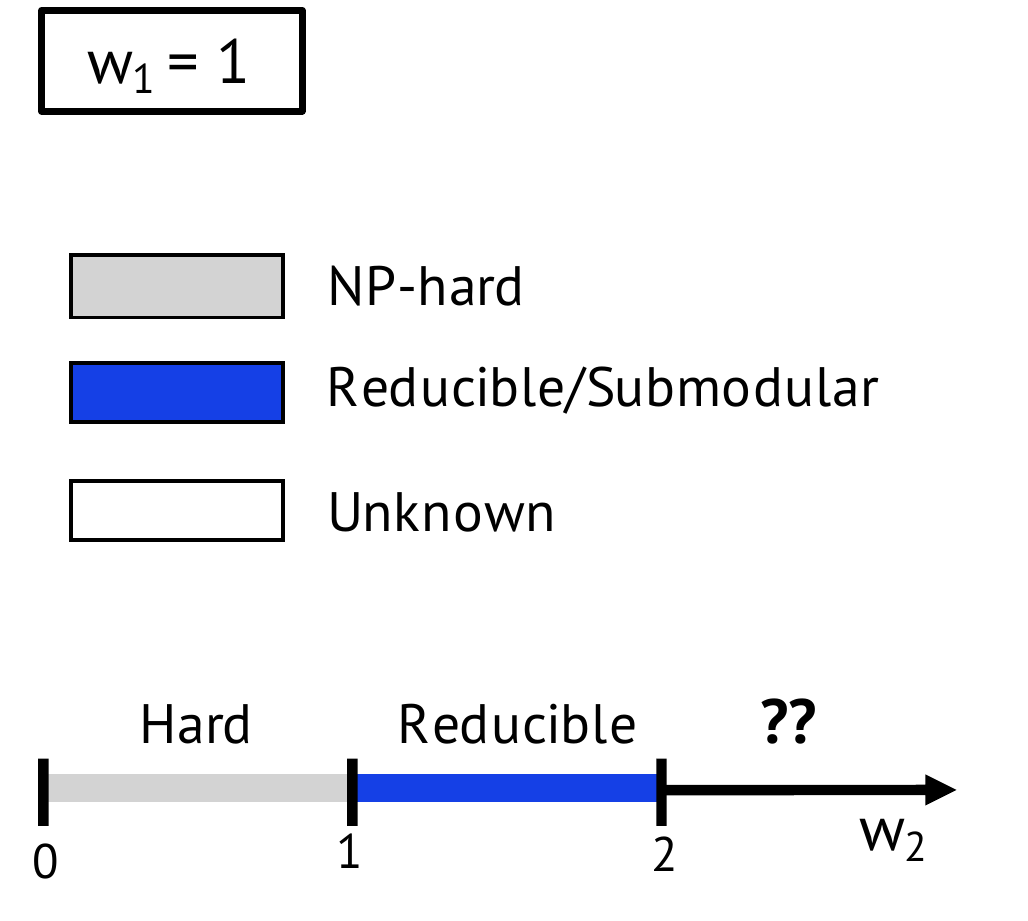}}\hfill
	\subfloat[6--7 node hyperedes \label{fig:tractable3}]
	{\includegraphics[width=.31\linewidth]{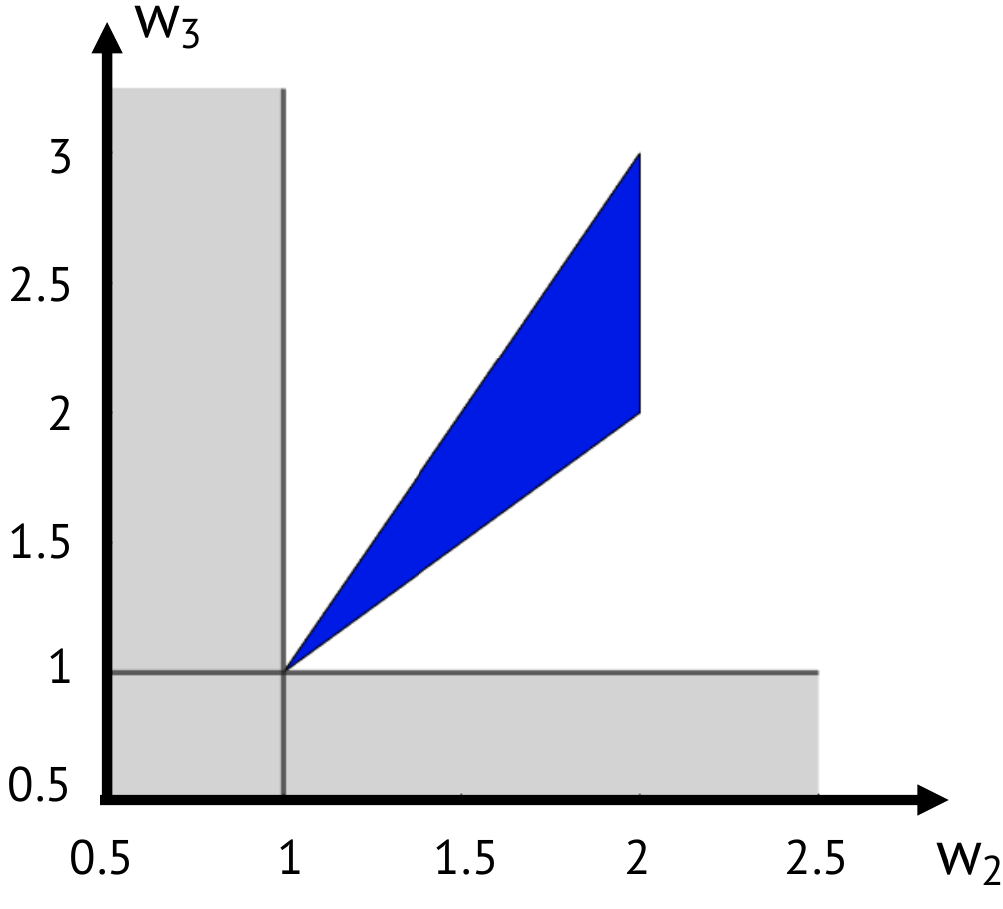}}\hfill
	\subfloat[7--8 node hyperedges\label{fig:tractable4}]
	{\includegraphics[width=.36\linewidth]{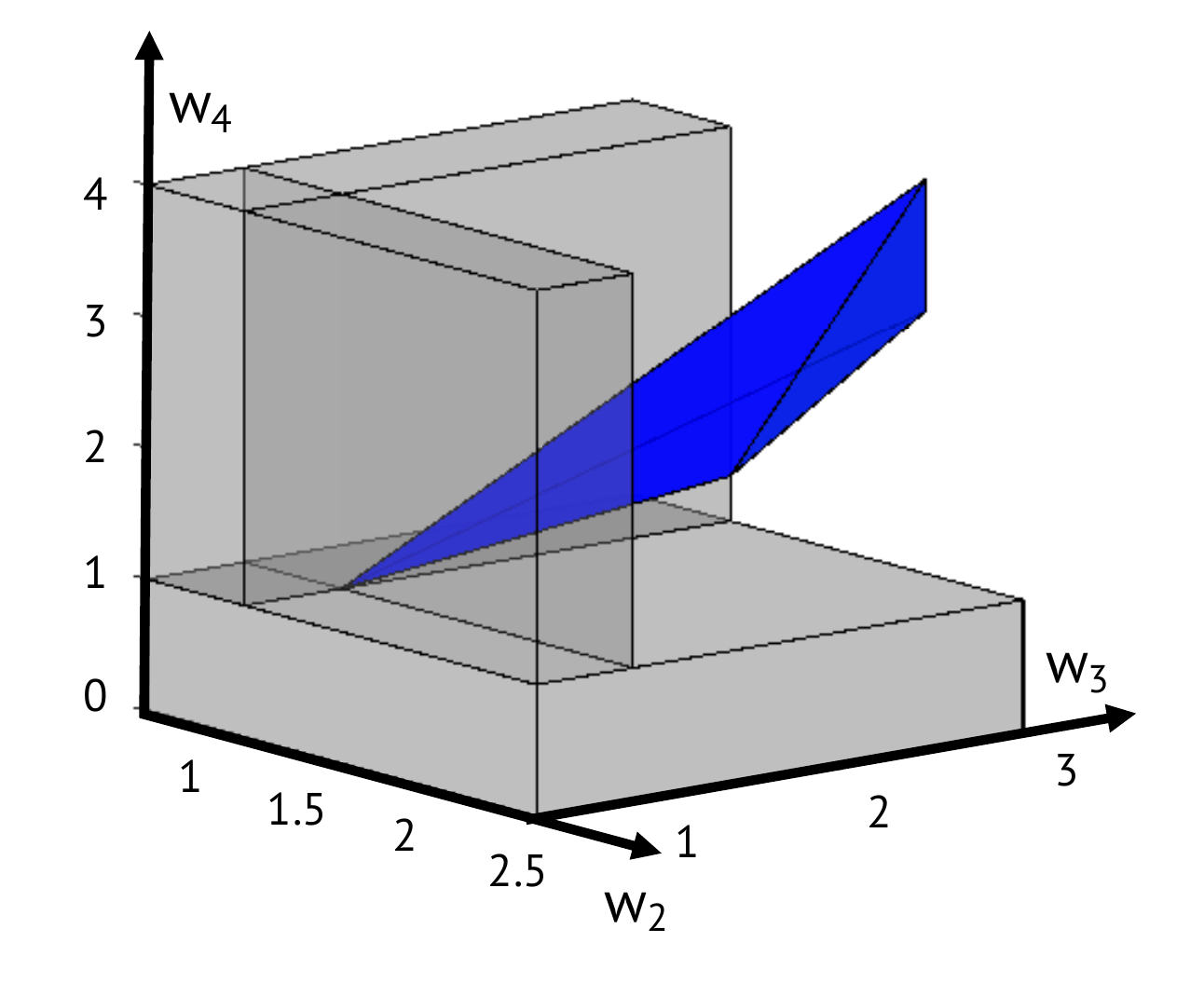}}\hfill
	\caption{The submodular region of cardinality-based hyperedge splitting functions for hyperedges of 4--5 nodes (a), 6--7 nodes (b), and 8--9 nodes (c), with $w_1 = 1$ fixed. 
		NP-hard regions for the $r$-uniform \hstgen{} problem with cardinality-based splitting functions are shaded in gray.}
	\label{fig:tractability}
\end{figure}
Cardinality-based hyperedge splitting functions on $r$-node hyperedges are completely characterized by $\floor*{r/2}$ penalty parameters. At this point we have identified regions of the parameter space that are tractable via graph reduction (the submodular region), as well as regions for which cardinality-based \hc{} is NP-hard. Figure~\ref{fig:tractability} visualizes the submodular and NP-hard regions for hyperedges between 3 and 9 nodes. 

A number of open questions arise naturally from our results on hardness, tractability, and graph reducibility for cardinality-based \hc{}. What is the complexity of solving the objective in the parameter regimes not considered thus far? Are there examples of the problem that are not graph reducible but still tractable? Are approximation algorithms possible for problems falling in the NP-hard regions? We end this section with a few observations and partial answers, though these questions remain largely unanswered and provide several clear directions for future research.


\subsubsection*{Tractability outside of the submodular region}
So far the only tractable instances of cardinality-based \hc{} that we have found are graph reducible. These exactly correspond to the submodular region, and we have shown that no instances outside of this region can be solved by graph reduction. It turns out that there is a small class of degenerate problems \emph{outside} the submodular region for which the problem is still tractable. 
\begin{definition}
\dhstgen{} is any cardinality-based \hc{} problem for which $w_1 = 0$. 
\end{definition}
\dhstgen{} is trivial to solve by placing the source (or sink) in a cluster by itself, as this has a penalty of zero.
Furthermore, if $w_i > 0$ for some $i > 1$, this parameter setting is \emph{not} submodular and therefore the problem is not graph reducible. This degenerate parameter setting and the resulting trivial solution does not lead to a particularly useful algorithm for \hstgen{}, but it has several interesting ramifications. First of all, this immediately rules out the possibility that cardinality-based \hc{} is tractable if and only if the splitting functions are submodular. More generally, it rules out the possibility that \hstgen{}  is tractable if and only if it is graph reducible.

\dhstgen{} also leads to interesting open complexity questions for the case where hyperedges have only 4 or 5 nodes, which involves only two parameters: $w_1$ and $w_2$. 
If $w_1 > 0$ and $w_2 \rightarrow \infty$, cardinality-based \hc{} on 4-uniform hypergraphs converges to the following problem:
\begin{definition}
	\label{def:nesst}
\textbf{No Even Split Hypergraph $s$-$t$ Cut} (\nesst{}) is the problem of separating a 4-uniform hypergraph into two clusters, with terminal nodes $s$ and $t$ on opposite sides, in a way that minimizes the number of cut hyperedges, and strictly avoids 2-2 hyperedge splits.
\end{definition}
A similar problem could also be defined on 5-uniform hypergraphs, where the goal is to avoid 2-3 hyperedge splits. This problem always has a solution, given that one can always place either the source or sink in a cluster by itself. \dhstgen{} and \nesst{} are in fact closely related. Fixing $w_1 = 1$ and taking $w_2 \rightarrow \infty$ leads to an instance of \nesst{}. This is nearly the same as fixing $w_2 = 1 $ and taking $w_1 \rightarrow 0^+$, except that in the limit we instead get \dhstgen{}. However, although solving \dhstgen{} is trivial, we do not know of an efficient solution to \nesst{}, nor a reduction for showing that it is NP-hard. Understanding the computational complexity of this problem is a natural next step in filling in the unknown parameter regions for cardinality-based \hc{}. 


\subsubsection*{Approximability within the NP-hard region}
Another natural question is whether we can obtain approximate solutions for cardinality-based \hc{} even in the NP-hard regions. A simple approximation guarantee of $\frac{w_1}{w_2}$ can be obtained for any problem with $w_2 < w_1$ by simply solving the all-or-nothing cut penalty (scaled by $w_1$), since the penalty at each hyperedge will then be off by at most a factor of $\frac{w_1}{w_2}$. As $w_2 \rightarrow 0$, this penalty becomes increasingly worse, though this is not a surprising fact given the following result:
\begin{theorem}
	\label{thm:5card}
Cardinality-based \hc{} on 5-uniform hypergraphs with $w_1 = 1$ and $w_2 = 0$ is NP-hard to approximate to within any multiplicative factor.
\end{theorem}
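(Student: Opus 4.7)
The plan is to obtain a $\{0, {>}0\}$-gap hardness result: I will give a polynomial reduction from a known NP-hard problem to an instance of 5-uniform cardinality-based \hc{} with $w_1=1, w_2=0$ such that the minimum $s$-$t$ cut equals zero iff the source instance is a YES instance, and is strictly positive otherwise. Since any polynomial-time multiplicative $\alpha$-approximation algorithm for a minimization problem must return the value $0$ whenever the optimum is $0$, such a reduction immediately rules out approximation to within any multiplicative factor unless $P=NP$.

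The source problem will be \emph{monotone NAE-3SAT}: given a collection of clauses over positive literals in 3 variables each, decide whether the variables can be assigned True/False so that no clause has all three literals equal. This is NP-complete (Schaefer's dichotomy, or equivalently via $2$-coloring of 3-uniform hypergraphs). From an instance with variables $x_1,\ldots,x_n$ and clauses $C_1,\ldots,C_m$, I construct a 5-uniform hypergraph $\mathcal{H}$ on node set $\{s,t,x_1,\ldots,x_n\}$ by adding, for each clause $C_j = (x_a,x_b,x_c)$, the hyperedge $\{s,t,x_a,x_b,x_c\}$. Each hyperedge receives the cardinality-based splitting function with $w_1=1,\ w_2=0$. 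The construction is polynomial and produces a bona fide 5-uniform instance.

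For correctness, fix any $s$-$t$ cut $S$ with $s\in S$, $t\notin S$, and interpret $x_i\notin S$ as $x_i=$ True. In any clause hyperedge $\{s,t,x_a,x_b,x_c\}$, $s$ and $t$ already lie on opposite sides, so if $k\in\{0,1,2,3\}$ of the variable-nodes lie in $S$, the split sizes are $(1+k,4-k)$. This is a $1$-$4$ split (penalty $w_1=1$) exactly when $k\in\{0,3\}$, meaning all three variables are on the same side, i.e., the clause is NAE-violated; otherwise it is a $2$-$3$ split with penalty $w_2=0$, i.e., the clause is NAE-satisfied. Summing over clauses, the total cut equals the number of NAE-violated clauses under the induced assignment, so the minimum $s$-$t$ cut is zero iff the formula is monotone NAE-satisfiable.

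Combining the reduction with NP-hardness of monotone NAE-3SAT, deciding whether the 5-uniform \hc{} instance with $w_1=1, w_2=0$ has value $0$ is NP-hard, which by the gap argument above forbids any multiplicative approximation unless $P=NP$. I do not anticipate a real obstacle: the whole proof is driven by the single identity ``not a $1$-$4$ split'' $\Leftrightarrow$ ``not all three variable-nodes on the same side,'' which is exactly the NAE-constraint on the three variables of a clause, so the reduction is essentially immediate once one picks the right source problem.
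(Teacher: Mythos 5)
Your proposal is correct and matches the paper's own proof essentially exactly: the same reduction from \textsc{Mon-NAE-3SAT} via hyperedges $(s,t,x_a,x_b,x_c)$, the same observation that NAE-satisfied clauses yield $2$-$3$ splits of cost $w_2=0$ while violated clauses yield $1$-$4$ splits of cost $w_1=1$, and the same zero-versus-positive gap argument for inapproximability. Your write-up is slightly more explicit about the case analysis on $k$ and the gap reasoning, but there is no substantive difference.
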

\begin{proof}
To prove the result, we will reduce an instance of Monotone Not-All-Equal 3-satisfiability (\textsc{Mon-NAE-3SAT}) to 5-uniform cardinality-based \hc{} where $w_1 = 1$ and $w_2 = 0$. An instance of \textsc{Mon-NAE-3SAT} is an instance of 3SAT in conjunctive normal form, with boolean variables $x_1$, $x_2$, $\hdots$ , $x_n$ but \emph{not} their negations (hence, \emph{monotone}). Given such an input, finding an assignment from variables to \emph{true} and \emph{false} in such a way that all clauses contain at least one \emph{true} and at least one \emph{false} variable is NP-hard~\cite{Schaefer:1978:CSP:800133.804350}. 

To reduce this problem to 5-uniform cardinality-base \hc{}, introduce a source and sink pair $s$ and $t$ as well as a node $i$ for each variable $x_i$. For each clause $(x_i \lor x_j \lor x_k)$, add a hyperedge $e = (s,t,i,j,k)$, and assign it a cardinality-based splitting function with parameters $w_2 = 0$ and $w_1 = 1$. Observe that there is a zero-penalty error for the resulting hypergraph $s$-$t$ cut problem if and only if there is a satisfying assignment for the SAT problem. If there is a satisfying assignment, then for any variable $x_i$, if $x_i$ is true, place node $i$ with the sink $t$, otherwise cluster it with the sink. Since each clause of the form $(x_i \lor x_j \lor x_k)$ has either two \emph{true} variables and one \emph{false}, or one \emph{true} and two \emph{false}, the hyperedge $(s,t,i,j,k)$ will be a 2-3 split. The other direction follows using similar arguments. Thus, it is NP-hard to detect a zero-penalty solution for 5-uniform cardinality-based \hc{}, and so the problem is NP-hard to approximate to within any multiplicative factor. 
\end{proof}
Although this result is not hard to show for 5-uniform hypergraphs, how to adjust it for a similar reduction on 4-uniform hypergraphs is unclear. Nevertheless, this results suggests that any approximation factor we can obtain for 4- and 5-uniform cardinality-based \hc{} will get increasingly worse as $w_2 \rightarrow 0$.

\section{Generalized Hypergraph Multiway Cuts}
\label{sec:multiway}
Until now we have only considered hypergraph cut problems with exactly two terminal nodes that must be separated. We now turn to the multiterminal setting, where we are given a hypergraph $\mathcal{H} = (V,E)$ with a set of $k > 2$ terminals $\{t_1, t_2, \hdots , t_k\} \subset V$. The goal of is to form $k$ clusters, with a terminal node in each cluster, in a way that minimizes the sum of hyperedge splitting penalties. This problem is NP-hard even in the graph case~\cite{Dahlhaus94thecomplexity}, where there is no ambiguity in the definition of an edge cut, though it permits several approximation algorithms~\cite{calinescu2000,karger2004rounding}. There are also several generalizations of the standard graph multiway cut objective, including directed graph~\cite{garg1994multiway,Zhao2005} and node-weighted variants~\cite{garg1994multiway,GARG200449}. The problem has also been studied in the hypergraph setting, under different hypergraph generalizations of the all-or-nothing splitting penalty~\cite{chekuri2011,ene-hmp-2014,Okumoto2012,Zhao2005}. Here we consider the goal of separating terminal nodes in order to minimize a generalized hypergraph cut function.

We begin by defining a multiway generalization of hyperedge splitting functions, which can assign penalties to hyperedge partitions involving more than two clusters. As we did in the two-terminal setting, we consider special subclasses of splitting functions that are motivated by previous work and are a natural fit for clustering applications. We prove that for a class of \emph{move-based} functions, which generalize the cardinality-based functions we considered for the two-terminal problem, the hypergraph multiway cut problem can be reduced to an instance of node-weighted multiway cut~\cite{garg1994multiway} (for which there are approximation algorithms) over a wide range of penalty parameters. However, we also identify a parameter regime for which move-based hypergraph multiway cut is NP-hard to approximate.

\subsection{Hypergraph Multiway Splitting Functions}
Consider a hypergraph $\mathcal{H} = (V,E)$. For each $e \in E$, let $\mathcal{P}_e$ denote the set of partitions (i.e., clusterings) of $e$. We denote the clusters of a partition $P \in \mathcal{P}_e$ by $(e_1, e_2, \hdots e_k)$, some of which may be empty, where $e_i \subset e$ for $i \in \{1,2, \hdots, k\}$. We will use $|P|$ to denote the number of nonempty clusters in $P$. The fact that $P$ is a partition means $e_i \cap e_j = \emptyset$ for all $i \neq j$, and $\cup_{i = 1}^k e_i = e$. 
Let $S_k$ denote the set of permutations on the set $[k] = \{1,2, \hdots, k\}$. For a partition $P = (e_1, e_2, \hdots, e_k)$, and for any $\pi \in S_k$, let $P_\pi = (e_{\pi(1)}, e_{\pi(2)}, \hdots, e_{\pi(k)})$. We can generalize the definition of a hyperedge splitting function to the multiway cut setting as follows:
\begin{definition}
	A $k$-way splitting function on $e \in E$ is any function $\vz_e : \mathcal{P}_e \rightarrow \mathbb{R}$, which for all $P = (e_1, e_2, \hdots, e_k) \in \mathcal{P}_e$ satisfies:
	\begin{align}
	\label{eq:nnmc}
	(\text{Non-negativity} ) &\hspace{.5cm} \vz_e(P) \geq 0 \\
	\label{eq:symmetricmc}
	(\text{Permutation Invariance} ) &\hspace{.5cm} \vz_e(P) = \vz_e(P_\pi) \text{ for any $\pi \in S_k$ }\\
	\label{eq:zerononcutmc}
	(\text{Non-split ignoring} )& \hspace{.5cm}  \vz_e(P) = 0 \text{ if $|P| = 1$}.
	\end{align}
\end{definition}
Importantly, if we restrict to $k = 2$ clusters, we recover our earlier definition of splitting functions for \hc{} problems~\eqref{def:splitting}. In general, we will use the term \emph{multiway splitting function} when we do not wish to specify a value for $k$. We also define another term that is useful for characterizing different types of multiway splitting functions.
\begin{definition}
	\label{def:signature}
	The {signature} of a partition $P \in \mathcal{P}_e$ is an ordered tuple of cluster sizes, ordered in decreasing size. More formally, the signature of $P$ is 
	\begin{equation}
	\label{eq:signature}
	{\normalfont \textbf{signature}}(P) = (|e_1|, |e_2|, \hdots , |e_r|)
	\end{equation}
	where $|e_i| \geq |e_{i+1}|$ for $i = 1, 2, \hdots, r-1$. 
\end{definition}
For example, the only possible signatures for 3-node hyperedge splits are $(3,0,0)$, $(2,1,0)$, and $(1, 1, 1)$. For 4-node hyperedges, the possible signatures are $(4,0,0,0)$, $(3, 1, 1,0)$, $(2, 2,0,0)$, $(2, 1, 1,0)$, $(1, 1, 1, 1)$. 
\begin{definition}
	\label{def:multicardinality2}
	A multiway splitting function $\vz_e$ is \textbf{signature-based} if 
	\begin{equation}
	\vz_e(P_1) = \vz_e(P_2) \text{ for all $P_1, P_2 \in \mathcal{P}_e$ with ${\normalfont \textbf{signature}}(P_1) = {\normalfont \textbf{signature}}(P_2)$}.
	\end{equation}
\end{definition}
We further introduce two natural subclasses of signature-based multiway splitting functions of special interest.
\begin{definition}
	\label{def:multipartition}
	\textbf{Cluster-based} multiway splitting functions depend only on $|P|$, the number of non-empty clusters of a partition $P$.
\end{definition}
\begin{definition}
	\label{def:movebased}
	\textbf{Move-based} multiway splitting functions depend only on the number of nodes from a hyperedge that are not in the largest cluster. Formally, for a partition $P = (e_1, e_2, \hdots, e_k)$ with $i_{\max} = \argmax_i \,\, |e_i|$, move-based functions depend only on $\sum_{i = 1, i\neq i_{\max}} |e_i|$.
\end{definition}
Intuitively, move-based splitting functions count the minimum number of individual node-moves that are needed for all the nodes in $e$ to be in the same cluster. 

All of these definitions remain valid if we restrict to only $k = 2$. Although move-based splitting functions are a special case of signature-based splitting functions, both are exactly equivalent to cardinality-based splitting functions if we reduce to two terminal nodes. Cluster-based splitting functions, on the other hand, simply reduce to the all-or-nothing splitting penalty if we restrict to two terminal nodes.

\subsubsection*{Examples} Table~\ref{tab:multiwaysplits} outlines a number of non-standard multiway cut penalties that have been considered in previous work all of which fit within the framework we have defined here. All of these are signature-based. All-or-nothing, sum of external degrees, $K-1$ penalty, and rainbow split are all cluster-based. All-or-nothing and rainbow split are also move-based. The discount cut is move-based when $\alpha = 1$, and is neither move-based nor cluster-based otherwise.
\renewcommand{\arraystretch}{1.75}
\begin{table}[t]
	\caption{Examples of hyperedge cut penalties that fit within our multiway splitting function framework (Rainbow split is a special case of rainbow labeling in hypergraphs~\cite{mirzakhani2015} when interpreting labels as cluster assignments). For discount cut, $i_{\max} = \argmax_i  |e_i|$, and $\alpha$ is a parameter between $0$ and $1$.}
	\label{tab:multiwaysplits}
	\centering
	\begin{tabular}{l ll}
		\toprule 
		All-or-nothing & $\vz_e(P) = \begin{cases} 0 & \text{if $|P| = 1$} \\ 1 & \text{otherwise}\end{cases}$  & \cite{BensonGleichLeskovec2016,hadley1995,ihler1993modeling,lawler1973} \\
		Sum of External Degrees & $\vz_e(P) = \begin{cases} 0 & \text{if $|P| = 1$}\\ |P| & \text{ otherwise} \end{cases} $ &  \cite{alpert1995survey,chekuri2011,Karypis:1999:MKW:309847.309954} \\
		$K-1$ Penalty & $\vz_e(P) = |P|-1$ & \cite{yaros2013imbalanced} \\
		Discount Cut & $\vz_e(P) = \sum_{i = 1, i\neq i_{\max}}^{k} |e_i|^\alpha$,  &  \cite{yaros2013imbalanced}\\
		Rainbow Split & $\vz_e(P) = \begin{cases}
		1 & \text{ if $|e| = |P|$}\\
		0 & \text{ otherwise}.
		\end{cases}$  & \cite{mirzakhani2015}\\
		\bottomrule
	\end{tabular}
\end{table}

\subsubsection*{Splitting function parameters}
It is worthwhile to consider the maximum possible number of parameters needed to completely characterize each different type of splitting function. Let $k$ represent the number of terminals in a multiway cut problem and $r$ be the number of nodes in a hyperedge. Assume that $k \geq r$. The number of parameters needed to characterize a general multiway splitting function on $r$ nodes equals the $r$th Bell number~\cite{Bell1934} --- the number of ways to partition $r$ objects into nonempty clusters. Bell numbers grow extremely quickly; for example, 52 parameters are needed to characterize a general multiway splitting function when $r= 5$, and 203 parameters are needed when $r = 6$. Meanwhile, a signature-based multiway splitting function on an $r$-node hyperedge is associated with up to $p_r$ distinct weights, where $p_r$ is the number of ways to write the integer $r$ as a sum of positive integers. Although signature-based functions are significantly more restrictive than general multiway splitting functions, $p_r$ still grows exponentially in the square root of $r$~\cite{andrews1998theory}. Eight-node hyperedges require 22 parameters in the worst case, and nine-node hyperedges require 30. In contrast, move-based and cluster-based multiway splitting functions are significantly more general than the all-or-nothing function, but each can be characterized by $r-1$ penalty parameters. However, although these are both characterized by the same number of penalties, these penalties are not always applied in the same way to different splits of a hyperedge.
\renewcommand{\arraystretch}{1}
\begin{table}[t]
		\caption{Penalty parameters for signature-based splitting functions.}
	\label{tab:splits}
	\centering
		\begin{tabular}{lllll}
                  \toprule
			& 		\textbf{Signature} & 	\textbf{Signature-based} & 	\textbf{Cluster-based} & 	\textbf{Move-based} \\
			\midrule
			\textbf{3 nodes}& 	$(1,1,1)$ &  $g_{1,1,1}$ & $h_3$ & $m_2$ \\
			&	$(2,1,0)$ &  $g_{2,1}$ & $h_2$ & $m_1$ \\
			\midrule 
			\textbf{4 nodes} & $(1,1,1,1)$ &  $g_{1,1,1,1}$ & $h_4$ & $m_3$ \\
			&$(2,1,1,0)$ &  $g_{2,1,1}$ & $h_3$ & $m_2$ \\
			&$(2,2,0,0)$ &  $g_{2,2}$ & $h_2$ & $m_2$ \\
			&$(3,1,0,0)$ &  $g_{3,1}$ & $h_2$ & $m_1$\\
			\bottomrule
		\end{tabular}
	\end{table}
For signature-based splitting functions, let $g_{s}$ denote the penalty associated with a partition with signature $s$. For cluster-based, let $h_t$ be the penalty a partition $P$ with $t$ nonempty clusters. For move-based, let $m_i$ denote the penalty for placing $i$ nodes outside of the largest cluster. Table~\ref{tab:splits} shows the correspondence between these penalties and different signatures on 3- and 4-node hyperedges. For 3-node hyperedges, there is no difference between different types of signature-based splitting functions. Each has a different penalty parameter associated with each signature. However, for 4-node hyperedges, signature-based functions are more general than move-base and cluster-based functions, and these latter two penalize different signatures differently. 

\subsection{The Generalized Hypergraph Multiway Cut Problem}
Let $\mathcal{H} = (V,E)$ be a hypergraph with a multiway splitting function $\vz_e$ for each $e \in E$, and let  $\{t_1, t_2, \hdots, t_k\} \subset V$ be designated terminal nodes, with $k > 2$. Let $\mathcal{P}_V$ denote the set of partitions of $V$, and for a partition $P \in \mathcal{P}_V$, let $V_i$ denote the $i$th cluster of $P$. For any $P = \{V_1, V_2, \hdots , V_k\} \in \mathcal{P}_V$, define the restriction of $P$ to a hyperedge $e$ to be:
\begin{equation}
\label{eq:restriction}
P\big|_e = (V_1 \cap e, V_2 \cap e, \hdots , V_k \cap e ) \in \mathcal{P}_e\,.
\end{equation}

\begin{definition}
Generalized hypergraph multiway cut (\ghmc{}) is the following optimization problem:
	\begin{equation}
	\label{eq:hypermc}
	\begin{array}{ll}
	\minimize_{P = \{V_1, V_2, \hdots , V_k\} \in \mathcal{P}_V} & \sum_{ e \in E} \vz_e\big(P \big|_e\big) \\
	{\normalfont\text{subject to} } & t_i \in V_i \,.
	\end{array}
	\end{equation}
\end{definition}
Two existing hypergraph generalizations of the graph multiway cut problem are captured as special cases of \ghmc{}. Standard hypergraph multiway cut (standard \hmc{}) is the problem of removing a minimum weight set of hyperedges to separate $k$ terminal nodes in a hypergraph~\cite{chekuri2011,Okumoto2012}. This is a special case of \ghmc{} where the all-or-nothing multiway splitting penalty is used. The hypergraph multiway partition problem (\hmp{}) differs in that the cost at a cut hyperedge is proportional to the number of clusters spanned by the hyperedge~\cite{chekuri2011,ene-hmp-2014}. This can be viewed as an instance of \ghmc{} when the sum of external degrees splitting function is used for all hyperedges.
\hmc{} is approximation equivalent to the node-weighted multiway cut in graphs~\cite{Okumoto2012}, for which a $(2-\frac{2}{k})$-approximation is known~\cite{garg1994multiway,GARG200449}. For \hmp{}, the best known approximation factor is $\frac43$~\cite{ene-hmp-2014}. Both \hmc{} and \hmp{} are special cases of the submodular multiway partition objective (\smp{})~\cite{Zhao2005}. \hmp{} is specifically an instance of \emph{symmetric} \smp{}, which has a known $(\frac{3}{2} - \frac{1}{k})$-approximation~\cite{chekuri2011}.

When we wish to solve \ghmc{} with the same type of splitting function applied to all hyperedges, we will refer to the problem as \hmc{}, preceded by the type of splitting function. For example, we will refer to the ``\ghmc{} problem with move-based splitting functions on all hyperedges'' simply as \emph{move-based} \hmc{}. The standard hypergraph multiway cut problem can also be referred to as \emph{all-or-nothing} \hmc{}.

\subsection{Graph Reducibility of Move-Based Multiway Cuts}
Just as we focused on cardinality-based \hc{} in the two-cluster setting, we focus on move-based \hmc{} for multi-cluster problems. Although we require several new techniques to address the multiway problem, our high-level approach and results for \hmc{} closely mirror our \hc{} results. For two-way cuts, we generalized Lawler's technique for reducing all-or-nothing \hc{} to directed graph $s$-$t$ cut, by defining a notion of hypergraph $s$-$t$ gadgets. These gadgets made it possible to model all submodular cardinality-based splitting functions. For the multiway case,
all-or-nothing \hmc{} can be reduced to \nmc{} by applying a type of star expansion to hyperedges~\cite{Okumoto2012}. Inspired by this technique, we define a \nmc{}-gadget that can model a wide range of move-based multiway splitting functions. We show that all move-based functions satisfying a certain submodularity property can be reduced to \nmc{} using a combination of these gadgets.

\subsubsection*{Node-Weighted-MC Gadgets}
First, we define \nmc{}.
\begin{definition}
	\label{def:nmc}
	Let $G = (V,E)$ be an undirected, node-weighted graph where $\omega(v) \geq 0$ is the weight of $v \in V$. If $\{t_1, t_2, \hdots, t_k\}$ is a set of terminal nodes in $V$, the \nmc{} problem seeks a minimum weight set of nodes to remove in order to separate all terminal nodes from each other.
\end{definition}
Given an instance of \nmc{}, we will use $G\backslash R$ to denote the graph obtained by removing a set of nodes $R$, and let $\omega(R) = \sum_{v\in R} \omega(v)$. We can convert a hypergraph $\mathcal{H} = (V,E)$ into an instance of \nmc{} by replacing each hyperedge with a \nmc-gadget (NMC-gadget).
\begin{definition}
	\label{def:nmcgadget}
	An NMC-gadget for a hyperedge $e \in E$ is a node-weighted, undirected graph $G' = (V',E')$, where $V' = e \cup \hat{V}$ for a set of auxiliary nodes $\hat{V}$, and
        $\omega\colon V \rightarrow \{\mathbb{R}_+ \cup \infty\}$ is a node-weighting function satisfying $\omega(v) = \infty$ for all $v \in e$. The gadget comes with an NMC-gadget splitting function
        $\hat{\vz}_e(P)\colon \mathcal{P}_e \rightarrow \mathbb{R}_+$ defined by
	\begin{equation}
	\label{eq:nmcsplitting}
	\begin{array}{lll}
	\hat{\vz}_e(P)= & \minimum_{R \subseteq \hat{V}} & \omega(R)  \\
	& {\normalfont\text{subject to} }& \text{$\{e_i,e_j\}$ are disconnected in $G'\backslash R$ for all $i \neq j$},
	\end{array}
	\end{equation}
	where $\{e_1,e_2, \hdots, e_k\}$ are the clusters of $P$.
\end{definition}
We will say that a multiway splitting function $\vz_e$ can be modeled by an NMC-gadget if that gadget has a splitting function $\hat{\vz}_e = \vz_e$.
An instance of \ghmc{} defined on a hypergraph $\mathcal{H} = (V,E)$ is \emph{NMC-reducible} if for each $e \in E$, $\vz_e$ can be modeled by an NMC-gadget.

\subsubsection*{NMC-reduction for all-or-nothing \hmc}
Okumoto et al.\ showed that all-or-nothing \hmc{} can be reduced in an approximation-preserving way to \nmc{}~\cite{Okumoto2012}. The details of this reduction can be easily described using the terminology and framework we have developed here. For the reduction, each hyperedge $e \in E$ is replaced with an NMC-gadget involving a single auxiliary vertex $\hat{V} = \{v_e\}$, which is connected to each $v \in e$ via an undirected edge. The node-weight of $v_e$ is assigned to be the weight of the hyperedge (and by our definition, $\omega(v) = \infty$ for $v \in e$). In the resulting instance of \nmc{}, deleting an auxiliary vertex $v_e$ corresponds to removing an edge $e$ from the original hypergraph. The cost of removing a minimum weight set of auxiliary nodes in the node-weighted graph to separate terminal nodes is exactly equivalent to removing a minimum weight set of hyperedges in the hypergraph to separate terminal nodes. 

\subsubsection*{NMC-reduction for move-based \hmc}
In general, there is no existing notion of a \emph{multiway} submodular function. Despite this, we can identify a simple property that can be satisfied by move-based multiway splitting functions, and that is related to the definition of submodularity for cardinality-based two-way splitting functions. Let $e \in E$ be a hyperedge with a move-based splitting function $\vz_e$. Since $\vz_e$ depends only on the size of the largest cluster of a partition, we can associate $\vz_e$ with a simplified function $\vm_e\colon 2^e\backslash e \rightarrow \mathbb{R}_+$ defined by
\begin{equation}
\label{eq:mb-2way}
\vm_e(S) = \vz_e(P_S)\,,
\end{equation}
where $P_S$ is any partition of $e$ whose largest cluster size is  $|e\backslash S|$. In other words, $S$ is a minimal size set of nodes that must be moved in order to avoid splitting the hyperedge. Note that $\vm_e$ is asymmetric in the sense that $\vm_e(S)$ can be different from $\vm_e(e\backslash S)$ for two nonempty subsets $S$ and $e \backslash S$, unless $|S| = |e \backslash S|$. 

A natural extension of our results in Section~\ref{sec:positive} is to explore what it means for $\vm_e$ to be submodular. In particular, does submodularity have any bearing on the complexity of move-based \hmc{}? One difficulty in answering this question is that $\vm_e$ is not defined on $S = e$, since it is impossible to cluster \emph{all} nodes in $e$ away from the largest cluster of $e$. We can overcome this issue by selecting a value $m \geq 0$ and extending the definition of $\vm_e$ so that $\vm_e(e) = m$. We can then check what it means for $\vm_e$ to be submodular, even if a move-based splitting function will never in fact assign a penalty of $m$. In principle, we can set $m$ to be any nonnegative value, though different choices of $m$ will lead to different definitions of submodularity for move-based splitting functions. We identify one setting of this parameter that is unintuitive at first glance, yet comes with a several desirable properties.
\begin{lemma}
	\label{lem:multisub}
	Let $e = \{v_1, v_2, \hdots, v_{r}\}$ be an $r$-node hyperedge associated with a move-based splitting function defined by
	\begin{equation}
	\label{eq:mrsubmod}
	\vm_e(S) = \begin{cases}
	m_{i} & \text{ if $S \subset e$ and $|S| = i < r$}\\
	m_r & \text{if $S = e$.}
	\end{cases}
	\end{equation}
	If $m_r = m_{r-2}$ and $\vm_e$ is submodular, then it satisfies the following inequalities:
		\begin{align}
	\label{ps1}
	2m_1 &\geq m_2 \\
	\label{ps2}
	2m_j &\geq m_{j-1} + m_{j+1} \,\, \text{ for $j \in \{1, 2, 3, \hdots , r-3\}$}  \\
	\label{eq:mon}
	m_{j+1} & \geq m_{j} \,\, \text{ for $j \in \{1, 2, 3, \hdots , r-2\}$} \,.
	\end{align}	
\end{lemma}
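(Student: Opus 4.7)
The plan is to obtain each family of inequalities by applying the submodularity relation $\vm_e(A) + \vm_e(B) \geq \vm_e(A\cap B) + \vm_e(A\cup B)$ to a single template of sets, with the hypothesis $m_r = m_{r-2}$ reserved for one boundary case. Before starting, I would note that $m_0 = 0$: from $\vm_e(S) = \vz_e(P_S)$, taking $S = \emptyset$ forces $P_\emptyset$ to be a partition of $e$ whose largest cluster has size $|e \setminus \emptyset| = r$, so $P_\emptyset$ is the trivial one-cluster partition and has penalty $0$ by property~\eqref{eq:zerononcutmc}.

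For the concavity bounds~\eqref{ps1} and~\eqref{ps2}, I would fix $j \in \{1, 2, \ldots, r-2\}$ and apply submodularity to $A = \{v_1, \ldots, v_j\}$ and $B = \{v_2, \ldots, v_{j+1}\}$. Then $|A| = |B| = j$, $|A \cap B| = j-1$, and $|A \cup B| = j+1 \leq r-1$, so every set involved has size strictly less than $r$ and $\vm_e$ takes the cardinality-indexed values $m_j, m_{j-1}, m_{j+1}$. Submodularity then yields $2m_j \geq m_{j-1} + m_{j+1}$, which is exactly~\eqref{ps2} for $j \in \{1, \ldots, r-3\}$ and, via $m_0 = 0$, gives~\eqref{ps1} at $j = 1$.

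The monotonicity inequality~\eqref{eq:mon} is where the hypothesis $m_r = m_{r-2}$ does real work. I would apply the same template one more time with $A = e \setminus \{v_1\}$ and $B = e \setminus \{v_2\}$, so $|A| = |B| = r-1$, $|A \cap B| = r-2$, and $A \cup B = e$, giving $\vm_e(A\cup B) = m_r$. Submodularity then reads $2m_{r-1} \geq m_{r-2} + m_r$, which collapses to $2m_{r-1} \geq 2m_{r-2}$ by hypothesis, so $m_{r-1} \geq m_{r-2}$. Combined with the concavity bounds from the previous step (which in fact remain valid all the way up to $j = r-2$, since the corresponding $A \cup B$ still has size $r-1$), the sequence of consecutive differences $d_j := m_j - m_{j-1}$ is non-increasing in $j \in \{1, \ldots, r-1\}$. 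Since $d_{r-1} \geq 0$ by what we just proved, every $d_j$ in that range is non-negative, so $m_j \geq m_{j-1}$ for all $j \in \{1, \ldots, r-1\}$, which is precisely~\eqref{eq:mon}.

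The only subtlety to manage is the interaction between submodularity and the value $\vm_e(e) = m_r$ at the top of the cardinality ladder. Strictly below $r$, submodularity yields concavity-type bounds without needing anything about $m_r$; only at the top does the hypothesis $m_r = m_{r-2}$ enter, and it enters precisely to force the final difference $d_{r-1}$ to be non-negative, which then propagates downward through the non-increasing differences to deliver monotonicity throughout.
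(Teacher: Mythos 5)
Your proposal is correct and follows essentially the same route as the paper: identical choices of $A$ and $B$ yield \eqref{ps1} and \eqref{ps2}, and the same pair $A = e\setminus\{v_1\}$, $B = e\setminus\{v_2\}$ combined with $m_r = m_{r-2}$ gives $m_{r-1} \geq m_{r-2}$. The only (cosmetic) divergence is in propagating monotonicity: you telescope the non-increasing differences $d_j = m_j - m_{j-1}$ directly, whereas the paper packages the same inequalities into a tridiagonal linear system and inverts it to recover $\vm = \mA\vc$ with $\vc \geq 0$; your version is the more elementary phrasing of the same fact, and your explicit justification that $m_0 = 0$ is a detail the paper leaves implicit.
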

\begin{proof}
	As in the proof of Lemma~\ref{lem:asub}, we must simply identify certain sets $A,B \subseteq 2^e$, and check the definition of submodularity:
	\begin{equation}
	\label{eq:submodul}
	\vm_e(A) + \vm_e(B) \geq \vm_e(A \cap B) + \vm_e(A\cup B).
	\end{equation}
Inequality~\eqref{ps1} follows by setting $A = \{v_1\}$, and $B = \{v_2\}$, while inequality~\eqref{ps2} follows when $A = \{v_1, v_2, \hdots, v_j\}$ and $B = \{v_2, v_3,\hdots , v_{j+1}\}$. 
To prove inequality~\eqref{ps2}, let $A = \{v_1, v_2, \hdots, v_{r-1}\}$ and $B = \{v_2, v_3, \hdots, v_r\}$. Submodularity implies that $2 m_{r-1} \geq m_{r-2} + m_r$, which reduces to $m_{r-1} \geq m_{r-2}$ when we substitute $m_r = m_{r-2}$. Furthermore, this implies the full set of inequalities $m_{j+1} \geq m_j$ for $j \leq r-2$. This can be seen by essentially reversing arguments used in Lemma~\ref{lem:cbsub}.
Let $\vm = [m_1 \,\, m_2 \,\, \cdots \,\, m_{r-1}]^T$ represent a set of penalties satisfying inequalities~\eqref{ps1},~\eqref{ps2}, and $m_{r-1} \geq m_{r-2}$. We encode these inequalities into a matrix equation to show that there exists a vector $\vc = [c_1 \,\, c_2\,\, \cdots \,\, c_{r-1}]^T$ satisfying
\begin{equation}
\label{mtoc}
\begin{bmatrix}
2 & -1  & \cdots & 0 & 0\\
-1 & 2 & \cdots & 0 & 0\\
\vdots  & \vdots  & \ddots & \vdots  & \vdots\\
0 & 0 &\cdots & 2 &-1 \\
0 &  0 &\cdots & -1 &1 \\
\end{bmatrix}
\begin{bmatrix}
m_1 \\ m_2  \\ \vdots \\m_{r-2} \\ m_{r-1}
\end{bmatrix}
= \begin{bmatrix}
c_1 \\ c_2  \\\vdots\\ c_{q-1} \\ c_q
\end{bmatrix}
\geq
\begin{bmatrix}
0 \\ 0  \\\vdots\\ 0 \\ 0
\end{bmatrix}\,.
\end{equation}
The tridiagonal matrix above is the same one we encountered in Lemma~\ref{lem:cbsub}. By taking the inverse of this matrix, we can conclude that the vector $\vm$ is given by $\mA \vc = \vm$, where just as in Lemma~\ref{lem:cbsub}, $\mA$ is a matrix whose $ij$ entry is $A_{ij} = \min \{i,j\}$. By the structure of $\mA$ and the nonnegativity of $\vc$, we see that the entries in $\vm$ are non-decreasing, and thus inequality~\eqref{eq:mon} is satisfied.
\end{proof}
Although setting $m_r = m_{r-2}$ was at first unintuitive, this produced a natural connection between submodularity for cardinality-based splitting functions and submodularity for move-based multiway splitting functions. Furthermore, constraint~\eqref{eq:mon} encodes a type of monotonocity that is natural for move-based splitting functions. Namely, clustering a larger number of nodes away from the largest cluster in $e$ should be associated with a higher splitting penalty. The following theorem provides further motivation for setting $m_r = m_{r-2}$, as it shows that under this definition of submodularity, move-based \hmc{} is NMC-reducible.
\begin{theorem}
	\label{thm:mb-reduced}
	If a move-based splitting function is defined by~\eqref{eq:mrsubmod} with $m_r = m_{r-2}$ and $\vm_e$ submodular, then it can be modeled by an NMC-gadget. 
\end{theorem}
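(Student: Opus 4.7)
The plan is to construct the required NMC-gadget explicitly by adapting the basis-gadget approach from Section~\ref{sec:reduction} to the node-weighted multiway setting. I will introduce a one-parameter family of \emph{basis} NMC-gadgets $\{G_b\}_{b=1}^{r-1}$, each modeling a ``capped'' move-based penalty $\min(i,b)$, and then realize the target splitting function as a nonnegative linear combination of these basis gadgets.

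For a hyperedge $e = \{v_1,\ldots,v_r\}$ and integer $b \in \{1,\ldots,r-1\}$, I would define $G_b$ to consist of a central auxiliary vertex $x_e$ of weight $b$, together with one auxiliary vertex $u_j$ of weight $1$ for each $v_j \in e$, joined by undirected edges $\{v_j,u_j\}$ and $\{u_j,x_e\}$. For any partition $P$ of $e$ with $i$ nodes outside the largest cluster, every path between distinct $v_j,v_k \in e$ in $G_b$ passes through $x_e$, so any valid separating set $R \subseteq \hat{V}$ either contains $x_e$ (cost $b$), or else must contain $u_j$ for every $v_j$ outside a single ``kept'' cluster---which is cheapest when the kept cluster is the largest one (cost $i$). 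Hence the gadget splitting function of $G_b$ is exactly $\hat{m}_i^{(b)} = \min(i,b)$.

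Next, I would take $r-1$ pairwise-disjoint copies $G_1,\ldots,G_{r-1}$ (each with fresh auxiliary vertices) and scale the weights of $G_b$ by a nonnegative factor $c_b$. Since the auxiliary sets are disjoint, minimizing over separating sets decomposes additively across copies, so the combined gadget has splitting function
\begin{equation*}
\hat{m}_i \;=\; \sum_{b=1}^{r-1} c_b \min(i,b) \;=\; (\mA \vc)_i,
\end{equation*}
where $\vc = (c_1,\ldots,c_{r-1})^T$ and $A_{ij} = \min(i,j)$. This has the same form as the linear system~\eqref{matrixeq} from Section~\ref{sec:reduction}, and its inverse has the tridiagonal structure of~\eqref{invmatrixeq}. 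Requiring $\vc = \mA^{-1}\vm \geq 0$ translates into $2m_j \geq m_{j-1} + m_{j+1}$ for $j = 1,\ldots,r-2$ (with $m_0 := 0$) together with $m_{r-1} \geq m_{r-2}$, which are precisely the inequalities established in Lemma~\ref{lem:multisub} under the hypotheses of the theorem. Therefore $\vc \geq 0$, and the combined gadget is the desired NMC-gadget modeling $\vm_e$.

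The main obstacle I anticipate lies in the connectivity case analysis for the basis gadget $G_b$: one must verify carefully that no mixed strategy (removing $x_e$ together with some $u_j$'s, or retaining $u_j$'s across two or more non-largest clusters) can beat the two natural extremes, and handle tie-breaking when several clusters share the maximum size. Once that local analysis is in place, the rest of the argument is a direct port of the matrix-inversion reasoning from the two-terminal cardinality-based case.
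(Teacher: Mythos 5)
Your construction is exactly the paper's: the basis gadget $G_b$ (a weight-$b$ hub $x_e$ joined through weight-$1$ intermediaries $u_j$ to the nodes of $e$) is the paper's NMC-basis-gadget with splitting function $\min\{|S|,b\}$, and the nonnegative linear combination of the $r-1$ copies, the matrix $A_{ij}=\min\{i,j\}$, its tridiagonal inverse, and the appeal to Lemma~\ref{lem:multisub} all coincide with the paper's argument. The connectivity case analysis you flag as the remaining obstacle goes through just as you sketch (not removing $x_e$ forces removing $u_j$ for every $v_j$ outside one kept cluster, optimally the largest), so the proof is complete and essentially identical to the paper's.
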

\begin{proof}
To prove the result, we define the NMC-basis-gadget (Figure~\ref{fig:nmc}), a simple gadget parameterized by a positive integer $b$. Just as we combined CB-gadgets to model all submodular cardinality-based two-way splitting functions, we will show how to combine NMC-basis-gadgets to model submodular move-based functions.
\begin{figure}[h]
\begin{minipage}{0.45\textwidth}
		\includegraphics[width=\linewidth]{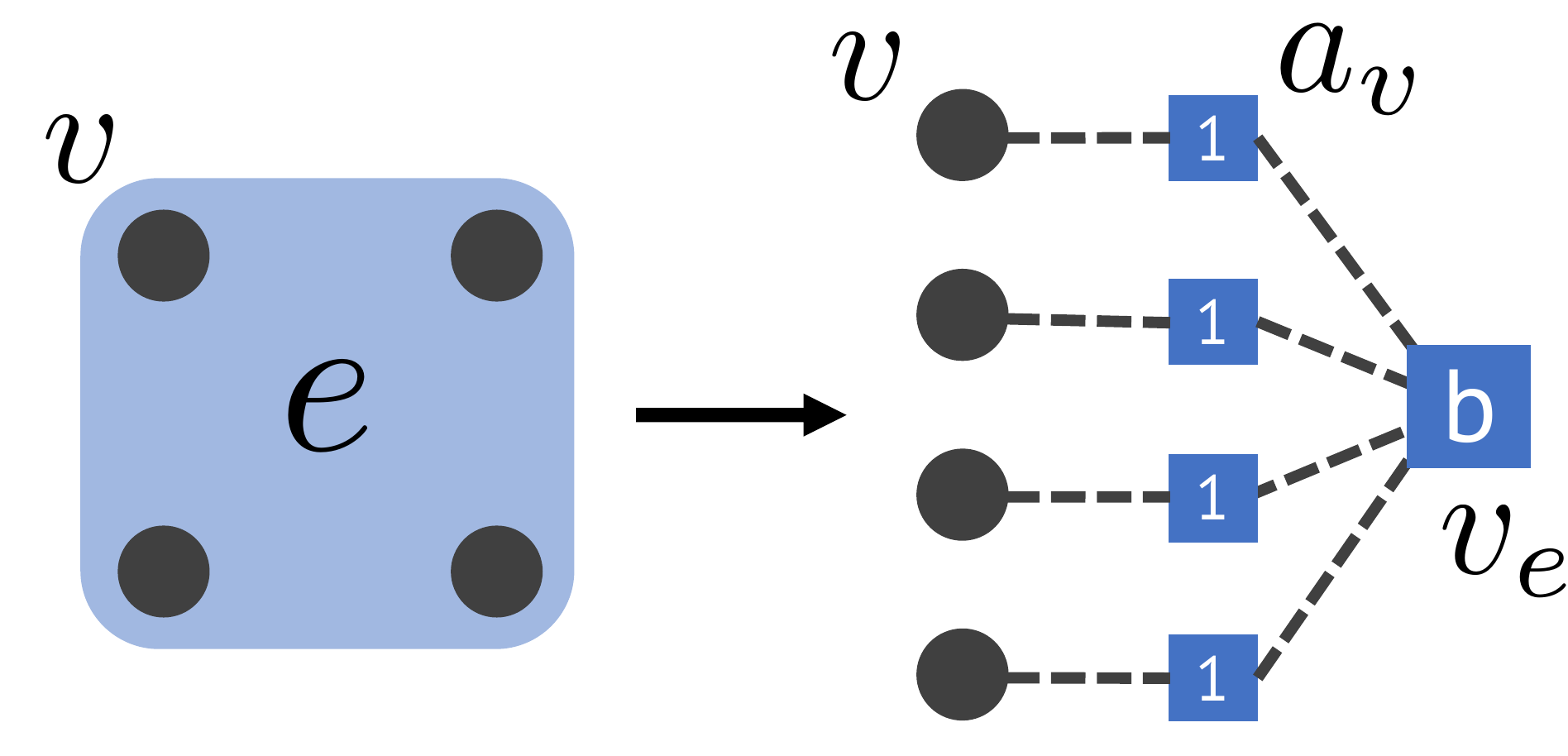}
	\end{minipage}\hfill
	\begin{minipage}{0.55\textwidth}		
		
		\begin{itemize}
\item For each $v \in e$, introduce an auxiliary node $a_v$ and an edge $(v,a_v)$. Set the node-weight of $a_v$ to be 1, and the node weight of $v$ to be $\infty$.
\item Introduce one more auxiliary node $v_e$, and attach $v_e$ to $a_v$ with an edge for all $v \in e$. Set the node-weight of $v_e$ to be $b \in \mathbb{N}$.
\end{itemize}
	\end{minipage}
	\vspace{-.25cm}
\caption{} \label{fig:nmc}
\end{figure}

Given such a gadget and a set of nodes $S \subset e$, there are only two ways to delete auxiliary vertices in order to partition $V'$ in such a way that the cluster with the most nodes from $e$ contains all of $e\backslash S$. 
The first option is to delete node $v_e$, which has a penalty of $b$ and separates all nodes in $e$ from each other. The second option is to delete $a_v$ for each $v \in S$, which has a total penalty of $|S|$. Since no other options are possible, the splitting function for the NMC-basis-gadget is
\begin{equation}
\label{eq:nmcbgadget}
\hat{\vm}_b(S) = \min \{|S|, b\}\,.
\end{equation}

To model an $r$-node hyperedge, we introduce a total of $r-1$ NMC-basis-gadgets, one for each value of $b \in \{1, 2, \hdots r-1\}$. Let $c_i \geq 0$ be a weight we use to scale the NMC-basis-gadget with parameter $b = i$. Combining these leads to a larger NMC-gadget with $(r^2 - 1)$ auxiliary nodes. Let $\hm_i$ denote the penalty that the splitting function of the combined gadget assigns to a partition of $e$ with $i$ nodes outside the largest cluster. The splitting function penalties for the combined gadget can be described by a linear system that is nearly identical to the system we used to model cardinality-based two-way splitting functions in Section~\ref{sec:reduction}:
\begin{equation}
\label{matrixeqmulti}
\begin{bmatrix}
1 & 1 & 1 & \cdots & 1 \\
1 & 2 & 2 &  \cdots & 2\\
1 & 2 & 3 &  \cdots & 3\\
\vdots & \vdots & \vdots  & \ddots & \vdots\\
1 & 2 & 3  &\cdots & r-1 \\
\end{bmatrix}
\begin{bmatrix}
c_1 \\ c_2 \\ c_3 \\\vdots \\ c_{r-1}
\end{bmatrix}
= 
\begin{bmatrix}
\hm_1 \\ \hm_2 \\ \hm_3  \\\vdots \\ \hm_{r-1}
\end{bmatrix}.
\end{equation}
By inverting~\eqref{matrixeqmulti} and enforcing $c_i \geq 0$ for $i = 1, 2, \hdots, r-1$, we find that this approach will enable us to model any move-based splitting function satisfying properties~\eqref{ps1},~\eqref{ps2}, and~\eqref{eq:mon}.
\end{proof}
We leave it as an open question to explore whether NMC-reducibility can be shown for other choices of $m_r \neq m_{r-2}$. For example, if we set $m_r = 0$, then submodularity for $\vm_e$ is equivalent to submodularity for an asymmetric cardinality-based splitting function (see Lemma~\ref{lem:asub}). The submodular region for $m_r = 0$ in fact \emph{contains} the submodular region for $m_r = m_{r-2}$, defined by inequalities~\eqref{ps1},~\eqref{ps2}, and~\eqref{eq:mon}. However, it is unclear whether all submodular functions under the choice $m_r = 0$ can be modeled by NMC-gadgets. Furthermore, with this approach we lose the monotonocity constraint~\eqref{eq:mon}. This constraint is very natural for move-based splitting functions, and we do not know how to model any move-based functions whose splitting penalties \emph{decrease} when we separate \emph{more} nodes from the largest cluster of $e$. Another related question is whether any of these definitions of submodularity is \emph{necessary} for NMC-reduction. Although we were able to show that all hypergraph $s$-$t$ cut gadgets have submodular splitting functions (Corollary~\ref{cor:submodgadget}), there is no clear analogous result for NMC-gadgets, even in the case of move-based functions.

\subsection{Hardness of Approximation for Rainbow Splits}
As is the case for cardinality-based \hc{}, there are special cases of move-based \hmc{} that are inherently harder to solve or approximate than problems in the submodular region. In this section we prove that this is true even in the case of 3-uniform hypergraphs. In particular, we prove a hardness result for the problem under the rainbow split splitting function (see Table~\ref{tab:multiwaysplits}), which assigns a penalty of 1 if every node in a hyperedge is assigned to a different cluster, but otherwise assigns no penalty.
\begin{definition}
\turh{} is the special case of \ghmc{} where the rainbow split splitting function is applied to all hyperedges.
\end{definition}

\turh{} and the hardness result we will show is closely related to the \textsc{Hypergraph Rainbow-Avoiding Labeling} problem~\cite{mirzakhani2015}. An instance of the labeling problem is given by a hypergraph and a list of allowable color labels for each node. The goal is to assign colors to nodes in a way that minimizes the number of \emph{rainbow-labeled} hyperedges, i.e., hyperedges whose nodes are all assigned a different color. Mirzakhani and Vondr\'{a}k showed that for 3-uniform hypergraphs, it is NP-hard to detect whether there is a color assignment with no rainbow-labeled hyperedges~\cite{mirzakhani2015}.
\turh{} is a restriction of this problem, in which terminal nodes have only one allowed color, and all other nodes can be assigned to any color. Since \textsc{Hypergraph Rainbow-Avoiding Labeling} is more general than our problem, the existing hardness result does not directly apply. However, the proof technique
can be directly adapted to show the same hardness result for \turh{}. For completeness we include a full detailed proof.
\begin{theorem}
	\label{thm:rainbow}
	\turh{} (on 3-uniform hypergraphs) is NP-hard to approximate to within any multiplicative factor.
\end{theorem}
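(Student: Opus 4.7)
The plan is to reduce from \textsc{3-Uniform Hypergraph Rainbow-Avoiding Labeling} with three colors. As noted just before the theorem, Mirzakhani and Vondr\'ak proved that it is NP-hard even to decide whether such an instance admits a color assignment with no rainbow hyperedge~\cite{mirzakhani2015}. The challenge is to encode the per-node color-list restrictions of that problem into the rigid terminal structure of \turh{}, using only 3-uniform hyperedges.

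Given an instance $(H = (U, F), L)$ with $L_u \subseteq \{1,2,3\}$ for every $u \in U$, build a 3-uniform hypergraph $H'$ for three-terminal \turh{} as follows. Introduce terminals $t_1, t_2, t_3$, one per color, and for each $u \in U$ a non-terminal node, which we also call $u$. Copy every hyperedge of $F$ verbatim into $H'$. Then, for each $u$ and each color $c \notin L_u$, add a forbidding hyperedge $(u, t_a, t_b)$ where $\{a,b,c\} = \{1,2,3\}$. Because $t_a$ and $t_b$ are pinned to clusters $a$ and $b$, this gadget hyperedge is rainbow if and only if $u$ is placed in cluster $c$, which is precisely the color we want to forbid at $u$. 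Thus every zero-cost solution of \turh{} on $H'$ must place each $u$ in a cluster of $L_u$ and must avoid rainbow splits on every copied hyperedge of $F$, i.e., it induces a rainbow-avoiding labeling of $(H, L)$. Conversely, any rainbow-avoiding labeling of $(H, L)$ immediately yields a zero-cost \turh{} solution on $H'$ by assigning each $u$ to the cluster of its color.

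Consequently, deciding whether the optimum of \turh{} on $H'$ equals zero is NP-hard. Any finite multiplicative $\rho$-approximation would distinguish an optimum of zero from a strictly positive optimum in polynomial time, so no such $\rho$ is achievable unless $\mathrm{P} = \mathrm{NP}$, which proves the theorem.

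The main technical check is verifying that the forbidding gadget behaves exactly as advertised: when $u$ is placed in an allowed color, the three nodes of the gadget take on only two distinct colors, so the gadget contributes no rainbow penalty and cannot spoil a labeling that was valid for $(H, L)$. Minor additional care is needed when $|L_u| = 1$, which requires adding two forbidding hyperedges at $u$ (one per forbidden color); the case $L_u = \emptyset$ can be assumed not to occur, since such instances are trivially negative in the source problem. Beyond these bookkeeping details, the reduction is polynomial in the size of $(H,L)$ and preserves 3-uniformity throughout, so the argument is complete.
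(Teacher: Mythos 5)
Your reduction is correct, but it takes a genuinely different route from the paper. The paper deliberately does \emph{not} reduce from \textsc{Hypergraph Rainbow-Avoiding Labeling} -- it observes that the list restrictions of that problem are not present in \turh{}, declares that the existing hardness result ``does not directly apply,'' and instead re-runs the whole construction from \textsc{Mon-NAE-3SAT}, building a six-hyperedge Sperner gadget per literal whose only zero-penalty colorings are the two rotations of a rainbow triangle (encoding \emph{true}/\emph{false}), plus one clause hyperedge per clause. Your observation that the list constraints \emph{can} in fact be simulated inside \turh{} -- via the forbidding hyperedge $(u, t_a, t_b)$, which is rainbow exactly when $u$ lands in the third cluster -- closes the gap the authors sidestepped, and yields a shorter, more modular proof that black-boxes the Mirzakhani--Vondr\'ak result. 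The one caveat you should make explicit is that your gadget only expresses arbitrary lists $L_u$ when the palette has exactly three colors: with $k>3$ terminals, $(u,t_a,t_b)$ forbids \emph{every} color outside $\{a,b\}$, so only lists of size at most two (or the full palette) are encodable. Your reduction therefore leans on the source hardness holding for 3-uniform hypergraphs with three colors, which is the setting the paper attributes to~\cite{mirzakhani2015} (its Sperner-triangle gadget uses red/blue/green), but it is worth stating as a hypothesis. The paper's self-contained route buys independence from the exact parameters of the labeling hardness result; yours buys brevity and makes the relationship between the two problems precise.
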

\begin{proof}
	Similar to our hardness result for 5-uniform cardinality-based \hc{}, we begin with an instance of \textsc{Mon-NAE-3SAT}: we are given a set of literals $\{x_1, x_2, \hdots , x_n\}$ (but not their negations), and clauses in conjunctive normal form (e.g., $(x_i \lor x_j \lor x_k)$). The goal is to assign literals to \emph{true} and \emph{false} so that each clause contains at least one true and at least one false variable. 
	
	For the reduction we will introduce three terminal nodes, a set of hyperedges and nodes associated with each literal $x_i$, and hyperedges that encode dependencies among literals due to clauses in the \textsc{Mon-NAE-3SAT} instance. To highlight a relationship with the hypergraph labeling problem~\cite{mirzakhani2015} (as well as a connection to Sperner's Lemma), we associate terminal nodes and their respective clusters with a color in $\{ \text{red}, \text{blue}, \text{green}\}$. The goal is to assign all other nodes to a color (cluster), though unlike the \textsc{Hypergraph Rainbow-Avoiding Labeling} problem, there are no prior restrictions on cluster assignment for non-terminal nodes.
	
	\paragraph{The Sperner Gadget}	Let $t_{r}$, $t_b$, and $t_g$ be the red, blue, and green terminal nodes in our instance of \turh{}. For each literal $x_i$, we introduce three new nodes $\{i_1, i_2, i_3\}$, and six hyperedges: $(i_1, t_g, t_r)$, $(i_1, t_r, i_2)$, $(i_2, t_r, t_b)$, $(i_2, t_b, i_3)$, $(i_3, t_b, t_g)$, $(i_3, t_g, i_1)$. We refer to this as the $x_i$-Sperner-Gadget. Figure~\ref{fig:spernergadget} provides a visualization of this gadget, which highlights its relationship to Sperner's Lemma. 
	\begin{figure}[h]
		\caption{The $x_i$-Sperner-gadget.}	\label{fig:spernergadget}
		\begin{minipage}{0.35\textwidth}
			\includegraphics[width=\linewidth]{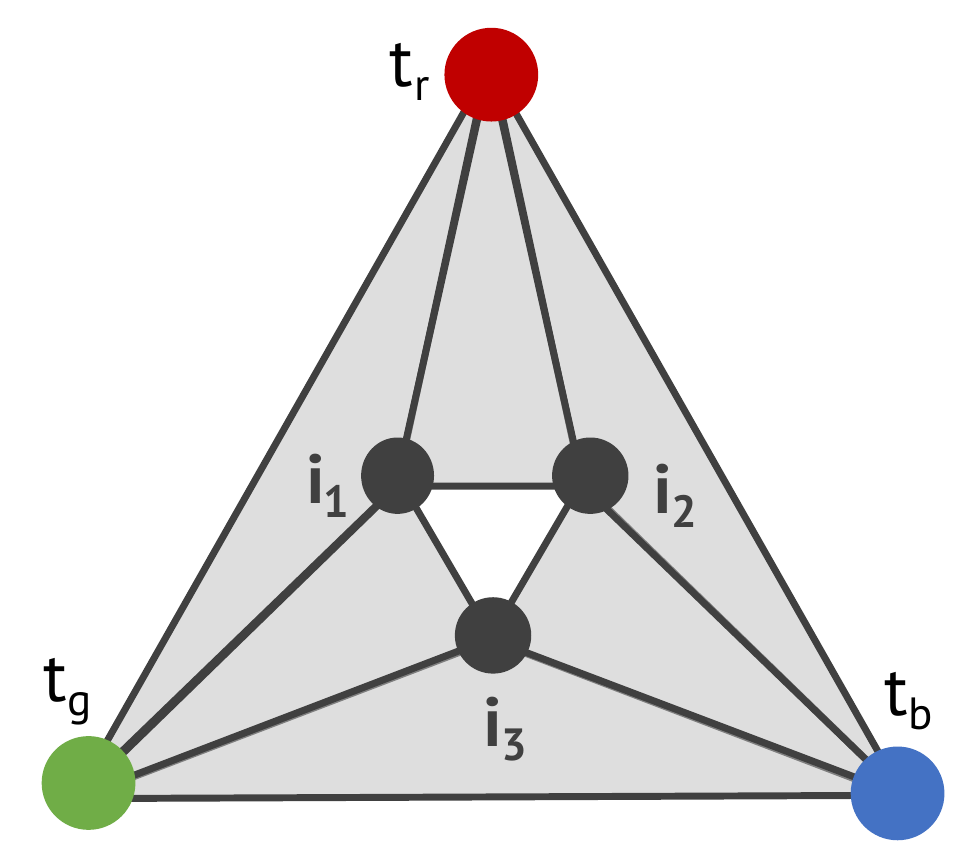}
		\end{minipage}\hfill
		\begin{minipage}{0.65\textwidth}
			For each literal $x_i$ we define a gadget with
			
			\begin{itemize}
				\item three new nodes: $\{i_1, i_2, i_3\}$
				\item six new hyperedges: $(i_1, t_g, t_r)$, $(i_1, t_r, i_2)$, $(i_2, t_r, t_b)$, $(i_2, t_b, i_3)$, $(i_3, t_b, t_g)$, $(i_3, t_g, i_1)$.\\
			\end{itemize}
		
		Zero-penalty node colorings correspond to \emph{true-false} assignments for the $x_i$ literal:
			\begin{itemize}
				\item $x_i$ = \emph{true} $\longleftrightarrow$ $(i_1, i_2, i_3) = (\text{red}, \text{blue}, \text{green})$
				\item $x_i$ = \emph{false} $\longleftrightarrow$  $(i_1, i_2, i_3) =(\text{green},\text{red}, \text{blue})$.
			\end{itemize}
		\end{minipage}
	\end{figure}

	The nodes making up the gadget define a triangulation of the simplex. All of the cells in the triangulation correspond to hyperedges in the instance of \turh{} we construct, \emph{except} the cell delimited by nodes $\{i_1, i_2, i_3\}$. By Sperner's Lemma (or by simply checking nine possible options), one of the cells in the triangulation must be rainbow-colored. Therefore, in order to avoid a non-zero splitting penalty, $\{i_1, i_2, i_3\}$ must be that rainbow-colored cell. There are exactly two ways to do this, and we associate each way with a \emph{true-false} assignment for the literal $x_i$:
	\begin{enumerate}
		\item Assign $i_1 \leftarrow \text{red}$,  $i_2 \leftarrow \text{blue}$, and $i_3 \leftarrow \text{green}$. 
		Call this the \emph{true} color assignment for the $x_i$-gadget. 
		\item Assign $i_1 \leftarrow \text{green}$,  $i_2 \leftarrow \text{red}$, and $i_3 \leftarrow \text{blue}$. 
		Call this the \emph{false} color assignment for the $x_i$-gadget. 
	\end{enumerate}
	Our Sperner gadget construction is the main distinction between our proof and the hardness result of Mirzakhani and Vondr\'{a}k~\cite{mirzakhani2015} for hypergraph labeling. These authors construct a similar gadget related to Sperner triangles but place hard restrictions on which colors can be assigned to each of $i_1$, $i_2$, and $i_3$. Our gadget uses more hyperedges but models the \textsc{Mon-NAE-3SAT} problem in essentially the same way, without placing hard restrictions on to which clusters each non-terminal node can belong.

	\paragraph{Encoding Clauses as Hyperedges} 
	Next, we encode dependencies among literals: for each clause $(x_i \lor x_j \lor x_k)$, we also add the hyperedge $(i_1, j_2, k_3)$. Consider how different \emph{true} or \emph{false} color assignments for the Sperner-gadgets for $x_i$, $x_j$, and $x_k$ will affect the colors of nodes in hyperedge $(i_1, j_2, k_3)$. We list the color of each node for the respective \emph{true} or \emph{false} color assignment of the Sperner-gadget it belongs to:
	\[
	\centering 
	\begin{tabular}{lll}
		node & \emph{true} assignment & \emph{false} assignment\\
		\midrule
		$i_1$ & red & green \\
		$j_2$ & blue& red \\
		$k_3$ & green & blue\\
		\bottomrule
	\end{tabular}
	\]
	
	Noting these possibilities, we can list all possible triplets of \emph{true-false} (T or F) assignments for the $x_i$, $x_j$, and $x_k$ Sperner gadgets, along with the resulting colors of nodes in the hyperedge $(i_1, j_2, k_3)$ (R = red, B = blue, G = green):
	\[
		\centering 
	\begin{tabular}{lllllllll}
		$(x_i, x_j, x_k)$ & TTT & FTT & TFT & FFT & TTF & FTF & TFF & FFF\\
		\midrule
		$(i_1, j_2, k_3)$ & RBG & GBG & RRG & GRG & RBB & GBB & RRG & GRB\,.
	\end{tabular}
	\]
	\paragraph{Problem Equivalence}
	Observe from the above table that, conditioned on there being no rainbow-colored hyperedges in the Sperner gadgets for literals $\{x_i, x_j, x_k\}$, the hyperedge $(i_1, j_2, k_3)$ will be rainbow-colored if and only if all three of these Sperner gadgets have the same \emph{true-false} color assignment. This exactly models the goal of not-all-equal 3SAT: there is a mistake only when the three literals of a clause are all \emph{true} or all \emph{false}. We conclude that there is a solution the \textsc{Mon-NAE-3SAT} instance if and only if there is a zero-penalty solution for the instance of \turh{} we have constructed. Since the former problem is NP-hard, it is also NP-hard to detect zero-penalty solutions for \turh{}, and therefore the optimization version of our problem is hard to approximate to within any multiplicative factor.
\end{proof}

To conclude, there are instances of~\ghmc{} that can be approximated via reduction to \nmc{}, as well as other instances that are NP-hard even to approximate. As was the case for our results on \hstgen{}, our results on \hmc{} lead to several open questions, especially regarding the computational complexity and approximability of the problem under different types of splitting functions. For example, are there other instances of move-based \hmc{} that are inapproximable? More generally, is it possible to obtain new approximation guarantees for other splitting functions (e.g., cluster-based function) by reducing \hmc{} to other other variants of graph multiway cut (e.g., directed multiway cut)? We include a more formal list of open questions such as these in the next section.

\section{Discussion and Open Questions}
\label{sec:discussion}
The first hypergraph generalization of the minimum $s$-$t$ cut problem was introduced nearly fifty years ago by Lawler~\cite{lawler1973}, who gave a polynomial time solution based on a reduction to a graph $s$-$t$ cut problem. 
Since this initial work, however, there has been hardly any consideration of how the complexity or applicability of the problem changes under more general notions of splitting penalties at hyperedges.
Moreover, the existing literature on cut definitions for \emph{other} hypergraph cut problems is both fragmented and sparse. In our work, we have shown that considering broader notions of hypergraph cuts leads to a wealth of new algorithms, complexity results, data modeling techniques, and open questions. In particular, we motivated the cardinality-based splitting function, which is a natural penalty function to use in hypergraph cut applications, and is implicitly related to a number of hypergraph cut problems proposed previously.
Our techniques provide a unified framework for reducing hypergraphs to graphs when solving minimum $s$-$t$ cut problems, and we are the first to identify parameter regimes of the hypergraph $s$-$t$ cut problem that are NP-hard to optimize. Many of these results can also be extended to the hypergraph multiway cut objective; we have identified parameter regimes for the multiway case which come with approximation guarantees, and an example for which obtaining approximations is intractable. All of these contributions are accompanied by specific and well-defined open questions for future work. We end with a summary of open questions for \hstgen{} and \ghmc{}.

\subsubsection*{Open Questions on \hstgen{}}
Sections~\ref{sec:positive} and~\ref{sec:tractability} lead to several open questions related to the tractability of the \hstgen{} problem for different types of splitting functions.
\begin{enumerate}
	\item (Conjecture~\ref{con:submod}) Is it possible to model every submodular splitting function using some hypergraph $s$-$t$ cut gadget? We demonstrated in Section~\ref{sec:submodular} that a large number of submodular splitting functions on 4-node hyperedges can be modeled using a gadget involving several different edge parameters, though we do not know of a strategy that will enable us to model \emph{all} 4-node submodular splitting functions, let alone general $k$-node submodular splitting functions.
	\item (Definition~\ref{def:nesst}, Section~\ref{sec:openstcut}) Is \nesst{} (4-uniform cardinality-based \hc{} with $w_1 = 1$, $w_2 \rightarrow \infty$) solvable in polynomial time, or is it NP-hard? The related \dhstgen{} problem admits a trivial solution, but the complexity of \nesst{} is still unknown. 
	\item (Figure~\ref{fig:tractability}, Section~\ref{sec:openstcut}) More generally, what is the computational complexity for cardinality-based \hc{} in regions of the tractability diagrams that are neither in the submodular regime nor the identified NP-hard regime?
%
%
	\item What other special subclasses of submodular functions can be modeled by hyperedge $s$-$t$ cut gadgets? Most importantly, which subclasses do not ever require introducing an exponential number of auxiliary nodes?
	\item Are there classes of non-submodular splitting functions (other than the trivial~\dhstgen{} problem) for which the \hstgen{} problem is polynomial-time solvable even though it is not graph reducible?
\end{enumerate}

\subsubsection*{Open Questions on \ghmc{}}
\ghmc{} is more challenging than \hstgen{}, since the problem is NP-hard even for the graph version.
Our positive results showed how a large class of move-based \hmc{} problems can be reduced to \nmc{}
(and thus are approximable),
but we also found an instance of the problem that is NP-hard to approximate. We consider five specific open questions around which to extend our current results.
\begin{enumerate}
	\setcounter{enumi}{5}
	\item Theorem~\ref{thm:mb-reduced} shows that when $m_r = m_{r-2}$ (see Lemma~\ref{lem:multisub}), \emph{submodular} move-based splitting functions are NMC-modelable. Are NMC-reducibility results possible for submodular regions corresponding to other choices of $m_r$?
	\item Related to the above, is there any notion of submodularity (i.e., any choice of $m_r$) for which we can prove that submodularity is \emph{necessary} for modeling move-based splitting functions with NMC-gadgets?
	\item What is the computational complexity for move-based 3-uniform \hmc{} with penalty weights $m_1 = 1$ and $m_2 = 0$? The submodular region for 3-uniform move-based \hmc{} is directly related to the submodular region for 5-uniform cardinality-based \hc{}, under the relation $m_1 \leftrightarrow w_1$ and $m_2 \leftrightarrow w_2$. We proved that the latter problem is hard to approximate when $w_1 = 1$ and $w_2 = 0$ (Theorem~\ref{thm:5card}), but complexity of the analogous parameter regime for move-based \hmc{} is unknown. 
	\item Cluster-based \hmc{} with penalties $h_i = i$ for each $i \in \{2,3, \hdots, k\}$ (where $k$ is the number of nodes in a hyperedge) is equivalent to the hypergraph multiway partition problem, which can be approximated to within a factor $\frac{4}{3}$~\cite{ene-hmp-2014}. Are there other regimes of cluster-based \hmc{} for which we can obtain approximation guarantees?
	\item We considered only reductions from \ghmc{} to \nmc{}, but there are other generalizations of graph multiway cut with approximation guarantees~\cite{chekuri2011,GARG200449}. Is is possible to define a notion of a directed multiway cut gadget, and obtain approximation guarantees for variants of \ghmc{} via reduction to directed multiway cut? Similarly, can we obtain approximations for different multiway splitting functions via reduction to an instance of submodular multiway partition?\newline
\end{enumerate}

While all of our work is directly related to the hypergraph minimum $s$-$t$ cut problem, our results draw from a wide variety of techniques in combinatorial optimization, graph theory, machine learning, theoretical computer science, and many other disciplines. We hope that these results, along with these clear directions for future work, will continue to stimulate progress on new theoretical results and applications for general hypergraph cut problems at the intersection of these disciplines.

\bibliographystyle{siamplain}
\bibliography{main-hypercuts}

\begin{thebibliography}{10}

\bibitem{Agarwal2006holearning}
{\sc S.~Agarwal, K.~Branson, and S.~Belongie}, {\em Higher order learning with
  graphs}, in Proceedings of the 23rd International Conference on Machine
  Learning, ICML '06, New York, NY, USA, 2006, ACM, pp.~17--24,
  \url{http://doi.acm.org/10.1145/1143844.1143847}.

\bibitem{Agarwal2005beyond}
{\sc S.~Agarwal, J.~Lim, L.~Zelnik-Manor, P.~Perona, D.~Kriegman, and
  S.~Belongie}, {\em Beyond pairwise clustering}, in Proceedings of the 2005
  IEEE Computer Society Conference on Computer Vision and Pattern Recognition
  (CVPR'05) - Volume 2 - Volume 02, CVPR '05, Washington, DC, USA, 2005, IEEE
  Computer Society, pp.~838--845, \url{http://dx.doi.org/10.1109/CVPR.2005.89}.

\bibitem{akbudak2014simultaneous}
{\sc K.~Akbudak and C.~Aykanat}, {\em Simultaneous input and output matrix
  partitioning for outer-product--parallel sparse matrix-matrix
  multiplication}, SIAM Journal on Scientific Computing, 36 (2014),
  pp.~C568--C590, \url{https://doi.org/10.1137/13092589X}.

\bibitem{akbudak2013cachelocality}
{\sc K.~Akbudak, E.~Kayaaslan, and C.~Aykanat}, {\em Hypergraph partitioning
  based models and methods for exploiting cache locality in sparse
  matrix-vector multiplication}, SIAM Journal on Scientific Computing, 35
  (2013), pp.~C237--C262, \url{https://doi.org/10.1137/100813956}.

\bibitem{albert_barabasi}
{\sc R.~Albert and A.-L. Barab\'asi}, {\em Statistical mechanics of complex
  networks}, Reviews of Modern Physics, 74 (2002), pp.~47--97,
  \url{https://link.aps.org/doi/10.1103/RevModPhys.74.47}.

\bibitem{alpert1995survey}
{\sc C.~J. Alpert and A.~B. Kahng}, {\em Recent directions in netlist
  partitioning: a survey}, Integration, 19 (1995), pp.~1 -- 81,
  \url{http://www.sciencedirect.com/science/article/pii/0167926095000084}.

\bibitem{Andersen:2008:AIG:1347082.1347154}
{\sc R.~Andersen and K.~J. Lang}, {\em An algorithm for improving graph
  partitions}, in Proceedings of the Nineteenth Annual ACM-SIAM Symposium on
  Discrete Algorithms, SODA '08, Philadelphia, PA, USA, 2008, Society for
  Industrial and Applied Mathematics, pp.~651--660,
  \url{http://dl.acm.org/citation.cfm?id=1347082.1347154}.

\bibitem{andrews1998theory}
{\sc G.~E. Andrews}, {\em The theory of partitions}, Cambridge University
  Press, 1998.

\bibitem{Ballard:2016:HPS:3012407.3015144}
{\sc G.~Ballard, A.~Druinsky, N.~Knight, and O.~Schwartz}, {\em Hypergraph
  partitioning for sparse matrix-matrix multiplication}, ACM Transactions on
  Parallel Computing, 3 (2016), pp.~18:1--18:34,
  \url{http://doi.acm.org/10.1145/3015144}.

\bibitem{Bassett-2014-cross}
{\sc D.~S. Bassett, N.~F. Wymbs, M.~A. Porter, P.~J. Mucha, and S.~T. Grafton},
  {\em Cross-linked structure of network evolution}, Chaos: An
  Interdisciplinary Journal of Nonlinear Science, 24 (2014), p.~013112,
  \url{https://doi.org/10.1063/1.4858457}.

\bibitem{Bell1934}
{\sc E.~T. Bell}, {\em Exponential numbers}, The American Mathematical Monthly,
  41 (1934), pp.~411--419,
  \url{https://doi.org/10.1080/00029890.1934.11987615}.

\bibitem{Benson2019three}
{\sc A.~R. Benson}, {\em Three hypergraph eigenvector centralities}, {SIAM}
  Journal on Mathematics of Data Science, 1 (2019), pp.~293--312,
  \url{https://doi.org/10.1137/18m1203031}.

\bibitem{BensonE11221}
{\sc A.~R. Benson, R.~Abebe, M.~T. Schaub, A.~Jadbabaie, and J.~Kleinberg},
  {\em Simplicial closure and higher-order link prediction}, Proceedings of the
  National Academy of Sciences, 115 (2018), pp.~E11221--E11230,
  \url{https://www.pnas.org/content/115/48/E11221}.

\bibitem{BensonGleichLeskovec2016}
{\sc A.~R. Benson, D.~F. Gleich, and J.~Leskovec}, {\em Higher-order
  organization of complex networks}, Science, 353 (2016), pp.~163--166,
  \url{http://science.sciencemag.org/content/353/6295/163}.

\bibitem{bertozzi2016hd}
{\sc A.~L. Bertozzi and A.~Flenner}, {\em Diffuse interface models on graphs
  for classification of high dimensional data}, SIAM Review, 58 (2016),
  pp.~293--328, \url{https://doi.org/10.1137/16M1070426}.

\bibitem{Blum:2001:LLU:645530.757779}
{\sc A.~Blum and S.~Chawla}, {\em Learning from labeled and unlabeled data
  using graph mincuts}, in Proceedings of the Eighteenth International
  Conference on Machine Learning, ICML '01, San Francisco, CA, USA, 2001,
  Morgan Kaufmann Publishers Inc., pp.~19--26,
  \url{http://dl.acm.org/citation.cfm?id=645530.757779}.

\bibitem{boykov2004mincut}
{\sc Y.~{Boykov} and V.~{Kolmogorov}}, {\em An experimental comparison of
  min-cut/max- flow algorithms for energy minimization in vision}, IEEE
  Transactions on Pattern Analysis and Machine Intelligence, 26 (2004),
  pp.~1124--1137, \url{https://doi.org/10.1109/TPAMI.2004.60}.

\bibitem{calinescu2000}
{\sc G.~Calinescu, H.~Karloff, and Y.~Rabani}, {\em An improved approximation
  algorithm for multiway cut}, Journal of Computer and System Sciences, 60
  (2000), pp.~564 -- 574,
  \url{http://www.sciencedirect.com/science/article/pii/S0022000099916872}.

\bibitem{catalyurek2011hypergraph}
{\sc {\"U}.~V. \c{C}ataly{\"u}rek, C.~Aykanat, and E.~Kayaaslan}, {\em
  Hypergraph partitioning-based fill-reducing ordering for symmetric matrices},
  SIAM Journal on Scientific Computing, 33 (2011), pp.~1996--2023,
  \url{https://doi.org/10.1137/090757575}.

\bibitem{catalyurek2010ontwo}
{\sc {\"U}.~V. \c{C}ataly{\"u}rek, C.~Aykanat, and B.~U\c{c}ar}, {\em On
  two-dimensional sparse matrix partitioning: Models, methods, and a recipe},
  SIAM Journal on Scientific Computing, 32 (2010), pp.~656--683,
  \url{https://doi.org/10.1137/080737770}.

\bibitem{Cevahir2010}
{\sc A.~Cevahir, A.~Nukada, and S.~Matsuoka}, {\em High performance conjugate
  gradient solver on multi-{GPU} clusters using hypergraph partitioning},
  Computer Science - Research and Development, 25 (2010), pp.~83--91,
  \url{https://doi.org/10.1007/s00450-010-0112-6}.

\bibitem{Chandrasekara_hyperkcut}
{\sc K.~Chandrasekara, C.~Xu, and X.~Yu}, {\em Hypergraph k-Cut in Randomized
  Polynomial Time}, Society for Industrial and Applied Mathematics, 2018,
  pp.~1426--1438, \url{https://doi.org/10.1137/1.9781611975031.94}.

\bibitem{chekuri2011}
{\sc C.~{Chekuri} and A.~{Ene}}, {\em Approximation algorithms for submodular
  multiway partition}, in 2011 IEEE 52nd Annual Symposium on Foundations of
  Computer Science, Oct 2011, pp.~807--816,
  \url{https://doi.org/10.1109/FOCS.2011.34}.

\bibitem{Chekuri:2016:SFR:2884435.2884492}
{\sc C.~Chekuri and V.~Madan}, {\em Simple and fast rounding algorithms for
  directed and node-weighted multiway cut}, in Proceedings of the
  Twenty-seventh Annual ACM-SIAM Symposium on Discrete Algorithms, SODA '16,
  Philadelphia, PA, USA, 2016, Society for Industrial and Applied Mathematics,
  pp.~797--807, \url{http://dl.acm.org/citation.cfm?id=2884435.2884492}.

\bibitem{cunningham1985minimum}
{\sc W.~H. Cunningham}, {\em Minimum cuts, modular functions, and matroid
  polyhedra}, Networks, 15 (1985), pp.~205--215,
  \url{https://onlinelibrary.wiley.com/doi/abs/10.1002/net.3230150206}.

\bibitem{Dahlhaus94thecomplexity}
{\sc E.~Dahlhaus, D.~S. Johnson, C.~H. Papadimitriou, P.~D. Seymour, and
  M.~Yannakakis}, {\em The complexity of multiterminal cuts}, SIAM Journal on
  Computing, 23 (1994), pp.~864--894,
  \url{https://doi.org/10.1137/S0097539792225297}.

\bibitem{devanur2013approximation}
{\sc N.~R. Devanur, S.~Dughmi, R.~Schwartz, A.~Sharma, and M.~Singh}, {\em On
  the approximation of submodular functions}, 2013,
  \url{https://arxiv.org/abs/1304.4948}.

\bibitem{dinic1970algorithm}
{\sc E.~A. Dinic}, {\em Algorithm for solution of a problem of maximum flow in
  networks with power estimation}, in Soviet Math. Doklady, vol.~11, 1970,
  pp.~1277--1280.

\bibitem{Easley:2010:NCM:1805895}
{\sc D.~Easley and J.~Kleinberg}, {\em Networks, Crowds, and Markets: Reasoning
  About a Highly Connected World}, Cambridge University Press, New York, NY,
  USA, 2010.

\bibitem{edmonds1972theoretical}
{\sc J.~Edmonds and R.~M. Karp}, {\em Theoretical improvements in algorithmic
  efficiency for network flow problems}, Journal of the ACM, 19 (1972),
  pp.~248--264, \url{https://doi.org/10.1145/321694.321699}.

\bibitem{ene-hmp-2014}
{\sc A.~Ene and H.~L. Nguy\stackon[2pt]{\^{e}}{\~}n}, {\em From graph to
  hypergraph multiway partition: Is the single threshold the only route?}, in
  Algorithms - ESA 2014, Berlin, Heidelberg, 2014, Springer Berlin Heidelberg,
  pp.~382--393.

\bibitem{Ene:2013:LDS:2627817.2627840}
{\sc A.~Ene, J.~Vondr\'{a}k, and Y.~Wu}, {\em Local distribution and the
  symmetry gap: Approximability of multiway partitioning problems}, in
  Proceedings of the Twenty-fourth Annual ACM-SIAM Symposium on Discrete
  Algorithms, SODA '13, Philadelphia, PA, USA, 2013, Society for Industrial and
  Applied Mathematics, pp.~306--325,
  \url{http://dl.acm.org/citation.cfm?id=2627817.2627840}.

\bibitem{Estrada2006hyper}
{\sc E.~Estrada and J.~A. Rodr{\'{\i}}guez-Vel{\'{a}}zquez}, {\em Subgraph
  centrality and clustering in complex hyper-networks}, Physica A: Statistical
  Mechanics and its Applications, 364 (2006), pp.~581--594,
  \url{https://doi.org/10.1016/j.physa.2005.12.002}.

\bibitem{even1975network}
{\sc S.~Even and R.~Tarjan}, {\em Network flow and testing graph connectivity},
  SIAM Journal on Computing, 4 (1975), pp.~507--518,
  \url{https://doi.org/10.1137/0204043}.

\bibitem{Ford_fulkerson}
{\sc D.~R. Ford and D.~R. Fulkerson}, {\em Flows in Networks}, Princeton
  University Press, Princeton, NJ, USA, 2010.

\bibitem{fujishige2001realization}
{\sc S.~Fujishige and S.~B. Patkar}, {\em Realization of set functions as cut
  functions of graphs and hypergraphs}, Discrete Mathematics, 226 (2001),
  pp.~199 -- 210,
  \url{http://www.sciencedirect.com/science/article/pii/S0012365X00001643}.

\bibitem{Garey:1990:CIG:574848}
{\sc M.~R. Garey and D.~S. Johnson}, {\em Computers and Intractability; A Guide
  to the Theory of NP-Completeness}, W. H. Freeman \& Co., New York, NY, USA,
  1990.

\bibitem{garg1994multiway}
{\sc N.~Garg, V.~V. Vazirani, and M.~Yannakakis}, {\em Multiway cuts in
  directed and node weighted graphs}, in Automata, Languages and Programming,
  S.~Abiteboul and E.~Shamir, eds., Berlin, Heidelberg, 1994, Springer Berlin
  Heidelberg, pp.~487--498, \url{https://doi.org/10.1007/3-540-58201-0_92}.

\bibitem{GARG200449}
{\sc N.~Garg, V.~V. Vazirani, and M.~Yannakakis}, {\em Multiway cuts in node
  weighted graphs}, Journal of Algorithms, 50 (2004), pp.~49 -- 61,
  \url{http://www.sciencedirect.com/science/article/pii/S0196677403001111}.

\bibitem{Goldberg:1986:NAM:12130.12144}
{\sc A.~V. Goldberg and R.~E. Tarjan}, {\em A new approach to the maximum flow
  problem}, in Proceedings of the Eighteenth Annual ACM Symposium on Theory of
  Computing, STOC '86, New York, NY, USA, 1986, ACM, pp.~136--146,
  \url{http://doi.acm.org/10.1145/12130.12144}.

\bibitem{gong_km1}
{\sc J.~{Gong} and {Sung Kyu Lim}}, {\em Multiway partitioning with pairwise
  movement}, in 1998 IEEE/ACM International Conference on Computer-Aided
  Design. Digest of Technical Papers (IEEE Cat. No.98CB36287), Nov 1998,
  pp.~512--516, \url{https://doi.org/10.1109/ICCAD.1998.144316}.

\bibitem{govindu2005}
{\sc V.~M. {Govindu}}, {\em A tensor decomposition for geometric grouping and
  segmentation}, in 2005 IEEE Computer Society Conference on Computer Vision
  and Pattern Recognition (CVPR'05), vol.~1, June 2005, pp.~1150--1157 vol. 1,
  \url{https://doi.org/10.1109/CVPR.2005.50}.

\bibitem{grotschel1981}
{\sc M.~Gr{\"o}tschel, L.~Lov{\'a}sz, and A.~Schrijver}, {\em The ellipsoid
  method and its consequences in combinatorial optimization}, Combinatorica, 1
  (1981), pp.~169--197, \url{https://doi.org/10.1007/BF02579273}.

\bibitem{grotschel2012geometric}
{\sc M.~Gr{\"o}tschel, L.~Lov{\'a}sz, and A.~Schrijver}, {\em Geometric
  algorithms and combinatorial optimization}, vol.~2, Springer Science \&
  Business Media, 2012.

\bibitem{hadley1995}
{\sc S.~W. Hadley}, {\em Approximation techniques for hypergraph partitioning
  problems}, Discrete Applied Mathematics, 59 (1995), pp.~115 -- 127,
  \url{http://www.sciencedirect.com/science/article/pii/0166218X93E0166V}.

\bibitem{hadley1992efficient}
{\sc S.~W. {Hadley}, B.~L. {Mark}, and A.~{Vannelli}}, {\em An efficient
  eigenvector approach for finding netlist partitions}, IEEE Transactions on
  Computer-Aided Design of Integrated Circuits and Systems, 11 (1992),
  pp.~885--892, \url{https://doi.org/10.1109/43.144852}.

\bibitem{Hein2013}
{\sc M.~Hein, S.~Setzer, L.~Jost, and S.~S. Rangapuram}, {\em The total
  variation on hypergraphs - learning on hypergraphs revisited}, in Proceedings
  of the 26th International Conference on Neural Information Processing Systems
  - Volume 2, NIPS'13, USA, 2013, Curran Associates Inc., pp.~2427--2435,
  \url{http://dl.acm.org/citation.cfm?id=2999792.2999883}.

\bibitem{heuer_et_al:LIPIcs:2018:8936}
{\sc T.~Heuer, P.~Sanders, and S.~Schlag}, {\em {Network Flow-Based Refinement
  for Multilevel Hypergraph Partitioning}}, in 17th International Symposium on
  Experimental Algorithms (SEA 2018), G.~D'Angelo, ed., vol.~103 of Leibniz
  International Proceedings in Informatics (LIPIcs), Dagstuhl, Germany, 2018,
  Schloss Dagstuhl--Leibniz-Zentrum fuer Informatik, pp.~1:1--1:19,
  \url{http://drops.dagstuhl.de/opus/volltexte/2018/8936}.

\bibitem{hu1985multiterminal}
{\sc T.~Hu and K.~Moerder}, {\em Multiterminal flows in a hypergraph}, VLSI
  circuit layout: theory and design,  (1985), pp.~87--93.

\bibitem{Hu:2008:HPD:1390334.1390548}
{\sc T.~Hu, H.~Xiong, W.~Zhou, S.~Y. Sung, and H.~Luo}, {\em Hypergraph
  partitioning for document clustering: A unified clique perspective}, in
  Proceedings of the 31st Annual International ACM SIGIR Conference on Research
  and Development in Information Retrieval, SIGIR '08, New York, NY, USA, 2008,
  ACM, pp.~871--872, \url{http://doi.acm.org/10.1145/1390334.1390548}.

\bibitem{Huang2015}
{\sc J.~Huang, R.~Zhang, and J.~X. Yu}, {\em Scalable hypergraph learning and
  processing}, in Proceedings of the 2015 IEEE International Conference on Data
  Mining (ICDM), ICDM '15, Washington, DC, USA, 2015, IEEE Computer Society,
  pp.~775--780, \url{http://dx.doi.org/10.1109/ICDM.2015.33}.

\bibitem{huang2009videoobject}
{\sc Y.~{Huang}, Q.~{Liu}, and D.~{Metaxas}}, {\em Video object segmentation by
  hypergraph cut}, in 2009 IEEE Conference on Computer Vision and Pattern
  Recognition, June 2009, pp.~1738--1745,
  \url{https://doi.org/10.1109/CVPR.2009.5206795}.

\bibitem{liuwong1998}
{\sc {Huiqun Liu} and D.~F. {Wong}}, {\em Network-flow-based multiway
  partitioning with area and pin constraints}, IEEE Transactions on
  Computer-Aided Design of Integrated Circuits and Systems, 17 (1998),
  pp.~50--59, \url{https://doi.org/10.1109/43.673632}.

\bibitem{ihler1993modeling}
{\sc E.~Ihler, D.~Wagner, and F.~Wagner}, {\em Modeling hypergraphs by graphs
  with the same mincut properties}, Information Processing Letters, 45 (1993),
  pp.~171 -- 175,
  \url{http://www.sciencedirect.com/science/article/pii/002001909390115P}.

\bibitem{Iwata:2001:CSP:502090.502096}
{\sc S.~Iwata, L.~Fleischer, and S.~Fujishige}, {\em A combinatorial strongly
  polynomial algorithm for minimizing submodular functions}, Journal of the
  ACM, 48 (2001), pp.~761--777, \url{http://doi.acm.org/10.1145/502090.502096}.

\bibitem{Iwata:2009:SCA:1496770.1496903}
{\sc S.~Iwata and J.~B. Orlin}, {\em A simple combinatorial algorithm for
  submodular function minimization}, in Proceedings of the Twentieth Annual
  ACM-SIAM Symposium on Discrete Algorithms, SODA '09, Philadelphia, PA, USA,
  2009, Society for Industrial and Applied Mathematics, pp.~1230--1237,
  \url{http://dl.acm.org/citation.cfm?id=1496770.1496903}.

\bibitem{jegelka2011fast}
{\sc S.~Jegelka, H.~Lin, and J.~A. Bilmes}, {\em On fast approximate submodular
  minimization}, in Advances in Neural Information Processing Systems, 2011,
  pp.~460--468,
  \url{http://papers.nips.cc/paper/4348-on-fast-approximate-submodular-minimization.pdf}.

\bibitem{karger2004rounding}
{\sc D.~R. Karger, P.~Klein, C.~Stein, M.~Thorup, and N.~E. Young}, {\em
  Rounding algorithms for a geometric embedding of minimum multiway cut},
  Mathematics of Operations Research, 29 (2004), pp.~436--461,
  \url{https://doi.org/10.1287/moor.1030.0086}.

\bibitem{karypis1999multilevel}
{\sc G.~{Karypis}, R.~{Aggarwal}, V.~{Kumar}, and S.~{Shekhar}}, {\em
  Multilevel hypergraph partitioning: applications in {VLSI} domain}, IEEE
  Transactions on Very Large Scale Integration (VLSI) Systems, 7 (1999),
  pp.~69--79, \url{https://doi.org/10.1109/92.748202}.

\bibitem{Karypis:1999:MKW:309847.309954}
{\sc G.~Karypis and V.~Kumar}, {\em Multilevel k-way hypergraph partitioning},
  in Proceedings of the 36th Annual ACM/IEEE Design Automation Conference, DAC
  '99, New York, NY, USA, 1999, ACM, pp.~343--348,
  \url{http://doi.acm.org/10.1145/309847.309954}.

\bibitem{Kayaaslan:2012:PHS:2340033.2340060}
{\sc E.~Kayaaslan, A.~Pinar, U.~\c{C}ataly\"{u}rek, and C.~Aykanat}, {\em
  Partitioning hypergraphs in scientific computing applications through vertex
  separators on graphs}, SIAM Journal on Scientific Computing, 34 (2012),
  pp.~970--992, \url{http://dx.doi.org/10.1137/100810022}.

\bibitem{kelner_soda}
{\sc J.~A. Kelner, Y.~T. Lee, L.~Orecchia, and A.~Sidford}, {\em An
  Almost-Linear-Time Algorithm for Approximate Max Flow in Undirected Graphs,
  and its Multicommodity Generalizations}, Society for Industrial and Applied
  Mathematics, 2013, pp.~217--226,
  \url{https://doi.org/10.1137/1.9781611973402.16}.

\bibitem{kim2011highcc}
{\sc S.~Kim, S.~Nowozin, P.~Kohli, and C.~D. Yoo}, {\em Higher-order
  correlation clustering for image segmentation}, in Advances in Neural
  Information Processing Systems, 2011, pp.~1530--1538,
  \url{http://papers.nips.cc/paper/4406-higher-order-correlation-clustering-for-image-segmentation.pdf}.

\bibitem{King:1992:FDM:139404.139438}
{\sc V.~King, S.~Rao, and R.~Tarjan}, {\em A faster deterministic maximum flow
  algorithm}, in Proceedings of the Third Annual ACM-SIAM Symposium on Discrete
  Algorithms, SODA '92, Philadelphia, PA, USA, 1992, Society for Industrial and
  Applied Mathematics, pp.~157--164,
  \url{http://dl.acm.org/citation.cfm?id=139404.139438}.

\bibitem{Lambiotte2019higher}
{\sc R.~Lambiotte, M.~Rosvall, and I.~Scholtes}, {\em From networks to optimal
  higher-order models of complex systems}, Nature Physics, 15 (2019),
  pp.~313--320, \url{https://doi.org/10.1038/s41567-019-0459-y}.

\bibitem{LangRao2004}
{\sc K.~Lang and S.~Rao}, {\em A flow-based method for improving the expansion
  or conductance of graph cuts}, in IPCO 2004, 2004, pp.~325--337,
  \url{https://doi.org/10.1007/978-3-540-25960-2_25}.

\bibitem{lawler1973}
{\sc E.~L. Lawler}, {\em Cutsets and partitions of hypergraphs}, Networks, 3
  (1973), pp.~275--285,
  \url{https://onlinelibrary.wiley.com/doi/abs/10.1002/net.3230030306}.

\bibitem{panli2017inhomogeneous}
{\sc P.~Li and O.~Milenkovic}, {\em Inhomogeneous hypergraph clustering with
  applications}, in Advances in Neural Information Processing Systems, 2017,
  pp.~2308--2318,
  \url{http://papers.nips.cc/paper/6825-inhomogeneous-hypergraph-clustering-with-applications.pdf}.

\bibitem{panli_submodular}
{\sc P.~Li and O.~Milenkovic}, {\em Submodular hypergraphs: p-laplacians,
  cheeger inequalities and spectral clustering}, in International Conference on
  Machine Learning, 2018, pp.~3020--3029,
  \url{http://proceedings.mlr.press/v80/li18e/li18e.pdf}.

\bibitem{michoel2012molecular}
{\sc T.~Michoel and B.~Nachtergaele}, {\em Alignment and integration of complex
  networks by hypergraph-based spectral clustering}, Physical Review E, 86
  (2012), p.~056111, \url{https://link.aps.org/doi/10.1103/PhysRevE.86.056111}.

\bibitem{mirzakhani2015}
{\sc M.~Mirzakhani and J.~Vondrák}, {\em Sperner's Colorings, Hypergraph
  Labeling Problems and Fair Division}, Society for Industrial and Applied
  Mathematics, 2015, pp.~873--886,
  \url{https://epubs.siam.org/doi/abs/10.1137/1.9781611973730.60}.

\bibitem{neuhauser2019multi}
{\sc L.~Neuh{\"a}user, A.~Mellor, and R.~Lambiotte}, {\em Multi-body
  interactions and non-linear consensus dynamics on networked systems},
  arXiv:1910.09226,  (2019), \url{https://arxiv.org/abs/1910.09226}.

\bibitem{newman2003sirev}
{\sc M.~E.~J. Newman}, {\em The structure and function of complex networks},
  SIAM Review, 45 (2003), pp.~167--256,
  \url{https://doi.org/10.1137/S003614450342480}.

\bibitem{ochs2012motion}
{\sc P.~{Ochs} and T.~{Brox}}, {\em Higher order motion models and spectral
  clustering}, in 2012 IEEE Conference on Computer Vision and Pattern
  Recognition, June 2012, pp.~614--621,
  \url{https://doi.org/10.1109/CVPR.2012.6247728}.

\bibitem{Okumoto2012}
{\sc K.~Okumoto, T.~Fukunaga, and H.~Nagamochi}, {\em Divide-and-conquer
  algorithms for partitioning hypergraphs and submodular systems},
  Algorithmica, 62 (2012), pp.~787--806,
  \url{https://doi.org/10.1007/s00453-010-9483-0}.

\bibitem{Orecchia:2014:FAL:2634074.2634168}
{\sc L.~Orecchia and Z.~A. Zhu}, {\em Flow-based algorithms for local graph
  clustering}, in Proceedings of the Twenty-fifth Annual ACM-SIAM Symposium on
  Discrete Algorithms, SODA '14, Philadelphia, PA, USA, 2014, Society for
  Industrial and Applied Mathematics, pp.~1267--1286,
  \url{http://dl.acm.org/citation.cfm?id=2634074.2634168}.

\bibitem{Orlin2009}
{\sc J.~B. Orlin}, {\em A faster strongly polynomial time algorithm for
  submodular function minimization}, Mathematical Programming, 118 (2009),
  pp.~237--251, \url{https://doi.org/10.1007/s10107-007-0189-2}.

\bibitem{Orlin:2013:MFO:2488608.2488705}
{\sc J.~B. Orlin}, {\em Max flows in o(nm) time, or better}, in Proceedings of
  the Forty-fifth Annual ACM Symposium on Theory of Computing, STOC '13, New
  York, NY, USA, 2013, ACM, pp.~765--774,
  \url{http://doi.acm.org/10.1145/2488608.2488705}.

\bibitem{peng_approxmaxflow}
{\sc R.~Peng}, {\em Approximate Undirected Maximum Flows in
  $O$($m$polylog($n$)) Time}, Society for Industrial and Applied Mathematics,
  2016, pp.~1862--1867,
  \url{https://epubs.siam.org/doi/abs/10.1137/1.9781611974331.ch130}.

\bibitem{porter2019nonlinearity}
{\sc M.~A. Porter}, {\em Nonlinearity + networks: A 2020 vision},
  arXiv:1911.03805,  (2019), \url{https://arxiv.org/abs/1911.03805}.

\bibitem{purkait2017largehyperedges}
{\sc P.~{Purkait}, T.~{Chin}, A.~{Sadri}, and D.~{Suter}}, {\em Clustering with
  hypergraphs: The case for large hyperedges}, IEEE Transactions on Pattern
  Analysis and Machine Intelligence, 39 (2017), pp.~1697--1711,
  \url{https://doi.org/10.1109/TPAMI.2016.2614980}.

\bibitem{Schaefer:1978:CSP:800133.804350}
{\sc T.~J. Schaefer}, {\em The complexity of satisfiability problems}, in
  Proceedings of the Tenth Annual ACM Symposium on Theory of Computing, STOC
  '78, New York, NY, USA, 1978, ACM, pp.~216--226,
  \url{http://doi.acm.org/10.1145/800133.804350}.

\bibitem{Schrijver:2000:CAM:361537.361552}
{\sc A.~Schrijver}, {\em A combinatorial algorithm minimizing submodular
  functions in strongly polynomial time}, Journal of Combinatorial Theory,
  Series B, 80 (2000), pp.~346--355,
  \url{http://dx.doi.org/10.1006/jctb.2000.1989}.

\bibitem{Selvakkumaran-2006-multiobjective}
{\sc N.~Selvakkumaran and G.~Karypis}, {\em Multiobjective
  hypergraph-partitioning algorithms for cut and maximum subdomain-degree
  minimization}, {IEEE} Transactions on Computer-Aided Design of Integrated
  Circuits and Systems, 25 (2006), pp.~504--517,
  \url{https://doi.org/10.1109/tcad.2005.854637}.

\bibitem{Sharma:2014:MCP:2591796.2591866}
{\sc A.~Sharma and J.~Vondr\'{a}k}, {\em Multiway cut, pairwise realizable
  distributions, and descending thresholds}, in Proceedings of the Forty-sixth
  Annual ACM Symposium on Theory of Computing, STOC '14, New York, NY, USA,
  2014, ACM, pp.~724--733, \url{http://doi.acm.org/10.1145/2591796.2591866}.

\bibitem{sherman2013nearlymax}
{\sc J.~{Sherman}}, {\em Nearly maximum flows in nearly linear time}, in 2013
  IEEE 54th Annual Symposium on Foundations of Computer Science, Oct 2013,
  pp.~263--269, \url{https://doi.org/10.1109/FOCS.2013.36}.

\bibitem{tian2009gene}
{\sc Z.~Tian, T.~Hwang, and R.~Kuang}, {\em {A hypergraph-based learning
  algorithm for classifying gene expression and arrayCGH data with prior
  knowledge}}, Bioinformatics, 25 (2009), pp.~2831--2838,
  \url{https://doi.org/10.1093/bioinformatics/btp467}.

\bibitem{Tian:2009:HLA:1666837.1666865}
{\sc Z.~Tian, T.~Hwang, and R.~Kuang}, {\em A hypergraph-based learning
  algorithm for classifying gene expression and arraycgh data with prior
  knowledge}, Bioinformatics, 25 (2009), pp.~2831--2838,
  \url{http://dx.doi.org/10.1093/bioinformatics/btp467}.

\bibitem{ucar2007revisiting}
{\sc B.~U\c{c}ar and C.~Aykanat}, {\em Revisiting hypergraph models for sparse
  matrix partitioning}, SIAM Review, 49 (2007), pp.~595--603,
  \url{https://doi.org/10.1137/060662459}.

\bibitem{vannelli1990Gomoryhu}
{\sc A.~{Vannelli} and S.~W. {Hadley}}, {\em A gomory-hu cut tree
  representation of a netlist partitioning problem}, IEEE Transactions on
  Circuits and Systems, 37 (1990), pp.~1133--1139,
  \url{https://doi.org/10.1109/31.57601}.

\bibitem{veldt16simple}
{\sc N.~Veldt, D.~Gleich, and M.~Mahoney}, {\em A simple and strongly-local
  flow-based method for cut improvement}, in Proceedings of The 33rd
  International Conference on Machine Learning, vol.~48, New York, New York,
  USA, 20--22 Jun 2016, PMLR, pp.~1938--1947,
  \url{http://proceedings.mlr.press/v48/veldt16.html}.

\bibitem{sun1995absorption}
{\sc {Wern-Jieh Sun} and C.~{Sechen}}, {\em Efficient and effective placement
  for very large circuits}, IEEE Transactions on Computer-Aided Design of
  Integrated Circuits and Systems, 14 (1995), pp.~349--359,
  \url{https://doi.org/10.1109/43.365125}.

\bibitem{yadati2019hypergcn}
{\sc N.~Yadati, M.~Nimishakavi, P.~Yadav, V.~Nitin, A.~Louis, and P.~Talukdar},
  {\em Hypergcn: A new method for training graph convolutional networks on
  hypergraphs}, in Advances in Neural Information Processing Systems, 2019,
  pp.~1509--1520,
  \url{http://papers.nips.cc/paper/8430-hypergcn-a-new-method-for-training-graph-convolutional-networks-on-hypergraphs.pdf}.

\bibitem{Yamaguchi:2016:RSS:2938781.2938842}
{\sc Y.~Yamaguchi}, {\em Realizing symmetric set functions as hypergraph cut
  capacity}, Discrete Math., 339 (2016), pp.~2007--2017,
  \url{http://dx.doi.org/10.1016/j.disc.2016.02.010}.

\bibitem{yangwong1996}
{\sc H.~H. {Yang} and D.~F. {Wong}}, {\em Efficient network flow based min-cut
  balanced partitioning}, IEEE Transactions on Computer-Aided Design of
  Integrated Circuits and Systems, 15 (1996), pp.~1533--1540,
  \url{https://doi.org/10.1109/43.552086}.

\bibitem{yaros2013imbalanced}
{\sc J.~R. {Yaros} and T.~{Imielinski}}, {\em Imbalanced hypergraph
  partitioning and improvements for consensus clustering}, in 2013 IEEE 25th
  International Conference on Tools with Artificial Intelligence, Nov 2013,
  pp.~358--365, \url{https://doi.org/10.1109/ICTAI.2013.61}.

\bibitem{Zhao2005}
{\sc L.~Zhao, H.~Nagamochi, and T.~Ibaraki}, {\em Greedy splitting algorithms
  for approximating multiway partition problems}, Mathematical Programming, 102
  (2005), pp.~167--183, \url{https://doi.org/10.1007/s10107-004-0510-2}.

\bibitem{Zhou2006learning}
{\sc D.~Zhou, J.~Huang, and B.~Sch\"{o}lkopf}, {\em Learning with hypergraphs:
  Clustering, classification, and embedding}, in Proceedings of the 19th
  International Conference on Neural Information Processing Systems, NIPS'06,
  Cambridge, MA, USA, 2006, MIT Press, pp.~1601--1608,
  \url{http://dl.acm.org/citation.cfm?id=2976456.2976657}.

\bibitem{zien1999}
{\sc J.~Y. {Zien}, M.~D.~F. {Schlag}, and P.~K. {Chan}}, {\em Multilevel
  spectral hypergraph partitioning with arbitrary vertex sizes}, IEEE
  Transactions on Computer-Aided Design of Integrated Circuits and Systems, 18
  (1999), pp.~1389--1399, \url{https://doi.org/10.1109/43.784130}.

\end{thebibliography}
\end{document}